\crefname{line}{Line}{Lines}
\crefname{rule}{rule}{rules}
\Crefname{rule}{Rule}{Rules}
\let\ifdraft\iffalse
\newenvironment{mathfig}[1][]{\adjustfigure[#1]\begin{figure}[tp]}{\end{figure}}
\begin{document}

\title{Proving Hypersafety Compositionally}
\ifappendix
  \subtitle{(Extended Version)}
\fi

\author{Emanuele D'Osualdo}
\email{dosualdo@mpi-sws.org}
\orcid{0000-0002-9179-5827}
\affiliation{\institution{MPI-SWS}
  \city{Saarland Informatics Campus}
  \country{Germany}
}
\author{Azadeh Farzan}
\orcid{0000-0001-9005-2653}
\email{azadeh@cs.toronto.edu}
\affiliation{\institution{University of Toronto}
  \city{Toronto}
  \country{Canada}
}
\author{Derek Dreyer}
\orcid{0000-0002-3884-6867}
\email{dreyer@mpi-sws.org}
\affiliation{\institution{MPI-SWS}
  \city{Saarland Informatics Campus}
  \country{Germany}
}

\begin{abstract}
  Hypersafety properties of arity~$n$ are program properties that relate~$n$ traces of a program (or, more generally, traces of~$n$ programs).
  Classic examples include determinism, idempotence, and associativity.
  A number of \emph{relational program logics} have been introduced
  to target this class of properties.
  Their aim is to construct simpler proofs by capitalizing on
  structural similarities between the~$n$ related programs.
We propose unexplored, complementary proof principles
  that establish hyper-triples (i.e.~hypersafety judgments)
  as a unifying compositional building block for proofs,
  and we use them to develop a
  \emph{Logic for Hyper-triple Composition}~(\thelogic),
  which supports forms of proof compositionality
  that were not achievable in previous relational logics.
We prove \thelogic{} sound and apply it to a number of challenging examples.
\end{abstract}

\ifnoappendix
\begin{CCSXML}
<ccs2012>
<concept>
<concept_id>10002944.10011123.10011676</concept_id>
<concept_desc>General and reference~Verification</concept_desc>
<concept_significance>100</concept_significance>
</concept>
<concept>
<concept_id>10011007.10010940.10010992.10010998</concept_id>
<concept_desc>Software and its engineering~Formal methods</concept_desc>
<concept_significance>300</concept_significance>
</concept>
<concept>
<concept_id>10003752.10003790.10011741</concept_id>
<concept_desc>Theory of computation~Hoare logic</concept_desc>
<concept_significance>500</concept_significance>
</concept>
<concept>
<concept_id>10003752.10003790.10002990</concept_id>
<concept_desc>Theory of computation~Logic and verification</concept_desc>
<concept_significance>500</concept_significance>
</concept>
<concept>
<concept_id>10003752.10010124.10010138</concept_id>
<concept_desc>Theory of computation~Program reasoning</concept_desc>
<concept_significance>500</concept_significance>
</concept>
</ccs2012>
\end{CCSXML}

\ccsdesc[100]{General and reference~Verification}
\ccsdesc[300]{Software and its engineering~Formal methods}
\ccsdesc[500]{Theory of computation~Hoare logic}
\ccsdesc[500]{Theory of computation~Logic and verification}
\ccsdesc[500]{Theory of computation~Program reasoning}
\keywords{Hyperproperties, Modularity, Compositionality, Weakest Precondition}
\fi

\maketitle

\section{Introduction}
\label{sec:intro}

Many properties of interest about programs are properties not of individual program traces but rather of multiple program traces. 
For example, stipulating a bound on mean response time over all executions of a program cannot be specified as a property of individual traces, because the acceptability of delays in a trace depends on the magnitude of delays in all other traces. 
\citet{ClarksonS08} formally studied this class of properties, coining the term \emph{hyperproperties}. In this paper, we focus on the verification problem of a class of generalized hyperproperties, which we will refer to as \emph{\pre n-safety properties}:
these are \emph{safety} hyperproperties (i.e.~only concerned with partial correctness) that govern~$n$ executions of \emph{potentially different} programs.
More formally, these \pre n-safety properties have the form
$
  \forall (s_1,s_1') \in \sem{t_1}
  \dots
  \forall (s_n,s_n') \in \sem{t_n}\st
  \phi(s_1,s_1',\dots,s_n,s_n')
$,
where $\sem{t}$ denotes the set of input/output states of the traces of~$t$.
Examples of such properties are\footnote{Intuitively, for instance,
  associativity $ \p{f}(a,\p{f}(b,c)) = \p{f}(\p{f}(a,b),c) $
  compares 4 runs of \p{f}.
}
commutativity ($n=2$), associativity ($n=4$), determinism ($n=2$), noninterference ($n=2$) and transitivity ($n=3$).
Input-output equivalences between two programs can also be proved
as 2-safety properties (e.g.~a compiler optimization preserves I/O behaviour).

One approach to proving \pre n-safety properties is to
obtain from~$t$ a precise mathematical characterization~$R$
of its functionality~$\sem{t}$---i.e.~$t$'s strongest postcondition---and then prove using this mathematical characterization that the desired \pre n-safety property~$\phi$ is satisfied. This approach has a major drawback, however:
functional correctness is in general a much stronger---thus more difficult to prove---property of~$t$ than what proving~$\phi$ requires.

Thus, the trend in research on this problem has been instead towards reasoning \emph{directly} about \pre n-safety, through a number of so-called \emph{relational program logics}, e.g.~\cite{Benton04,BartheCK16,SousaD16,Yang07}. 
These logics move away from the traditional Hoare-triple judgment and introduce \pre n-ary relational Hoare-style judgments, which we will refer to as \emph{hyper-triples}.
In our syntax, a hyper-triple (of arity~$n$) is a judgment of the form
\[
  \J |- {P} {\m[\I 1: t_1,\dots,\I n: t_n]} {Q}
\]
which means
$
  \forall (s_1,s_1') \in \sem{t_1}
  \dots
  \forall (s_n,s_n') \in \sem{t_n}\st
    P(s_1,\dots,s_n) \implies Q(s_1',\dots,s_n')
$.
For example, determinism of~$t$ can be expressed as
$
  \J |- {\p{vars}_{\I1} = \p{vars}_{\I2}} {\m[\I 1: t, \I 2: t]} {\p{vars}_{\I1} = \p{vars}_{\I2}}
$, which asserts that if~$t$ is run from two states which agree on the values
of some relevant variables $\p{vars}$, then, if the two runs terminate,
the output states will agree on the values of~$\p{vars}$ as well.

The key idea underlying relational program logics is that,
by \emph{exploiting the similarity in the program structure}
of $t_1,\ldots,t_n$,
they avoid the need to specify and prove
functional specifications for these programs,
thus reducing the complexity of the work needed
to prove the \pre n-safety property.
In particular, a recurring motif of these logics is the idea of ``lockstep''
proof rules, which decompose all the programs in the hyper-triple at the same time in the same way according to a common structure. For example, a binary lockstep rule for sequential composition might look as follows:
\begin{proofrule}
  \infer*[lab=LockStepSeq]{
  \label{rule:bin-lockstep-seq}
    \J |- {P} {\m[\I 1: t_1, \I 2: t_2]} {P'}
    \\
    \J |- {P'} {\m[\I 1: t_1', \I 2: t_2']} {Q}
  }{
    \J |- {P} {\m[\I 1: (t_1\p;t_1'), \I 2: (t_2\p;t_2')]} {Q}
  }
\end{proofrule}
This lockstep rule is advantageous when the effects of~$t_1$ and~$t_2$
(and of~$t_1'$ and $t_2'$ respectively)
are correlated,
so that the intermediate assertion~$P'$ does not need to reveal
their individual effect, but only the relation between their effects.
In this fortunate case, the rule logically \emph{aligns} the two traces
and propagates simple relational assertions,
like $\p{vars}_{\I1} = \p{vars}_{\I2}$,
through the aligned pair of traces.
For instance, it allows one to prove determinism of~$(t\p;t')$
by proving determinism of~$t$ and of~$t'$.

On their own, however, lockstep rules are too rigid:
in general, such a perfect alignment might not be attainable.
They are also too fragile:
small syntactic differences between the two programs may make them inapplicable.
Much of the effort in previous work has thus focused on overcoming this rigidity of lockstep rules.
In particular, the state-of-the-art solution to the problem is to extend the logic with
rules that replace terms with other, semantically equivalent, terms
in a hyper-triple
(see \eg \cite{BartheCK11,BartheGHS17,SousaD16}).
The replacement can bring back the syntactic similarity needed to resume
a lockstep proof.
The resulting proof strategy,
which we call \emph{enhanced lockstep},
is to first rewrite the terms so that their structure reflects the
desired alignment, and then proceed with a lockstep derivation.

Despite its simplicity, enhanced lockstep reasoning has
produced relational program logics that
have already been deployed successfully in a variety of applications.
For example~\cite{BartheCK16,BartheGHS17} use them for translation validation tasks and even verification of cryptographic algorithms.
\citet{SousaD16} demonstrate how higher-arity hypersafety can encode
correctness requirements of comparators for user-defined data structures.

Here, however, we would like to call attention to
an orthogonal, unexplored, and very limiting dimension of rigidity in lockstep-based relational proofs.
Consider for instance a MapReduce library.
We ought to be able to prove that this library ensures the high-level guarantee of determinism for its operations, \emph{provided that} the user-supplied parameters to the library---\ie the lower-level operations on which it depends---are deterministic, associative and commutative.
Seeing as the latter properties are all hyper-safety properties,
it would be desirable to be able to take the heterogeneous collection of hyper-triples (of different arities!) representing the assumptions on the user-supplied parameters, and produce from them a determinism hyper-triple for the MapReduce implementation.

Unfortunately, existing relational logics do not support hyper-triple composition in this way. It is easy to observe this at a shallow level: these logics do not provide any rules for mixing and matching properties of different arities (i.e.\ using hyper-triples of one arity to prove hyper-triples of another arity). Even in CHL~\cite{SousaD16}, which is parametrized over the arity~$n$ of its hyper-triples, a fixed~$n$ must be used throughout an entire hyper-triple derivation. In order to compose hypersafety proofs as a library like MapReduce would require, we need more expressive ways of composing hyper-triples than what lockstep-style rules provide.
To see the issue concretely, consider the following example.

\begin{example}
\label{ex:commut-op}
\allowdisplaybreaks
Imagine a library provides a binary operation \p{op}$(a,b)$
that has been proven deterministic and commutative,
in the form of two hyper-triples:\footnote{Note that \code{op} can have non-deterministic side-effects on the state
  (determinism is restricted to the return value).
}
\begin{align}
  &\J|-
  {\True}
  {\m[
    \I 1: \code{r_1:=op}(a,b),
    \I 2: \code{r_2:=op}(a,b)
  ]}{
    \p{r}_1(\I 1)=\p{r}_2(\I 2)
  }
  \tag{\textsc{Det}$_{\p{op}}$}
  \label{spec:op-det}
  \\
  &\J|-
  {\True}
  {\m[
    \I 1: \code{r_1:=op}(a,b),
    \I 2: \code{r_2:=op}(b,a)
  ]}{
    \p{r}_1(\I 1)=\p{r}_2(\I 2)
  }
  \tag{\textsc{Comm}$_{\p{op}}$}
  \label{spec:op-comm}
\end{align}
In assertions, we write $\p x(i)$ to refer to the value
of the program variable~$\p x$ in the store at index~$i$.

Assuming \p{op} does not modify \p{x} and \p{y},
we ought to be able to use the hyper-triples above in a proof of the following:
\begin{equation}
  \J|-
  {\True}
  {\m<
    \I 1: \code{x:=op$(a$,$b)$;}
\code{z:=op$($x,x$)$},
    \I 2: \code{x:=op$(a$,$b)$;y:=op$(b$,$a)$;z:=op$($x,y$)$}
  >}{
    \p{z}(\I 1)=\p{z}(\I 2)
  }
  \tag{\textsc{Goal}$_{\p{op}}$}
  \label{ex:commut-op:goal}
\end{equation}
The unfortunate surprise is that there is no
alignment of the commands of the two components,
that can transform the goal into a lockstep proof
that reuses \eqref{spec:op-det} and \eqref{spec:op-comm}.
No matter which pairs of calls of \p{op} we align across the two components,
there will always be one call in component~\I2 that remains unmatched;
the abstract specification of \p{op} is relational and we have no hyper-triple
that specifies the effect of a single call.
Yet, the hyper-triple follows semantically from the assumptions.
\end{example}

We argue that enhanced lockstep reasoning is fundamentally incomplete,
and does not allow compositions of proofs that are needed for modular reasoning.
In this specific example,
even though the arities of the hyper-triples match,
we lack an appropriate way to compose them.
Intuitively, one should be able to prove that
the assignment to \p{x} in \I1 is equivalent to the first assignment in \I2,
\emph{and} to the second assignment in \I2, and put together these two facts
to conclude $\p{x}(\I2)=\p{y}(\I3)$.

In this paper, we ask more generally: what can and should composition of hypersafety proofs look like?
We provide an answer to this question
in the form of a new program logic,
\emph{Logic for Hyper-triple Composition (\thelogic)}.
\thelogic\ enables the reasoning about a given hypersafety property,
formally encoded as a hyper-triple,
to be composed from subproofs of hyper-triples of potentially different arities,
and with richer and more structurally complex means of composition
than are afforded by traditional lockstep reasoning.

The key observation behind the design of \thelogic\ is that
traditional relational rules are of two main varieties:
they either hold generically w.r.t.\ the hyper-terms involved (like the classic Hoare-logic consequence rule),
or they decompose the goal by matching on the structure of the related programs,
(like the lockstep sequence rule).
As we illustrate in \cref{sec:motivation},
\thelogic\ identifies a \emph{third} missing set of rules which decompose the goal
by matching on \emph{the structure of the hyper-term itself}.
That is, by viewing hyper-terms as maps from indices to terms, \thelogic's new rules allow splitting, joining, and re-indexing of these maps.
In so doing, they provide a new dimension of compositionality in hypersafety reasoning, making it possible for the first time to verify a range of interesting examples (such as \cref{ex:commut-op}) that were beyond the reach of prior relational logics.

The remainder of the paper is structured as follows:
\begin{itemize}
\item In \cref{sec:motivation},
  we illustrate the main new reasoning principles of \thelogic,
  by means of a series of examples that cannot be handled
  with previous relational logics.
\item In \cref{sec:prelim,sec:logic},
  we formalize the model and formulate the rules in full generality.\item In \cref{sec:discussion},
  we include a discussion of various features of \thelogic{} and contrast it against the previous state of the art.
\item In \cref{sec:related-work},
  we give a high-level survey of the literature in the problem space of hypersafety verification, and in \cref{sec:conclusion}, we discuss exciting future directions for this problem space.
\end{itemize}
Additional case studies, omitted details and the soundness proofs can be found in
\ifappendix Appendix.
\else the extended version of this paper~\cite{fullversion}.
\fi
 \section{Overview of \thelogic}
\label{sec:motivation}

In this section, we introduce the core new reasoning principles of \thelogic,
illustrating them through a series of simple examples.
None of these examples can be handled with existing relational logics.
More extensive case studies can be found in \appendixref{sec:case-studies}.

\subsection{A Weakest-Precondition-Based Calculus}
\label{sec:overview:wp}

Using \thelogic{},
we can overcome the limitations of pure lockstep reasoning---without relying on functional specifications for subprograms,
nor meta-level reasoning---by embracing hyper-triples as building blocks for proving other hyper-triples.
The first step to achieve this is to move from a calculus
based on triples precondition/hyper-term/postcondition,
to a calculus based on a weakest-precondition~(WP) predicate over hyper-terms.
This allows for a minimal and flexible logic.

More formally,\footnote{The notation used in this section will be made fully precise in \cref{sec:prelim,sec:logic}.}
our judgments are of the form~$ \V R |- P $ where both
$R$ and~$P$ are assertions over hyper-stores,
\ie maps from indices to variable stores.
The judgment asserts that $R \implies P$ holds on every hyper-store.
To state properties of hyper-terms, we introduce the WP assertion
$ \WP{\m{t}}{Q} $
that holds on a hyper-store~$\m{s}$ if running $\m{t}$ from~$\m{s}$
results in a hyper-store~$\m{s}'$ which satisfies the assertion~$Q$.\footnote{In general~$Q$ may also predicate over the return values of~$\m{t}$,
  but we ignore this for simplicity in this section.
}
A hyper-term is ``run'' from a hyper-store by running each of its components
on the store at the corresponding index.
On the indices where the hyper-term is undefined, the store is simply preserved.
Hyper-triples $ \T{P}{\m{t}}{Q} $ are notation for $ P \implies \WP{\m{t}}{Q} $.

Standard unary laws for weakest preconditions hold for our WP as well;
for example, the usual (unary) rule for sequence
$
  \WP {\m[\I1: t]}[\big]{
    \WP {\m[\I1: t']}{Q}
  }
  \lequiv
  \WP {\m*[\I1: (t\p;t')]} {Q}
$
\setlabel{\textsc{wp-seq$_1$}}\label{rule:wp-seq-1}
is valid in \thelogic.
It is easy to provide WP-based versions of lockstep rules.
For example, \ref{rule:bin-lockstep-seq} can be captured~by:
\begin{proofrule}
  \infer*[lab=wp-seq$_2$]{}{
    \WP {\m[\I1: t_1, \I2: t_2]}[\big]{
      \WP {\m[\I1: t_1', \I2: t_2']}{Q}
    }
    \lequiv
    \WP {\m*[\I1: (t_1\p;t_1'), \I2: (t_2\p;t_2')]} {Q}
  }
  \label{rule:wp-seq-2}
\end{proofrule}

The \ref{rule:wp-seq-2} rule helps us start a proof\/\footnote{Here and in following derivations,
  we omit applications of the rule of consequence.}
of the goal \eqref{ex:commut-op:goal} of \cref{ex:commut-op},
by aligning the components at the assignments to \p{z},
as in \cref{fig:example-align}.
The right-hand premise is an instance of \eqref{spec:op-det}.
To derive the left-hand premise, however,
we need genuinely new reasoning principles.

\begin{figure}[tb]
  \adjustfigure[]
  \begin{derivation}
    \infer*[right=\labelstep{ex:comm-op-seq}]{\V|- \WP{\m<
        \I 1: \code{x:=op$(a$,$b)$},
        \I 2: \code{x:=op$(a$,$b)$;y:=op$(b$,$a)$}
      >}*{
      \begin{conj*}
        \p{x}(\I 1)=\p{x}(\I 2)
        \and
        \p{y}(\I 2)=\p{x}(\I 1)
      \end{conj*}
      }
      \and \V
      \begin{conj*}
        \p{x}(\I 1)=\p{x}(\I 2)
        \and
        \p{y}(\I 2)=\p{x}(\I 1)
      \end{conj*}
      |-
      \WP{\m<
        \I 1: \code{z:=op$($x,x$)$},
        \I 2: \code{z:=op$($x,y$)$}
      >}[\big]{
        \p{z}(\I 1)=\p{z}(\I 2)
      }
    }{\V|- \WP{\m<
        \I 1: \code{x:=op$(a$,$b)$;}
              \code{z:=op$($x,x$)$},
        \I 2: \code{x:=op$(a$,$b)$;y:=op$(b$,$a)$;z:=op$($x,y$)$}
      >}[\big]{
        \p{z}(\I 1)=\p{z}(\I 2)
      }
    }
  \end{derivation}
  \caption{A first step in the proof of \cref{ex:commut-op},
    aligning the programs at the assignment to \p{z}.
    The right-hand premise is an instance of the determinism assumption
    \eqref{spec:op-det}.}
  \label{fig:example-align}
\end{figure}

\subsection{Conjunction and Nesting}
\label{sec:overview:conj-nest}

What characterises lockstep rules is that they match
on the structure of the \emph{program terms}
in multiple components simultaneously.
The new rules introduced by \thelogic\ focus instead on the structure of the
\emph{hyper-term} as a map from indices to terms.
The question we ask is: which combinations of WPs are induced by
operations on hyper-terms as maps?
We start with the ones we need
to finish the proof of \cref{ex:commut-op}:
two rules induced by the union operation.
The union~$ \m{t}_1 \m+ \m{t}_2 $ of two hyper-terms $\m{t}_1$ and $\m{t}_2$
is a hyper-term if $\m{t}_1$ and $\m{t}_2$ coincide,
on the indices they have in common.
For example~$ \m[\I1: t_1, \I2: t_2] \m+ \m[\I2: t_2', \I3: t_3] $
is well-defined if $t_2 = t_2'$.
When the two hyper-terms do not have indices in common,
the result is disjoint union, which we write~$ \m{t}_1 \m. \m{t}_2 $.

Now we can ask,
how can we combine two WPs to obtain a WP on the union of their hyper-terms?
\thelogic's answer is the \ref{rule:wp-conj-simpl} rule:
\begin{proofrule}
\infer*[lab=wp-conj$_0$]{
    \idx(Q_1) \subs \supp(\m{t}_1)
    \\
    \idx(Q_2) \subs \supp(\m{t}_2)
  }{
    \V
    \WP{\m{t}_1}{Q_1}
    \land
    \WP{\m{t}_2}{Q_2}
    |-
    \WP{(\m{t}_1 \m+ \m{t}_2)}{Q_1 \land Q_2}
  }
  \label{rule:wp-conj-simpl}
\end{proofrule}
Read from right to left,
the rule states that,
to prove that we are in a hyper-state from which
$\m{t}_1 \m+ \m{t}_2$ takes us to $Q_1 \land Q_2$,
it is sufficient to prove separately that~$\m{t}_1$ takes us to~$Q_1$
and~$\m{t}_2$ to~$Q_2$.
This is sound as long as the postconditions only predicate over indices pertaining to their corresponding hyper-term, as mandated by the two premises.

\thelogic\ proposes a second way of decomposing a WP by seeing
its hyper-term as a \emph{disjoint} union,
with the \ref{rule:wp-nest-simpl} rule:
\begin{proofrule}
\infer*[lab=wp-nest$_0$]{}{
    \WP{\m{t}_1}{\WP{\m{t}_2}{Q}}
    \lequiv
    \WP{(\m{t}_1 \m. \m{t}_2)}{Q}
  }
  \label{rule:wp-nest-simpl}
\end{proofrule}
The rule establishes an equivalence between two nested WP on disjoint
hyper-terms, and a single one on their union.
The idea is that since the semantics of WP preserves the stores at indices
not belonging to the WP's hyper-term,
the inner WP on the left receives as input
on the indices of~$\m{t}_2$ the initial stores,
just like on the right;
and it preserves the outputs of~$\m{t}_1$.
The postcondition~$Q$ is free to predicate on the indices of both hyper-terms,
and applies to the same hyper-stores on both sides of the equivalence.

This rule unlocks a very useful proof pattern.
It is very common to have a goal with many components,
but wanting to temporarily focus the proof on a subset of the components.
The \ref{rule:wp-nest-simpl} rule, applied from right to left,
would correspond to ``shelving'' the components in $\m{t}_2$,
allowing the proof to operate on $\m{t}_1$, \eg in lockstep.
When $\m{t}_2$ becomes relevant again, applying the rule from left to right
would ``unshelve'' the components for the rest of the proof.

\begin{sidewaysfigure}
\[
    \infer*[right=\ref{rule:wp-nest-simpl}]{
    \infer*[Right=\ref{rule:wp-seq-1}]{
    \infer*[Right=\ref{rule:wp-nest-simpl}]{
      \infer*[Right=\ref{rule:wp-const}]{
        \infer*[Right=\ref{rule:wp-conj-simpl}]{
          \infer*{}{
          \V |-
          \WP{\m<
            \I1: \code{x:=op$(a$,$b)$},
            \I2: \code{x:=op$(a$,$b)$}
          >}[\big]{
            \p{x}(\I1) = \p{x}(\I2)
          }
          }
\and
\infer*[Right=\ref{rule:wp-nest-simpl}]{
            \infer*[Right=\ref{rule:wp-cons}]{
              \V |-
              \WP{\m<
                \I1: \code{x:=op$(a$,$b)$},
                \I2: \code{y:=op$(b$,$a)$}
              >}[\big]{
                \p{y}(\I2) = \p{x}(\I1)
              }
            }{\V |-
              \WP{\m[
                \I2: \code{x:=op$(a$,$b)$}
              ]}*{
                \WP{\m<
                  \I1: \code{x:=op$(a$,$b)$},
                  \I2: \code{y:=op$(b$,$a)$}
                >}{
                  \p{y}(\I2) = \p{x}(\I1)
                }
              }
            }
          }{\V |-
            \WP{\m<
              \I1: \code{x:=op$(a$,$b)$},
              \I2: \code{x:=op$(a$,$b)$}
            >}[\big]{
              \WP{\m[
                \I2: \code{y:=op$(b$,$a)$}
              ]}{
                \p{y}(\I2) = \p{x}(\I1)
              }
            }
          }
        }{\V |-
          \WP{\m<
            \I1: \code{x:=op$(a$,$b)$},
            \I2: \code{x:=op$(a$,$b)$}
          >}*{
            \begin{conj*}
              \p{x}(\I1) = \p{x}(\I2) \land{} \\
              \WP{\m[
                \I2: \code{y:=op$(b$,$a)$}
              ]}*{
                \p{y}(\I2) = \p{x}(\I1)
              }
            \end{conj*}
          }
        }
      }{\V |-
        \WP{\m<
          \I1: \code{x:=op$(a$,$b)$},
          \I2: \code{x:=op$(a$,$b)$}
        >}*{
          \WP{\m[
            \I2: \code{y:=op$(b$,$a)$}
          ]}*{
          \begin{conj*}
            \p{x}(\I1) = \p{x}(\I2) \\
            \p{y}(\I2) = \p{x}(\I1)
          \end{conj*}
          }
        }
      }
      }{\V |-
        \WP{\m[\I1: \code{x:=op$(a$,$b)$}]}*{
          \WP{\m[\I2: \code{x:=op$(a$,$b)$}]}*{
            \WP{\m[\I2: \code{y:=op$(b$,$a)$}]}*{
\p{x}(\I1) = \p{x}(\I2) \land
                \p{y}(\I2) = \p{x}(\I1)
}
          }
        }
      }
      }{\V |-
        \WP{\m[\I1: \code{x:=op$(a$,$b)$}]}*{
          \WP{\m[\I2: \code{x:=op$(a$,$b)$;y:=op$(b$,$a)$}]}*{
\p{x}(\I1) = \p{x}(\I2) \land
            \p{y}(\I2) = \p{x}(\I1)
}
        }
      }
    }{\V |-
      \WP{\m<
        \I1: \code{x:=op$(a$,$b)$},
        \I2: \code{x:=op$(a$,$b)$;y:=op$(b$,$a)$}
      >}*{
      \begin{conj*}
        \p{x}(\I1) = \p{x}(\I2) \\
        \p{y}(\I2) = \p{x}(\I1)
      \end{conj*}
      }
    }
  \]
  \captionsetup{width=.7\textwidth}
  \caption{\thelogic{} proof of \cref{ex:commut-op}:
    derivation of left-hand premise of step~\eqref{ex:comm-op-seq}.
    The leaf on the left is discharged by the assumption of determinism
    of \p{op} \eqref{spec:op-det}.
    The one on the right is discharged by the assumption of commutativity
    of \p{op} \eqref{spec:op-comm}.
  }
  \label{fig:commut-op-proof}
\end{sidewaysfigure}

The seemingly simple \ref{rule:wp-nest-simpl} and \ref{rule:wp-conj-simpl},
are powerful enough to close the proof of \cref{ex:commut-op},
if combined with the standard principles of consequence and frame.
\Cref{fig:commut-op-proof} shows the \thelogic\ proof of the
left-hand premise of step~\eqref{ex:comm-op-seq} in \cref{fig:example-align}.
The derivation starts with an application of \ref{rule:wp-nest-simpl}
which allows us to apply the (unary) sequence rule \ref{rule:wp-seq-1}
only on~\I2.
Another application of \ref{rule:wp-nest-simpl} obtains the overall effect
of shelving the assignment to \p{y} in \I2 for later,
aligning the two assignments to \p{x}.
Then, the application of the rule of frame
(see \ref{rule:wp-const} in \cref{fig:wp-struct-laws}),
borrowed from Hoare logic,
is justified by the assumption that~\p{op} does not modify~\p{x}.
This means that if we prove $\p{x}(\I1) = \p{x}(\I2)$ in the postcondition
of the outer WP,
it would hold after running $\m[\I2: \code{y:=op$(b$,$a)$} ]$ too.

Since the postcondition is now a conjunction,
we can apply the \ref{rule:wp-conj-simpl}.
Note that the two hyper-terms in the premises overlap
in \emph{all} the components.
The first premise coincides with our determinism assumption \eqref{spec:op-det}.
For the second premise, we use \ref{rule:wp-nest-simpl} twice:
once to push component~\I1 in the postcondition,
and another time to fuse it with the nested~\I2.
We obtain a top-level WP with a postcondition that
coincides with our commutativity assumption \eqref{spec:op-comm}.
The final application of consequence (\ref{rule:wp-cons})
amounts to say that if the WP in the premise holds on every state,
then it should hold on the state resulting
from running $\m[\I2: \code{x:=op$(a$,$b)$}]$.

The combination of \ref{rule:wp-conj-simpl} and \ref{rule:wp-nest-simpl}
allowed us to carry out the intuitive proof strategy,
which relates both assignments of~\p{x} and~\p{y} in~\I2 with the single
assignment to~\p{x} in~\I1.

\subsection{Projection}
\label{sec:overview:proj}

The rules in the previous section were induced by union of hyper-terms.
We next present a rule that is induced by the operation of
removing components from a hyper-term.
To motivate the rule, we present another simple example
that lockstep reasoning cannot handle.

Imagine we are given
some deterministic terminating\footnote{We will elaborate on the assumption of termination later.}
$t_1$ and $t_2$ satisfying
the specification
\begin{equation}
  \J |- {\pv{x}(\I1)=\pv{x}(\I2)}
        {\m[\I1: t_1\p;t_2, \I2: t_2\p;t_1]}
        {\pv{x}(\I1)=\pv{x}(\I2)}
  \tag{\textsc{Swap}$_{t_1,t_2}$}
  \label{ex:swap-t}
\end{equation}
Given a vector of pairwise distinct program variables~$
 \pv{x} = \p x_1 \dots \p x_n
$ we write $ \pv{x}(i) $ for the vector
$ \p x_1(i) \dots \p x_n(i) $;
the assertion $\pv{x}(\I1)=\pv{x}(\I2)$
states pointwise equality between the two vectors.
The specification above then states that,
relative to some relevant variables~\pv{x},
sequencing the two commands~$t_1$ and $t_2$
in one order or the other generates the same result.
It should be possible to derive, within the logic, that
\[
  \J |- {\pv{x}(\I1)=\pv{x}(\I2)}
        {\m<\I1: t_1\p;t_2\p;t_2, \I2: t_2\p;t_2\p;t_1>}
        {\pv{x}(\I1)=\pv{x}(\I2)}
\]
It is however not possible to prove this by aligning the two programs.
The only alignment that generates a match with the assumption~\eqref{ex:swap-t}
would leave an unmatched~$t_2$ at the start of \I2 and one at the end of \I1.

The natural proof strategy here is to introduce an auxiliary term
$ t_2\p;t_1\p;t_2 $, and show that both components~\I1 and \I2 above
are equivalent to it.
In derivation form, the desired proof looks as follows:
\begin{derivation}
  \infer*[right=\ref{rule:wp-proj-simpl}]{
  \infer*[Right=\ref{rule:wp-conj-simpl}]{
    \J |- {\pv{x}(\I1)=\pv{x}(\I3)}
          {\m<\I1: t_1\p;t_2\p;t_2, \I3: t_2\p;t_1\p;t_2>}
          {\pv{x}(\I1)=\pv{x}(\I3)}
    \and
    \J |- {\pv{x}(\I2)=\pv{x}(\I3)}
          {\m<\I3: t_2\p;t_1\p;t_2, \I2: t_2\p;t_2\p;t_1>}
          {\pv{x}(\I2)=\pv{x}(\I3)}
  }{
    \J |- {\pv{x}(\I1) = \pv{x}(\I3) = \pv{x}(\I2)}
          {\m< \I1: t_1\p;t_2\p;t_2
             , \I3: t_2\p;t_1\p;t_2
             , \I2: t_2\p;t_2\p;t_1
             >}
          {\pv{x}(\I1) = \pv{x}(\I3) = \pv{x}(\I2)}
  }}{
    \J |- {\pv{x}(\I1)=\pv{x}(\I2)}
          {\m<\I1: t_1\p;t_2\p;t_2, \I2: t_2\p;t_2\p;t_1>}
          {\pv{x}(\I1)=\pv{x}(\I2)}
  }
\end{derivation}Read from top to bottom,
the two premises express the equivalence between the original components
and the auxiliary component~\I3.
They are easily proved by the obvious lockstep proof,
using \eqref{ex:swap-t} and the assumptions of determinism (as hyper-triples).

The crucial last step removes component~\I3
from the hyper-triple,
an operation that we call ``projecting out''.
Note that this step is not just a by-product of the principle of consequence:
the precondition of the goal does not imply the precondition of the premise,
which additionally constrains the store at~\I3.
To justify such a step, \thelogic\ provides the following \emph{projection} rule:
\begin{proofrule}
  \infer*[lab=wp-proj$_i$, right=$ \proj(t) $]{
    \V P |- \WP{(\m{t}'\m.\m[i: t])}{Q}
  }{
    \V \P i.P |- \WP {\m{t}'} {\P i. Q}
  }
  \label{rule:wp-proj-simpl}
\end{proofrule}
The rule introduces a novelty in the assertion language,
the projection modality~$\P i.P$.
A hyper-store~$\m{s}$ satisfies~$ \P i.P $
if there is some store~$s'$ that can be placed at~$i$ so that the resulting
hyper-store~$\m{s}\m[i: s']$ satisfies~$P$.
Intuitively, $ \P i.P $ removes the constraints imposed by $P$
on the store at index~$i$,
while keeping the explicit and implied constraints
imposed on the other components.
For example,
$
  \P\I2.(\p{x}(\I1) < \p{x}(\I2) < \p{x}(\I3))
$ is logically equivalent to $
  \p{x}(\I1) + 1 < \p{x}(\I3)
$:
if an hyper-store satisfies $\p{x}(\I1) + 1 < \p{x}(\I3)$
we can replace the store at~\I2 with one satisfying
$\p{x}(\I2)= \p{x}(\I1) + 1$ and obtain a hyper-store
satisfying $\p{x}(\I1) < \p{x}(\I2) < \p{x}(\I3)$.

Since the hyper-term in the conclusion of the rule has had its
$i$-th component removed, both the precondition and postcondition of
the conclusion are subject to the $\Pi_i$ projection.
The rule has an important side condition, however:
$\proj(t)$, which requires~$t$ to have terminating traces from any input store
(in \cref{sec:logic} we relax this condition).
In our example this is discharged by the assumption that~$t_1$ and~$t_2$ are terminating.
Perhaps surprisingly, the rule without side conditions is unsound:
since the semantics of WP only constrains the terminating runs,
if~$t$ does not terminate, $Q$ can be $\False$ in the premise,
which would result in an invalid triple in the conclusion.

The combination of projection and conjunction that we used in the example
corresponds roughly to using transitivity in a refinement-based proof.
It is striking that previous relational logics for hypersafety do not admit
this principle.
A notable exception is RHL~\cite{Benton04} which has a ``transitivity'' rule
for \pre2-properties that could handle our example.
The soundness of the rule relies however on a semantics of triples
that goes beyond hypersafety and constrains the non-terminating behaviour too,
in a way that is incompatible with some of \thelogic's rules
(\eg \ref{rule:wp-nest-simpl}).
We discuss the trade-off involved in \cref{sec:discuss-design}.

\subsection{Reindexing and Indirect-Style Hyper-triples}
\label{sec:overview:reindex}

A third way in which lockstep reasoning is unable to
reuse hyper-triples,
is the case of what we call ``indirect-style'' specifications.
Consider the case of idempotence,
that is, that~$t$ and $t\p;t$ achieve the same effect.
A naive encoding of idempotence of~$t$
with respect to some relevant variables \pv{x},
is:\begin{equation}
  \J |- { \pv{x}(\I1) = \pv{x}(\I2) }
        {\m[\I 1: t, \I 2: (t\p;t)]}
        { \pv{x}(\I1)=\pv{x}(\I2) }
  \tag{\textsc{IdemSeq}$_{t}$}
  \label{spec:t-idemp-seq}
\end{equation}

In \cite{SousaD16}, however, idempotence is specified instead as follows:
\begin{equation}
  \J |- { \pv{x}(\I2)=\vec{v} }
        {\m[\I 1: t, \I 2: t]}
        { \pv{x}(\I1)=\vec{v} \implies \pv{x}(\I2)=\vec{v} }
  \tag{\textsc{Idem}$_{t}$}
  \label{spec:t-idemp}
\end{equation}The input of~\I1 is unconstrained; the input of~\I2 is assumed to be~$\vec{v}$;
after a run of~$t$ in each component, the postcondition asserts that,
if the input of~\I2 happened to coincide with the output of \I1,
then the output of~\I2 is the same as the output of \I1.
The specification uses a slightly unintuitive pattern,
which we call ``indirect-style'':
the output of~\I1 is fed as input to~\I2 indirectly,
through an implication in the postcondition.

Let us compare the two specifications.
The hyper-triple \eqref{spec:t-idemp-seq} is certainly easier to interpret;
it also is in a form that seems readily applicable:
one would expect to use idempotence precisely to relate the two components
of the hyper-triple.
In contrast, the hyper-triple \eqref{spec:t-idemp} is not directly applicable
in lockstep proofs that need to relate~$t$ and~$t\p;t$.
It has the advantage, however, of making the two components syntactically coincide, which can be exploited by a tool to produce a simple lockstep idempotency proof, if one exists.
On the other hand, attacking directly a proof of \eqref{spec:t-idemp-seq}
might require one to characterize very precisely the effect of the first run of~$t$
so that the second one can be shown equivalent to a \code{skip}.

A difference which might not be immediate is that
\eqref{spec:t-idemp-seq} holds in fact for strictly fewer programs than
\eqref{spec:t-idemp}.
For an example, consider \code{if x=0 then (if $\;*\;$ then x:=1 else x:=2)}:
the result of a run from a store with $\p{x}=0$ would output a store with
a random value~$\p{x} \in \set{1,2}$;
executing the same program again would preserve the store,
therefore the program satisfies~\eqref{spec:t-idemp}.
It does not however satisfy~\eqref{spec:t-idemp-seq},
as the programs at~\I{1} and~\I{2}
may store different values in \p{x}.

The two encodings of idempotence are, however, equivalent for \emph{deterministic} programs.
Given the tension between the two---one is easier to prove but harder to use and vice versa---we would like to be able to derive one from the other,
\emph{within} the logic.
In CHL it is impossible to use \eqref{spec:t-idemp} and the hyper-triple encoding determinism of~$t$ to prove \eqref{spec:t-idemp-seq}.
This is, again, because the logical alignment proof strategy is fundamentally limited.

In order to support these derivations within the logic, \thelogic\ provides one final way of manipulating the hyper-term structure:
\emph{reindexing}, \ie the ability to substitute indices for indices in hyper-triples.

To illustrate the idea,
let us derive \eqref{spec:t-idemp-seq} from \eqref{spec:t-idemp} plus determinism in \thelogic.
There are two main obstacles to overcome:
the first is that \eqref{spec:t-idemp-seq} has three occurrences of~$t$
but \eqref{spec:t-idemp} has only two;
the second, and more relevant, is that the two runs of~$t$
in \eqref{spec:t-idemp-seq} that correspond to the two runs in
\eqref{spec:t-idemp} are \emph{sequenced} in the same component \I2.

The first step of the proof can be dealt with using the rules we have already introduced.
We want to appeal to determinism to establish that,
in $\m[\I 1: t, \I 2: (t\p;t)]$,
whatever we can say about the output of the first run of~$t$ in~\I2,
will hold for the output at \I1 too,
and reduce the goal to proving a triple only involving component \I2.
This can be achieved with the help of \ref{rule:wp-conj-simpl}:
\begin{derivation}[\normalsize]
  \infer*[right=\labelstep{ex:idemp:step-seq}\qquad]{
  \infer*[Right=\labelstep{ex:idemp:step-decouple}]{
  \infer*[Right=\ref{rule:wp-conj-simpl}]{
    \V
    \pv{x}(\I1) = \pv{x}(\I2)
    |-
    \WP{\m[\I 1: t, \I 2: t]}{ \pv{x}(\I1)=\pv{x}(\I2) }
    \and \V |-
    \WP{\m[\I 2: t]}{
        \E \vec{v}.
          \pv{x}(\I2) = \vec{v}
          \land
          \WP{\m[\I 2: t]}{\pv{x}(\I2) = \vec{v}}
    }
  }{
    \V
    \pv{x}(\I1) = \pv{x}(\I2)
    |-
    \WP{\m[\I 1: t, \I 2: t]}*{
      \pv{x}(\I1)=\pv{x}(\I2)
      \land
      \E \vec{v}.(
        \pv{x}(\I2) = \vec{v}
        \land
        \WP{\m[\I 2: t]}{\pv{x}(\I2) = \vec{v}}
      )
    }
  }}{
    \V
    \pv{x}(\I1) = \pv{x}(\I2)
    |-
    \WP{\m[\I 1: t, \I 2: t]}{
      \WP{\m[\I 2: t]}
         { \pv{x}(\I1)=\pv{x}(\I2) }
    }
  }}{
    \V
    \pv{x}(\I1) = \pv{x}(\I2)
    |-
    \WP{\m[\I 1: t, \I 2: (t\p;t)]}
       { \pv{x}(\I1)=\pv{x}(\I2) }
  }
\end{derivation}
The derivation starts just like \cref{fig:commut-op-proof}:
component~\I2 is decomposed at the sequential composition,
with the help of \ref{rule:wp-nest-simpl} and \ref{rule:wp-seq-1}.
Then, again similarly to \cref{fig:commut-op-proof},
consequence and frame are used to decouple the assertions on~\I1 and \I2
in the postcondition:
we can prove (by determinism) that~$\p{x}(\I1)=\p{x}(\I2)$ holds
already after the first runs of~$t$ in \I1 and \I2;
and separately we can prove the second run in \I2 will preserve
whatever value is stored in $\p{x}(\I2)$.
The \ref{rule:wp-conj-simpl} rule allows us to discharge the first conjunct
of the postcondition by appealing to the assumption of determinism of~$t$.

We are now left with our goal being the right-hand premise,
which has only two occurrences of~$t$,
although they are both at~\I2.
This identifies the crux of the problem:
our new goal involves two \emph{sequenced} runs of the same term $t$,
but the indirect-style hyper-triple we take as our assumption
(\ie \eqref{spec:t-idemp}) relates two runs in separate indices of a hyper-triple.

To close the proof we need two new rules of \thelogic\ that
deal precisely with re-assigning indices to components in a WP.
\thelogic\ uses the \emph{reindexing} notation $P\isub{j->i}$
to denote the assertion that is true on a hyper-store~$\m{s}$
if~$P$ is true on the hyper-store $ \m{s}\m[j: \m{s}(i)] $.
For a more syntactic intuition, $P\isub{j->i}$ is the
assertion~$P$ where
the occurrences of the index~$j$ are replaced with~$i$.
\thelogic\ proposes two main rules (simplified here slightly for clarity)
explaining the interaction between reindexing and WP:
\begin{proofrules}
  \infer*[lab=wp-idx-post$_0$,
    right={\footnotesize$j \notin \supp(\m{t})$}
  ]{
    \V |- \WP{\m{t}}{Q}
  }{
    \V |- \WP{\m{t}}{Q\isub{j->i}}
  }
  \label{rule:wp-idx-post-simpl}

  \infer*[lab=wp-idx-swap$_0$,
    right={\footnotesize$i \notin \idx(Q)$}
  ]{ }{
    \V
    \bigl(\WP{\m[j: t]}{Q}\bigr)\isub{j->i}
    |-
    \WP{\m[i: t]}{Q\isub{j->i}}
  }
  \label{rule:wp-idx-swap-simpl}
\end{proofrules}
\Cref{rule:wp-idx-post-simpl} states that it is possible
to substitute~$j$ for~$i$ in the postcondition of a WP
if $\m{t}$ does not have a component at index~$j$.
\Cref{rule:wp-idx-swap-simpl} shows the effect of applying the substitution
to a WP that does contain a component at~$j$.
(We explain the need for the side conditions on these rules in \cref{sec:logic}.)

With these two rules we can now complete the derivation:
\begin{derivation}[\normalsize]
\infer*[Right=\labelstep{ex:idemp-step-apply23}]{
\infer*[Right=\ref{rule:wp-idx-swap-simpl}]{
\infer*[Right=\labelstep{ex:idemp-step-ex2v}]{
\infer*[Right=\labelstep{ex:idemp-step-prop23}]{
\infer*[Right=\ref{rule:wp-idx-post-simpl}]{
\infer*[Right=\labelstep{ex:idemp-step-massage}]{
  \V
  \pv{x}(\I3)=\vec{v}
  |-
  \WP{\m[\I 2: t, \I 3: t]}
     { \pv{x}(\I2)=\vec{v} \implies \pv{x}(\I3)=\vec{v} }
}{
  \V |-
  \WP{\m[\I 2: t]}{
    \A \vec{v}.
      (\pv{x}(\I2) = \vec{v}
      \land
      \pv{x}(\I3) = \vec{v})
      \implies
      \WP{\m[\I 3: t]}{\pv{x}(\I3) = \vec{v}}
  }
}}{
  \V |-
  \WP{\m[\I 2: t]}*{
    \bigl(
      \A \vec{v}.
        (\pv{x}(\I2) = \vec{v}
        \land
        \pv{x}(\I3) = \vec{v})
        \implies
        \WP{\m[\I 3: t]}{\pv{x}(\I3) = \vec{v}}
    \bigr)
    \isub{\I3->\I2}
  }
}}{
  \V |-
  \WP{\m[\I 2: t]}*{
      \A \vec{v}.
        \pv{x}(\I2) = \vec{v}
        \implies
    \bigl(
        \WP{\m[\I 3: t]}{\pv{x}(\I3) = \vec{v}}
    \bigr)
    \isub{\I3->\I2}
  }
}}{
  \V |-
  \WP{\m[\I 2: t]}*{
    \E \vec{v}.
    \pv{x}(\I2) = \vec{v}
    \land
      \bigl(\WP{\m[\I 3: t]}{\pv{x}(\I3) = \vec{v}}\bigr)
      \isub{\I3->\I2}
  }
}}{
  \V |-
  \WP{\m[\I 2: t]}*{
    \E \vec{v}.
    \pv{x}(\I2) = \vec{v}
    \land
      \WP{\m[\I 2: t]}{(\pv{x}(\I3) = \vec{v})\isub{\I3->\I2}}
  }
}}{
  \V |-
  \WP{\m[\I 2: t]}*{
    \E \vec{v}.
    \pv{x}(\I2) = \vec{v}
    \land
      \WP{\m[\I 2: t]}{\pv{x}(\I2) = \vec{v}}
  }
}
\end{derivation}
We start at the top with \eqref{spec:t-idemp};
we renamed~\I1 to~\I2 and \I2 to \I3,
which is possible as
\thelogic's specifications are closed under unambiguous renaming of indices.\footnote{Formally this is handled by rules \ref{rule:reindex} and \ref{rule:wp-idx},
  presented in \cref{sec:logic}.}
Step~\eqref{ex:idemp-step-massage} uses \ref{rule:wp-nest-simpl}
and some simple logical manipulations
(mainly commutation laws between implication and WP)
to put the postcondition in a suitable form.
An application of \ref{rule:wp-idx-post-simpl} applies the $\isub{\I3->\I2}$
substitution to the postcondition,
which is propagated using consequence in step~\eqref{ex:idemp-step-prop23},
until it reaches the inner WP.
Step~\eqref{ex:idemp-step-ex2v} uses consequence and the tautology
$ \E \vec{v}.\pv{x}(\I2) = \vec{v} $ to apply the universal quantification
in the postcondition.
We then use \ref{rule:wp-idx-swap-simpl} to propagate the reindexing
to the postcondition of the inner WP.
Step \eqref{ex:idemp-step-apply23} finally applies the reindexing to the
postcondition of the inner WP obtaining the desired judgment.

\subsection{Loop Invariants}
\label{sec:overview:loops}

The treatment of loops is a thorny issue in relational logics.
Loops are where all the shortcomings of the alignment proof strategy
get amplified.
Consider the following simple consequence of idempotence and determinism of~$t$
(picking \p{i} to be a fresh variable):
\begin{equation}
  \J |- { \pv{x}(\I1) = \pv{x}(\I2) }
        {\m[\I 1: t, \I 2: (t\code{;while i>0 do ($t$;i--)})]}
        { \pv{x}(\I1)=\pv{x}(\I2) }
  \tag{\textsc{IdemLoop}$_{t}$}
  \label{spec:t-idemp-loop}
\end{equation}
The idea of lockstep proofs underlies all the relational rules for loops in the literature: they work well when all the components in a hyper-triple are loops
which can be aligned at the boundaries of their bodies.
Since the relational specifications of idempotence and determinism of~$t$
do not give us direct information about single runs of~$t$
(only of \emph{pairs} of runs)
this strategy immediately falls apart
when attempting a proof of \eqref{spec:t-idemp-loop}.
We only have one component with a loop,
containing runs of~$t$ which we fail to align to any other runs of~$t$ in the other component.

Let us inspect the issue more closely, by sketching a derivation:
\begin{equation*}
\infer{
  \J |- { \pv{x}(\I1) = \pv{x}(\I2) }
        {\m[\I 1: t, \I 2: t]}
        { P }
  \\
  \infer*{
  \J |- { P }
        {\m[\I 2: (\code{$t$;i--})]}
        { P }
  }{
  \J |- { P }
        {\m[\I 2: (\code{while i>0 do ($t$;i--)})]}
        { P }
  }
}{
  \J |- { \pv{x}(\I1) = \pv{x}(\I2) }
        {\m[\I 1: t, \I 2: (t\code{;while i>0 do ($t$;i--)})]}
        { \pv{x}(\I1)=\pv{x}(\I2) }
}
\end{equation*}
We start by considering the first runs of~$t$ in \I1 and \I2 in lockstep,
and separately the loop, from the resulting state.
As in a standard proof, proving the loop boils down to
finding a suitable loop invariant.
In all the previous relational logics the most precise information
we can obtain for the left-hand premise, reusing the specifications of~$t$,
is $P=\bigl(\pv{x}(\I1) = \pv{x}(\I2)\bigr)$.
Unfortunately, the right-hand premise cannot be proven with this~$P$,
since~$t$ does not in general preserve it:
the assertion is too weak in that it does not record the fact that
a run of~$t$ already happened before the loop starts.
Short of this information, the verification of the loop needs to consider the
case where the body executes~$t$ for the first time ever, which in general
would not preserve the equivalence
(as there is no corresponding run of~$t$ in \I1).

This is where the extended expressivity of \thelogic\ is crucial
to obtain a modular and relational proof.
In \thelogic\ we can set
$
  P =
    \E \vec{v}.
    \bigl(
      \pv{x}(\I1) = \pv{x}(\I2) = \vec{v}
      \land
      \WP{\m[\I 2: t]}{\pv{x}(\I2) = \vec{v}}
    \bigr)
$
which correctly records the fact that~$t$ has run already,
by asserting that a further run of~$t$ at~\I2 would not modify~$\pv{x}$.
The left-hand premise can be established with this~$P$
by reusing the derivation of the previous section---it is a direct consequence of the premise
of step~\eqref{ex:idemp:step-decouple}.
The verification of the body of the loop can now use the WP in~$P$
to justify why~$t$ has no effect on~$\pv{x}$.
The full derivation,
shown for a variant of this example in \appendixref{sec:ex-loop-hoisting},
crucially relies on \thelogic's ability to handle nested WPs.

Through this series of examples we introduced the main novel
reasoning principles of \thelogic,
and showed how they enable compositional, modular proofs,
embracing hyper-triples as the fundamental building block of
hypersafety proofs.
 \section{Preliminaries}
\label{sec:prelim}

In this section we fix notation,
and define assertions over hyper-stores and their basic laws.

\begin{definition}[Finite maps]
\label{def:finmaps}
  Given a type $A$, we define the type $\opt{A} \is A \dunion \set{\bot}$.
Given a function $ f \from A \to \opt{B} $ we define $\supp(f) \is \set{a : A | \isdef{f(a)} }$. We say $f$ is a \emph{finite map from $A$ to $B$},
  written $f \from A \pto B$,
  if $f \from A \to \opt{B}$ and $\supp(f)$ is finite.

  We write $\m[a_1: b_1, \dots, a_n: b_n]$ for the finite map associating each $a_i$ to $b_i$ (and everything else to~$\bot$).
  Similarly to set comprehensions, we use the notation
  $ \m[a: b | \phi(a,b)] $,
  e.g.~$\m[i: j | i\in \Nat, i = 2j \leq 4 ] = \m[0: 0, 2: 1, 4: 2]$.
Given $f,g \from A \pto B$ such that
  $ \forall x \in \supp(f) \inters \supp(g) \st f(x) = g(x) $,
  the union of~$f$ and~$g$, written
  $ f \m+ g \from A \pto B $ is defined as:
  \[
    (f \m+ g)(x) =
      \begin{cases}
        f(x) \CASE x \in \supp(f) \setminus \supp(g) \\
        g(x) \CASE x \in \supp(g) \\
        \bot \OTHERWISE
      \end{cases}
  \]
  We leave $f \m+ g$ undefined if $f(x) \neq g(x) $ for some
  $ x \in \supp(f) \inters \supp(g)$.
The \emph{disjoint union} of~$f$ and~$g$,
  written $ f \m. g  \from A \pto B $,
  is defined as $f \m. g \is f \m+ g$ if ${\supp(f) \inters \supp(g) = \emptyset}$,
  undefined otherwise.
For any function $f \from A \to B$, we define
  $f\m[a: b] \is \fun x.\kw{if}\ x=a\ \kw{then}\ b\ \kw{else}\ f(x)$.
\end{definition}

Both operations $(\m+)$ and $(\m.)$ on maps are commutative and associative
(where defined) and have the empty map $\m[]$ as neutral element.
Indices are natural numbers $ i\in \Idx \is \Nat $.
We make extensive use of finite maps from~$\Idx$:
if $A$ is the type of \emph{things} then
$\Idx \to \opt{A}$ is the type of \emph{hyper-things}.
As a notational convention,
if a meta-variable $a$ ranges over~$A$
we use $\m{a}$ to range over $\Idx \to \opt{A}$.
For a set of indices~$I\subseteq\Idx$ and some $x \in A$,
we write $\const{x}{I}$ for $ \m[i: x | i \in I] $.
Given $I\subseteq \Idx$,
we lift a relation~$\sim$ on~$A$ to the relation~$\sim_I$ on $\Idx \pto A$ as follows;
for $\m{a}_1,\m{a}_2 \from \Idx \pto A$,
$\m{a}_1 \sim_I \m{a}_2$ holds when
$I \subseteq \supp(\m{a}_1),\supp(\m{a}_2)$ and
$\forall i\in\Idx\st \m{a}_1(i) \sim \m{a}_2(i)$.
When $a_0 \in A$,
we write $\m{a} \sim_I a_0$ to mean
$\forall i\in\Idx\st \m{a}(i) \sim a_0$.

\begin{definition}[Reindexing]
  A function $\pi\from\Idx\to\Idx$ is called a \emph{reindexing}.
  Given ${\m{a} \from \Idx \to A}$,
  we write $\m{a}\isub*{\pi}$ to apply the reindexing $\pi$ to the map:
  $\m{a}\isub*{\pi} \is \m[i: \m{a}(\pi(i)) | i \in \Idx]$.
We write $\m{a}\isub{i_1->j_1,,i_n->j_n}$ to denote $\m{a}\isub*{\pi}$ where
  $\pi(i_k) = j_k$ and $\pi(i)=i$
  if~$i \in\Idx\setminus\set{i_1,\dots,i_n}$.
\end{definition}

\subsection{Hyper-programs}
\label{sec:programs}

\paragraph{Syntax}
We use a minimal untyped imperative language to formalize our ideas.
We will assume an enumerable set of \emph{program variables}~$\p{x} \in \PVar$.
The set of \emph{values}  $v \in \Val \is \Int$ is the set of integers, for simplicity.
Booleans are represented using~$0$ for false, and any other integer for true.
\begin{grammar}
\Term \ni t,g,e \is
      v | \code{x} | {*}
    | e\mathbin{\p{+}}e | e\mathbin{\p{-}}e | e \leq e | \dots
    \\&\;
    | \code{skip}
    | \code{x:=$e$}
    | t\p;t
| \code{if$\;g\;$then$\;t\;$else$\;t\;$}
| \code{while$\;g\;$do$\;t\;$}
\end{grammar}
Every term evaluates to some value and mutates a first-order store.
We use the meta-variables~$g$ for terms that are meant to evaluate
to a boolean (i.e.~guards), and~$e$ for terms that are meant to evaluate
to an integer (i.e.~expressions).
The~$*$ expression chooses some integer non-deterministically, and returns it.
Commands like \code{skip} and \code{while} have an irrelevant
(and thus arbitrary) return value.
A simplifying (but inessential) assumption is that evaluation never faults.
If needed, one can model expressions like $n\p{/}0$
as returning a special \p{NaN} value.

To keep the development as simple as possible,
we also don't model scoping and function calls.
Including them can be handled using standard techniques.
Function calls in our examples should be understood as simply naming code blocks.

The functions $\pvar(t)$ and $\mods(t)$ return the set of program variables
occurring in~$t$ and modified by~$t$ respectively.
Their definition is standard.
We extend them to hyper-terms by setting
$\pvar(\m{t}) = \set{ (\p{x}, i) | i\in\supp(\m{t}), \p{x} \in \pvar(\m{t}(i)) }$
(and similarly for $\mods$).

\paragraph{Semantics}
A \emph{store} is an element of $\Store \is \PVar \to \Val$.
For simplicity, we adopt a big-step semantics for our language.
The judgment $ \bigstep{t}{s}{v}{s'} $ indicates that
the term~$t$ starting from input store~$s$ may terminate
with the return value~$v \in \Val$ and output store~$s'$.
The definition of $ \bigstep{t}{s}{v}{s'} $ is in \appendixref{sec:program-semantics}.
We define $\bigsome{t}{s} \is \exists v,s'\st \bigstep{t}{s}{v}{s'}$.
Note that $\bigsome{t}{s}$ is equivalent to termination
only if~$t$ does not have non-deterministic steps.
For example, $ \bigsome{\code{while$\;*\;$do skip}}{s} $ holds even though,
in a standard small-step semantics, the program has a diverging execution.

\begin{definition}
\label{def:hyper-bigstep}
  A \emph{hyper-term} is a finite partial function
  $
    \m{t} \from \Idx \pto \Term.
  $
  A \emph{hyper-store} is a function
  $
    \m{s} \from \Idx \to \Store.
  $
  A \emph{hyper-return-value} is a finite partial function
  $
    \m{r} \from \Idx \pto \Val.
  $
  The big-step semantics judgment
  is lifted to hyper-terms as follows:
  \begin{align*}
    \bigstep{\m{t}}{\m{s}}{\m{r}}{\m{s}'} &\is
      \forall i\in\Idx \st
      \begin{cases}
        \bigstep{\m{t}(i)}{\m{s}(i)}{\m{r}(i)}{\m{s}'(i)}
        \CASE i\in\supp(\m{t})
        \\
        \m{s}(i) = \m{s}'(i) \land \isundef{\m{r}(i)}
        \OTHERWISE
      \end{cases}
\end{align*}
  Note that on all indices where~$\m{t}$ is undefined,
  the store is preserved untouched.
\end{definition}

\subsection{Hyper-assertions}
\label{sec:assertions}

\begin{definition}[Hyper-assertion]
\label{def:assrt}
\label{def:hyper-assertion}
  \emph{Assertions} are of type
  $
    A \in \Assrt \is \Store \to \Prop.
  $
A \emph{hyper-assertion} is a function of type
  $
    P \in \HAssrt \is (\Idx \to \Store) \to \Prop.
  $
A \emph{post hyper-assertion} has the type
  $ Q \from (\Idx \pto \Val) \to \HAssrt $.
  They are used in postconditions, where the hyper-value argument is bound to the return
  value of the hyper-term in the triple.
  We require post hyper-assertions to be
  \emph{upward-closed} on the hyper-return-value:
  if $ Q(\m{v})(\m{s}) $ and
     $ \forall i\in\supp(\m{v})\st \m{v}(i) = \m{v}'(i) $
  then $ Q(\m{v}')(\m{s}) $.
\end{definition}

\begin{mathfig}[\small]
  \begin{align*}
(\p{x}(i) = v) &\is
      \fun \m{s}.
        \m{s}(i)(\p{x}) = v
&
    P_1 \implies P_2 &\is
      \fun \m{s}.
        P_1(\m{s}) \implies P_2(\m{s})
\\
    P_1 \land P_2 &\is
      \fun \m{s}. P_1(\m{s}) \land P_2(\m{s})
&
    P_1 \lor P_2 &\is
      \fun \m{s}. P_1(\m{s}) \lor P_2(\m{s})
\\
    \E x.P(x) &\is
      \fun \m{s}.\exists x\st P(x)(\m{s})
&
    \A x.P(x) &\is
      \fun \m{s}.\forall x\st P(x)(\m{s})
\\
    P\isub*{\pi} &\is
      \fun \m{s}.
P(\m{s}\isub*{\pi})
    &
    Q\isub*{\pi} &\is
      \ret.Q(\ret\isub*{\pi})\isub*{\pi}
\\
    \P I.P &\is
      \fun \m{s}.
        \exists \pr{\m{s}}\st
          P(\m{s}\m[i: \m{s}'\!(i) | i \in I])
    &
    \PP I.Q &\is \ret.\E \m{v}.\P I.Q(\ret\m[i: \m{v}(i) | i\in I])
\\
    \at{A}{I} &\is
      \fun \m{s}. \LAnd_{i\in I} A(\m{s}(i))
&
    Q_1 \land Q_2 &\is \ret. Q_1(\ret) \land Q_2(\ret)
\end{align*}
  \caption{Hyper-assertions}
  \label{fig:hyper-assertions}
\end{mathfig}

The set of indices that are relevant for~$P$ is the set $\idx(P)$ defined below. For example, we have 
$
  \idx\bigl( (\p{x}(\I1)=\p{y}(\I2)) \lor \p{z}(\I3)=0 \bigr) = \set{\I1,\I2,\I3}
$.
\begin{definition}[\/$\idx$]
\label{def:assrt-idx}
  The \emph{indices of} a (post) hyper-assertion~$P$ (resp.~$Q$) are the set:
  \begin{align*}
    \idx(P) &\is
      \Idx \setminus
        \set{ i \in \Idx |
                \forall \m{s},s'\st P(\m{s}) \iff P(\m{s}\m[i: s']) }
    \\
    \idx(Q) &\is
      \Idx \setminus
        \set{ i \in \Idx |
                \forall \ret,\m{s},s'\st
                  Q(\ret)(\m{s}) \iff Q(\ret\m[i: \bot])(\m{s}\m[i: s']) }
  \end{align*}
\end{definition}

Similarly, we define $\pvar(P)$ as the set of (indexed) program variables of an hyper-assertion such that
$(\p{x}, i) \in \pvar(P)$ if arbitrarily changing the value
of \p{x} at~$i$ may affect whether~$P$ holds.

\begin{definition}[\/$\pvar$]
  The \emph{program variables of} a (post) hyper-assertion~$P$ (resp.~$Q$) are the set:
  \begin{align*}
    \pvar(P) &\is
      ( \PVar \times \Idx )
      \setminus
      \set*{
        (\p{x}, i) |
          \forall \m{s}, v\st
            P(\m{s}) \iff P\bigl(\m{s}\m[i: \m{s}(i){\m[\p{x}: v]} ]\bigr)
      }
    \\
    \pvar(Q) &\is\textstyle
      \Union_{\ret \from \Idx \pto \Val} \pvar(Q(\ret))
  \end{align*}
\end{definition}
Even though we defined $\idx$ and $\pvar$ semantically,
any over-approximation would
suffice to preserve the soundness of the side conditions of our rules.

We will be using a number of hyper-assertions,
summarized in \cref{fig:hyper-assertions}.
Pure meta-level propositions~$\phi$ lift to
\emph{pure} hyper-assertions~$\fun\wtv.\phi$,
which hold independently of the state.
An assertion $A \of \Assrt$, which predicates over single stores,
can be lifted to a hyper-assertion, which predicates over hyper-stores,
by $\at{A}{I}$ which specifies on which indices~$I$, $A$ is required to hold.
In addition to the usual logical connectives, we introduce some notation
to deal with hyper-stores.
To refer to the value of a program variable \p{x} at index~$i$
in a hyper-store, we write $\p{x}(i)$.

The hyper-assertion $P\isub*{\pi}$
uses the reindexing $\pi$ to change the indices of the hyper-store
before checking~$P$ holds on it.
The notation is extended to post hyper-assertions by reindexing the
hyper-return-value too.
Note that while reindexing an assertion is intuitively akin to applying a substitution to indices, we are giving it here a \emph{semantic} definition.
On simple assertions, reindexing does propagate
just like a substitution of indices.
For example $
  \bigl( 5 \leq \p{x}(\I1) \land \p{x}(\I2) \leq \p{x}(\I3) \bigr)
  \isub{1->2}
  $ is logically equivalent to $
  \bigl( 5 \leq \p{x}(\I2) \leq \p{x}(\I3) \bigr)
$.
We will however see in \cref{sec:idx-rules} that its interaction with WPs
is not trivial.

The \emph{projection modality} \(\Pi_I\) projects out the information on the
components in~$I$ of the hyper-store under consideration.
$\P I.P$ holds on a hyper-store~$\m{s}$
if~$P$ holds on some hyper-store that coincides with~$\m{s}$
on every index except for the ones in~$I$.
Notice that~$ \idx(\P I.P) \inters I = \emptyset $.
For brevity, when~$I=\set{i}$ is a singleton, we omit the braces
and write~$\P i.P$.
Projection of a post hyper-assertion $\PP I.Q$,
projects out the components at~$I$ of the hyper-return-value too.

\begin{definition}
\label{def:entailment}
  An assertion~$A$ is \emph{valid}, written~$ \V |- A $,
  just if~$ \forall s\st A(s) $.
  Similarly, we define
  \emph{validity}, \emph{entailment}, and \emph{logical equivalence}
  of hyper-assertions:
  \begin{align*}
    \V & |- P \is \forall \m{s}\st P(\m{s})
    &
    \V \Gamma & |- P \is (\V |- \LAnd \Gamma \implies P)
    &
    P &\lequiv P' \is (\V P |- P') \land (\V P' |- P)
  \end{align*}
  where~$ \Gamma $ is a list of hyper-assertions.
  When $\Gamma$ is empty, $\LAnd \Gamma = \True$.
  Since~$\Gamma$ is always interpreted as the conjunction of its items
  we will not make a distinction between~$\Gamma$ and $\LAnd \Gamma$.
\end{definition}

\begin{mathfig}[\small]
  \begin{proofrules}
    \V \Gamma, P |- P

    \infer{
      \V \Gamma |- P'
      \\
      \V \Gamma, P' |- P
    }{
      \V \Gamma |- P
    }

    \infer*[fraction={===}]{
      \V \Gamma, P |- R
    }{
      \V \Gamma |- P \implies R
    }

    \infer{
      \forall x\st (\V \Gamma, P(x) |- R)
    }{
      \V \Gamma, \E x.P(x) |- R
    }

    \infer{
      \V \Gamma |- P_1
      \\
      \V \Gamma |- P_2
    }{
      \V \Gamma |- P_1 \land P_2
    }

\end{proofrules}
  \caption{Basic hyper-assertion laws (selection).}
  \label{fig:basic-laws}
\end{mathfig}

\begin{mathfig}[\small]
  \begin{proofrules}
    \infer*[lab=idx]{
      \V \Gamma |- P
}{
      \V \Gamma\isub*{\pi} |- P\isub*{\pi}
    }
    \label{rule:reindex}
    \label{rule:idx-intro}

    \infer*[lab=idx-pvar]{}{
      (\p{x}(i)=v)\isub*{\pi} \lequiv \p{x}(\pi(i))=v
    }
    \label{rule:idx-pvar}

    \infer*[lab=idx-ex]{}{
      (\E x.P(x))\isub*{\pi} \lequiv \E x.(P(x)\isub*{\pi})
    }
    \label{rule:idx-ex}

    \infer*[lab=idx-conj]{}{
      (P_1 \land P_2)\isub*{\pi} \lequiv (P_1\isub*{\pi} \land P_2\isub*{\pi})
    }
    \label{rule:idx-conj}

    \infer*[lab=idx-impl]{}{
      (P_1 \implies P_2)\isub*{\pi} \lequiv (P_1\isub*{\pi} \implies P_2\isub*{\pi})
    }
    \label{rule:idx-impl}

    \infer*[lab=idx-irrel]{
      \forall i \in \idx(P)\st \pi(i)=i
    }{
      P\isub*{\pi} \lequiv P
    }
    \label{rule:idx-irrel}

    \infer*[lab=proj]{
      \V \Gamma |- P
    }{
      \V \P I.\Gamma |- \P I.P
    }
    \label{rule:proj-intro}

    \infer*[lab=proj-intro]{}{
      \V P |- \P I. P
    }
    \label{rule:proj-weak}

    \infer*[lab=proj-merge]{}{
      \P I_1.\P I_2. P \lequiv \P I_1\union I_2. P
    }
    \label{rule:proj-merge}

    \infer*[lab=proj-irrel]{
      \idx(P) \inters I = \emptyset
    }{
      \V \P I. P |- P
    }
    \label{rule:proj-irrel}

   \infer*[lab=proj-store]{
     i \notin \idx(P)
     \\
     \card{\vec{v}} = \card{\pv{x}}
   }{
     \V \E \vec{v}.P(\vec{v})
     |- \P i.\E \vec{v}.(P(\vec{v}) \land \pv{x}(i) = \vec{v})
   }
   \label{rule:proj-store}
  \end{proofrules}
  \caption{Hyper-assertion laws for reindexing and projection.}
  \label{fig:idx-proj-laws}
\end{mathfig}

In \cref{fig:basic-laws} we show a selection of laws
on entailments of basic hyper-assertions.
These laws are standard and mirror the laws for standard connectives.
In \cref{fig:idx-proj-laws} we show the core laws that apply to the two
constructs that are special to hyper-assertions:
reindexing and projection.

\Cref{rule:idx-intro} allows the application of arbitrary reindexing
on both sides of the turnstile.
\Cref{rule:idx-pvar,rule:idx-ex,rule:idx-conj,rule:idx-impl} show how
reindexing distributes over the other connectives.
To eliminate a reindexing one can use these rules to push the reindexing
down the structure of the assertion until either \ref{rule:idx-pvar} applies
or the reindexing does not affect the indices of the assertion
and \ref{rule:idx-irrel} allows to remove it.

\Cref{rule:proj-intro} allows projection to be introduced
on both sides of the turnstile.
Notice that the rule is not an instance of consequence:
the assumption~$ \P I.\Gamma $ in the conclusion
does not imply the assumption~$\Gamma$ in the premise.

\Cref{rule:proj-weak} is the most obvious way to introduce a projection,
but it is not the most useful;
typically when introducing an assertion~$\P i.P$
we want to assert in~$P$ facts that
are not true for the current store at index~$i$,
but would be true if we reassigned the store at~$i$ appropriately.
\Cref{rule:proj-store} supports this common scenario:
for instance, we can use it to prove
$
  {(\p{x}(\I1)=\p{x}(\I2))} \proves
  \E v.\p{x}(\I1)=\p{x}(\I2)=v \proves
  \P\I3.\bigl(\E v.\p{x}(\I1)=\p{x}(\I2)=v \land \p{x}(\I3)=v\bigr) \proves
  \P\I3.(\p{x}(\I1)=\p{x}(\I2)=\p{x}(\I3)).
$

\Cref{rule:proj-irrel} is the main mean to eliminate projection:
starting from an assertion~$P_0$ with $ \idx(P_0) \inters I \ne \emptyset $
one first proves~$\V P_0 |- P$ for some suitably strong~$P$ with
$ \idx(P_0) \inters I = \emptyset $;
then an application of \ref{rule:proj-intro} and \ref{rule:proj-irrel}
give us $ \V \P I.P_0 |- P $.

 \section{The program logic}
\label{sec:logic}

To make our logic capable of asserting properties of programs,
we introduce a \emph{weakest precondition}~(WP) modality
$ \WP{\m{t}}{Q} $,
where
$ \m{t} \from \Idx \pto \Term $ is a hyper-term, and
$ Q \from (\Idx \pto \Val) \to \HAssrt$
is the postcondition.\footnote{We
omit the binder in the postcondition
  when the return values are simply ignored,
  i.e.~$ \WP {\m{t}} {P} \is \WP {\m{t}} {\fun\_.P} $.
}
It intuitively holds on the hyper-states from which
$ \m{t} $ runs yielding an hyper-return-value~$\m{v}$ and an hyper-state $\m{s}'$
satisfying the postcondition~$Q$.
The \emph{arity} of $\WP{\m{t}}{Q}$ is~$\card{\supp(\m{t})}$.

\begin{definition}[Weakest precondition]
\label{def:wp}
  $
    \WP{\m{t}}{Q} \is
      \fun \m{s}. \big(
        \forall \m{s}', \m{v} \st
          \bigstep{\m{t}}{\m{s}}{\m{v}}{\m{s}'}
            \implies
              Q(\m{v})(\m{s}')
      \big).
  $
\end{definition}

Hyper-triples are defined in terms of weakest-preconditions:
$
  \T{P}{\m{t}}{Q} \is P \implies \WP{\m{t}}{Q}.
$
This definition is consistent with the one adopted by~\cite{BartheCK11,SousaD16}.
Other relational program logics, notably~\cite{Benton04,Yang07},
insist that the hyper-terms either all diverge or all terminate.
We elaborate on the trade-offs implied by this choice in \cref{sec:discuss-design}.

One possible variation is to require \emph{safety} of the terms,
\ie that the terms do not fault
(as in e.g.~ \cite{Yang07})
which is a common choice in Hoare/Separation logic.
For simplicity we do not model faults in our language.
The most flexible extension would model faults
as a special return value~$\fault$.
This way, the postcondition can decide what should happen when each component faults. For some applications, requiring safety ($\ret.\A i.\ret[i]\ne\fault$) may be appropriate, for others it may be sufficient to show that if one component faults then the others do too
(${\ret.\A i,j.\ret[i]=\fault \iff \ret[j]=\fault}$).

Another important point of departure from the literature is that we consider
a non-deterministic programming language
(in contrast with e.g.~\cite{Yang07,Benton04,BartheCK11}).
An hyper-triple of shape
$ \J |- {\pv{x}(\I1)=\pv{x}(\I2)} {\m[\I1: t_1, \I2: t_2]} {\pv{x}(\I1)=\pv{x}(\I2)} $,
for example,
encodes semantic equivalence of~$t_1$ and~$t_2$ in a deterministic language.
But in general it represents a stronger property, which implies
determinism of~$t_1$ and~$t_2$ (assuming they terminate).
The flexibility of our logic allows us to consider non-deterministic programs,
and obtain the stronger proofs that the hypothesis of determinism enables
by simply involving in the derivations the \emph{hyper-triple encoding}
of determinism.

\paragraph{Overview of the proof rules}
\label{sec:rules-overview}

We take a semantic approach in formalizing WP,
therefore proof rules are just lemmas involving WP.
We divide \thelogic's rules in four groups.
The \emph{structural} rules (in \cref{fig:wp-struct-laws})
apply regardless of the hyper-terms in the WP,
and are mostly adaptations of standard Hoare Logic rules.
Then we have the \emph{lockstep} rules (in \cref{fig:wp-lockstep-laws}),
which match on the structure of the program terms
of all components at the same time.
These are minor adaptations of rules that are virtually present
(most often in the special case of arity~2)
in all the relational program logics in the literature.
The \emph{hyper-structure} rules (in \cref{fig:wp-hyper-laws})
provide the basic reasoning principles needed to
compose hyper-triples of varying arity.
Finally, the \emph{reindexing} rules (in \cref{fig:wp-idx-laws})
allow the sound merging of components,
which underpins the manipulations needed to handle indirect-style triples.

All rules apply only if all the components are defined.
This implies some implicit side conditions.
For example, whenever a rule involves an expression $\m{t}_1 \m. \m{t}_2$,
it only applies if $\supp(\m{t}_1) \inters \supp(\m{t}_2) = \emptyset$.
Similarly, an expression $\m{t}_1 \m+ \m{t}_2$ implies the constraint that
$
  \forall i \in \supp(\m{t}_1) \inters \supp(\m{t}_2) \st
    \m{t}_1(i) = \m{t}_2(i).
$

\subsection{Structural Rules}
\label{sec:struct-rules}

\begin{mathfig}[\small]
  \begin{proofrules}
    \infer*[lab=wp-triv]{}{
  \V |- \WP{\m{t}}{\True}
}     \label{rule:wp-triv}

    \infer*[lab=wp-cons]{
  \forall \m{v}\st\ \V Q(\m{v}) |- Q'(\m{v})
}{
  \V \WP{\m{t}}{Q} |- \WP{\m{t}}{Q'}
}     \label{rule:wp-cons}

    \infer*[lab=wp-all]{}
{
  \A x.\WP{\m{t}}{Q(x)} \lequiv \WP{\m{t}}{\A x.Q(x)}
}
     \label{rule:wp-all}

    \infer*[lab=wp-frame]{
\pvar(P) \inters \mods(\m{t}) = \emptyset
}{
  P \land \WP{\m{t}}{Q}
  \proves
  \WP{\m{t}}{P \land Q}
}     \label{rule:wp-const}
    \label{rule:wp-frame}

    \infer*[lab=wp-impl-r]{
  \pvar(P) \inters \mods(\m{t}) = \emptyset
}{
  P \implies \WP{\m{t}}{Q}
  \lequiv
  \WP{\m{t}}{\ret. P \implies Q(\ret)}
}     \label{rule:wp-impl-r}

    \infer*[lab=wp-subst]{
  \p{x} \notin \mods(t)
}{
  \V
  \p{x}(i) = v
  \land
  \WP{\bigl(\map[\big]{i: t\subst{\p{x}->v}}\m.\m{t}'\bigr)}{Q}
  |-
  \WP{(\m[i: t]\m.\m{t}')}{Q}
}
     \label{rule:wp-subst}

    \infer*[lab=wp-idx]{
  \pi\ \mathrm{bijective}
}{
  \V (\WP{\m{t}}{Q})\isub*{\pi} |- \WP{\m{t}\isub*{\pi}}{Q\isub*{\pi}}
}     \label{rule:wp-idx}
  \end{proofrules}
  \caption{Weakest precondition laws: structural rules.}
  \label{fig:wp-struct-laws}
\end{mathfig}

The rules in \cref{fig:wp-struct-laws}
represent basic inferences that are typically available
in Hoare-style program logics.
They hold generically on the hyper-term of the relevant WP:
they apply independently of which components and which terms the hyper-term contains.
The rules can be understood as axiomatic accounts of how
the basic connectives commute with WP.
\Cref{rule:wp-triv} states that a WP ``commutes'' with $\True$:
the WP with trivial postcondition is trivially true.
This rule is sound for our model of WP that does not insist on
safety of~$\m{t}$.
\Cref{rule:wp-cons} states that WP preserves entailment.
This encodes the usual rule of consequence:
if one can prove the WP with postcondition~$Q$,
the same WP with a weaker~$Q'$ is also provable.
\Cref{rule:wp-all} states that WP commutes with~$\Forall$.

\Cref{rule:wp-const} generalises Hoare's \emph{frame} rule,
also known as \emph{constancy}.
It shows that WP commutes with~$(P \land)$
when~$P$ does not depend on the variables modified by~$\m{t}$.
Note that~$P$ does not depend on the return values.
\Cref{rule:wp-impl-r} is the analog of \ref{rule:wp-const}
for~$(P\implies)$.
\Cref{rule:wp-subst} allows the sound subsitution of values for program variables.

\Cref{rule:wp-idx} explains how a bijective reindexing propagates through
a WP, by applying the reindexing both to the term and to the postcondition.
Combined with \ref{rule:idx-intro},
this effectively closes the proofs under all renamings,
revealing an underlying symmetry of the entailment judgments
(\ie no index is treated specially).
We typically use the rule in concert with \ref{rule:idx-intro} and \ref{rule:wp-cons} to uniformly rename the components across a judgment.

\subsection{The Lockstep Rules}
\label{sec:lockstep-rules}

\begin{mathfig}[\small]
  \begin{proofrules}
    \infer*[lab=wp-seq$_I$]{}{
  \WP {\m[i: t_i | i \in I]}[\big]{
    \WP {\m*[i: \smash{t'_i} | i\in I]} {Q}
  }
  \lequiv
  \WP {\m[i: (t_i\code{;}\ t'_i) | i \in I]} {Q}
}     \label{rule:wp-seq}

    \infer*[lab=wp-assign$_I$]{
  \forall i\in I\st
    (\p{x}_i,i) \not\in \pvar(Q)
}{
  \V
  \WP {\m[i: e_i | i\in I]} {Q}
  |-
  \WP {\m[i: \p{x}_i \code{:=}\, e_i | i \in I]}
      {\ret.Q(\ret) \land \LAnd_{i\in I} \ret[i] = \p{x}_i(i)}
}     \label{rule:wp-assign}

    \infer*[lab=wp-if$_I$]{}{
  \WP {\m[i: g_i | i\in I]}*{
    \fun\m{b}.
    \WP {
    \begin{pmatrix*}[l]
      {\m[i: t_i | i\in I, \m{b}(i) \neq 0]}
      {\m.}\\
      {\m[i: t'_i | i\in I, \m{b}(i) = 0]}
    \end{pmatrix*}
    }[\big]{Q}
  }
  \lequiv
  \WP {\m[i: \code{if}\ g_i\ \code{then}\ t_i\ \code{else}\ t'_i | i \in I]}
      {Q}
}     \label{rule:wp-if}

    \infer*[lab=wp-while$_I$]{
  \V
  P
  |-
  \WP {\m[i: g_i | i\in I]}[\big]{
    \fun\m{b}.
    (\m{b} =_I 0 \land R)
    \lor
    (\m{b} \ne_I 0 \land \WP {\m[i: t_i | i \in I]} {P})
  }
}{
  \V
  P
  |-
  \WP {\m[i: \code{while}\ g_i\ \code{do}\ t_i | i \in I]} {R}
}     \label{rule:wp-while}

    \infer*[lab=wp-refine]{}{
  \V
  \at{(t_1 \semleq t_2)}{i},
  \WP {(\m*[i: t_2] \m. \m{t})} {Q}
  |-
  \WP {(\m[i: t_1] \m. \m{t})} {Q}
}     \label{rule:wp-refine}
  \end{proofrules}
  \caption{Weakest precondition laws: lockstep rules.}
  \label{fig:wp-lockstep-laws}
\end{mathfig}

The rules in~\cref{fig:wp-lockstep-laws} are all straightforward extensions
of the corresponding Hoare logic rules, to \pre k-ary hyper-triples.
Variations of these rules appear in virtually all other relational logics.
In fact, one can recover Hoare logic by instantiating our rules
to the \pre 1-ary hyper-triple case.
This embedding is more awkward to obtain in other relational logics
that insist on a fixed arity greater than~1 for relational triples,
e.g.~\cite{Benton04}.

The \cref{rule:wp-while} rule follows the same pattern.
The rule applies if all the components are while loops,
and verifies their guards and their bodies as if they executed in lockstep.
Exactly like its Hoare logic counterpart, the rule is based on loop invariants:
here~$P$ is the (relational) loop invariant.
The premise asks to prove, assuming the loop invariant holds initially,
that after the evaluation of all the guards,
we only have two cases.
The first case is where all the guards evaluated to false ($\m{b} =_I 0$)
and the overall postcondition~$R$ holds.
The second case is where all the guards evaluated to true ($\m{b} \ne_I 0$).
In that case we also have to prove that running all the loop bodies once
results in re-establishing the loop invariant~$P$.
Note that the disjunction does not allow for the guards to go ``out of sync'':
the loops execute exactly the same number of times.

The lockstep principle is very advantageous when it applies,
but its applicability is very restricted:
most often the control paths taken in two components
will differ to a point where this strategy cannot be used
or becomes counterproductive.
Overcoming the rigidity of the basic lockstep proof strategy
has been a goal of many proposals in the literature.
For example, \cite{BartheCK11,BartheGHS17} include a number of semantic-preserving transformations, like loop unrolling and loop splitting,
that can be applied to terms so that they are brought to a shape
amenable to application of lockstep rules.
Similarly, \cite{SousaD16} provide a set of rules,
dubbed ``Cartesian Loop Logic'',
that perform a limited set of such transformations to terms.
In~\cite{BartheGHS17} a generalized while rule allows
stuttering in the alignment of the (two) loops considered,
without having to syntactically rewrite the terms.
To express these non-trivial alignments,
we include in our logic the \ref{rule:wp-refine} rule,
which allows to replace a term~$t_1$ in a WP,
with another term $t_2$ if every behaviour of~$t_1$ is also a behaviour of~$t_2$.

\begin{definition}[Refinement]
  The refinement $t_1 \semleq t_2$ holds
  when~$t_2$ has all the behaviours of~$t_1$:
  \[
    t_1 \semleq t_2
    \is
      \fun s.
      \bigl(
      \forall s',v\st
        \bigstep{t_1}{s}{v}{s'}
        \implies
        \bigstep{t_2}{s}{v}{s'}
      \bigr)
  \]
  Semantic equivalence is defined as~$ (t_1 \semeq t_2) \is (t_1 \semleq t_2 \land t_2 \semleq t_1) $.
\end{definition}
Proving $t_1 \semleq t_2$ generally requires meta-level reasoning about
the semantics of the terms.
The rule should therefore be considered as a last resort,
and be used by instantiating the side condition with known generic refinements.
Thanks to the flexibility of \thelogic,
many challenging examples that in other logics would require
ad-hoc refinements can still be handled within the logic,
by only including the special case of the \ref{rule:wp-refine} rule
instantiated with loop unfolding: $
  \code{while$\;g\;$do$\;t$}
  \semeq
  \code{if$\;g\;$then$\;(t$;while$\;g\;$do$\;t)$}
$.
We show an example of this pattern in \cref{sec:discuss-refinement}.

\subsection{The Hyper-structure Rules}
\label{sec:hyper-str-rules}

\begin{mathfig}[\small]
  \begin{proofrules}
    \infer*[lab=wp-nest]{}{
\WP{\m{t}_1}{\fun\m{v}.
    \WP{\m{t}_2}{\fun\m{w}.
      Q(\m{v}\m.\m{w})}}
  \lequiv
  \WP{(\m{t}_1 \m. \m{t}_2)}{Q}
}     \label{rule:wp-nest}

    \infer*[lab=wp-conj]{
\idx(Q_1) \inters \supp(\m{t}_2) \subs \supp(\m{t}_1)
  \\
\idx(Q_2) \inters \supp(\m{t}_1) \subs \supp(\m{t}_2)
}{
  \V
  \WP{\m{t}_1}{Q_1}
  \land
  \WP{\m{t}_2}{Q_2}
  |-
  \WP{(\m{t}_1 \m+ \m{t}_2)}{Q_1 \land Q_2}
}
     \label{rule:wp-conj}

    \infer*[lab=wp-proj,
  Right={$ I = \supp(\m{t}_1) $}
]{
}{
  \V
  \P I. \bigl(
    \proj(\m{t}_2) \implies \proj(\m{t}_1)
    \land
    \WP{(\m{t}_1 \m. \m{t}_2)}{Q}
  \bigr)
  |-
  \WP{\m{t}_2}{\PP I.Q}
}
     \label{rule:wp-proj}
  \end{proofrules}
  \caption{Weakest precondition laws: hyper-structure rules.}
  \label{fig:wp-hyper-laws}
\end{mathfig}

The rules in~\cref{fig:wp-hyper-laws}
are the key new rules of \thelogic.
We call them \emph{hyper-structure} rules because they decompose the goal
by breaking up the components of a hyper-term.
They represent three key reasoning principles available for WP on hyper-terms:
how nested WPs can be merged into one (\ref{rule:wp-nest}),
how a conjunction of WPs can be merged into one (\ref{rule:wp-conj}),
and how to soundly remove components from a WP (\ref{rule:wp-proj}).
They naturally arise from studying how WP commutes with other constructs,
specifically, with another WP, with conjunction and with projection.

The \ref{rule:wp-nest} rule states that a WP on a hyper-term that can be
split into two disjoint hyper-terms
$ \m{t}_1 $ and $ \m{t}_2 $,
can be equivalently expressed as the nesting of a WP for~$\m{t}_2$ in the postcondition of the WP for $\m{t}_1$.
As we saw in \cref{sec:overview:conj-nest},
this rule increases the flexibility of the logic.
As seen in \cref{sec:overview:loops},
the nested WP pattern becomes essential when using the \ref{rule:wp-while}
rule with a loop invariant that needs ``side-computation'' to be stated.

The \ref{rule:wp-conj} rule states that the conjunction of two WPs entails
a single WP where the two postconditions are conjoined,
and the hyper-terms are unioned.
On an index $i \notin \supp(\m{t}_1) \inters \supp(\m{t}_2) $,
the postcondition of the WP on $\m{t}_1$ and $\m{t}_2$ would
observe different stores.
The side conditions make sure~$Q_1$ and~$Q_2$ ignore the problematic indices.

In practice, \ref{rule:wp-conj} allows us to break down the current goal
as a conjunction of two smaller hyper-triples.
It is not mandatory for
the hyper-terms~$\m{t}_1$ and~$\m{t}_2$ to have components in common,
but the application of the rule is more powerful when they do.
As a simple example, when $\m{t}_1 = \m{t}_2$,
the rule allows the combination of multiple WPs on the same hyper-term.
We have seen instances of this pattern in \cref{sec:motivation}.

The embedding of Hoare logic together with \ref{rule:wp-conj}
immediately entails a relative completeness result for our logic:
given an oracle for derivations of Hoare triples,
we can prove any hyper-triple by using \ref{rule:wp-conj} to compose
the derivations for the strongest unary Hoare triple for each component,
and then \ref{rule:wp-cons} to imply the original goal
(see \appendixref{sec:completeness} for details).
This proof strategy, however, is the one that removes all opportunities
for relational proofs, negating the benefits of working in a relational logic.

The last hyper-structure rule is \ref{rule:wp-proj},
that can be seen as a commutation rule between WP and projection.
By using \cref{rule:wp-proj,rule:proj-intro}
one can derive \ref{rule:wp-proj-simpl} (shown in \cref{sec:overview:proj}),
which explains how to soundly remove some components from a WP.
As illustrated in \cref{sec:overview:proj},
and further discussed in
\cref{sec:discuss-divide-and-conquer,sec:discuss-refinement},
\ref{rule:wp-proj-simpl} is typically used in combination
with \ref{rule:wp-conj} to introduce an auxiliary hyper-term~$\m{t}$,
so that a goal that relates some $\m{t}_1 \m. \m{t}_2$ can be broken
into a goal that relates~$\m{t}_1 \m. \m{t}$,
and one that relates~$\m{t}_2 \m. \m{t}$.

If we ignored the conjunct regarding the $\proj(\hole)$ assertion,
the \ref{rule:wp-proj} rule would be unsound.
The reason is that when even
a single component of a WP diverges,
the WP definition does not require the output stores
of the other (terminating) components to satisfy the postcondition.
For example,
$ \V |- \WP {\m[\I1:\code{skip}, \I2: \code{while$\;$1$\;$do skip}]} {\False} $
holds.
Projecting out component~\I{2} blindly, however,
would produce the invalid triple
$ \V |- \WP {\m[\I1:\code{skip}]} {\False} $.
To ensure soundness,
\ref{rule:wp-proj} uses the
\emph{projectability} assertion.
\begin{definition}[Projectable]
  The assertion~$\proj(t)$ holds on states where~$t$
  is \emph{projectable}, \ie can yield a result:
  $ \proj(t) \is \fun s.\bigsome{t}{s} $.
  We also define an analogous hyper-assertion parameterized over hyper-terms:
  $ \proj(\m{t}) \is \fun \m{s}.\bigsome{\m{t}}{\m{s}} $.\end{definition}
To be able to apply \ref{rule:wp-proj},
one needs to establish that,
if all the terms in $\m{t}_2$ can terminate
so can all the terms in~$\m{t}_1$ which we are projecting out.
This would rule out our counterexample,
and in fact is sufficient to prove soundness.

The projectability condition
is strictly weaker than requiring termination of~$t'$,
in two ways.
First, it corresponds to ``may'' termination:
it would hold if~$t'$ has both a diverging and a terminating trace from some
initial store. For example, $\code{while$\;*\;$do skip}$ is projectable:
from every initial store it has both diverging and terminating traces.
Second, it is conditional on termination of the terms in~$\m{t}_2$.
This can be useful when~$\m{t}_1(i)$ is a sub-term of some other component~$\m{t}_2(j)$, causing~$\m{t}_1(i)$ and~$\m{t}_2(j)$ to terminate under the same conditions.

\subsection{The Reindexing Rules}
\label{sec:idx-rules}

\begin{mathfig}[\small]
  \begin{proofrules}
    \infer*[lab=wp-idx-post]{
  \V \Gamma |- \WP{\m{t}}{Q}
  \\
  j \notin \supp(\m{t}) \union \idx(\Gamma)
}{
  \V \Gamma |- \WP{\m{t}}{Q\isub{j->i}}
}
     \label{rule:wp-idx-post}

    \infer*[lab=wp-idx-swap]{
  i \notin \idx(Q)
}{
  \V
  \bigl(\WP{(\m[j: t]\m.\m{t}')}{Q}\bigr)\isub{j->i}
  |-
  \WP{(\m[i: t]\m.\m{t}')}{Q\isub{j->i}}
}
     \label{rule:wp-idx-swap}

    \infer*[lab=wp-idx-pass]{
  i,j \notin \supp(\m{t})
}{
  \V (\WP{\m{t}}{Q})\isub{j->i} |- \WP{\m{t}}{Q\isub{j->i}}
}
     \label{rule:wp-idx-pass}

    \infer*[lab=wp-idx-merge]{}{
  \V
  \bigl(\WP{(\m[i: t, j: t]\m.\m{t}')}{Q}\bigr)\isub{j->i}
  |-
  \WP{(\m[i: t]\m.\m{t}')}{Q\isub{j->i}}
}
     \label{rule:wp-idx-merge}
  \end{proofrules}
  \caption{Weakest precondition laws: reindexing rules.}
  \label{fig:wp-idx-laws}
\end{mathfig}

Reindexing is a useful tool in relational proofs.
One way of introducing a reindexing is by using \cref{rule:idx-intro}.
While propagating the effects of a reindexing is straightforward for
assertions with basic connectives
(through the rules of \cref{fig:idx-proj-laws}),
its interaction with WPs is much more interesting.

\Cref{rule:wp-idx} already handles bijective reindexing.
The rules in \cref{fig:wp-idx-laws} represent the sound interactions
between \emph{non-bijective} reindexing and WPs.
These reindexings boil down to compositions of reindexings of the form
$\isub{j->i}$ where~$i\ne j$.
The objective is to understand how a WP and a reindexing ``commute'':
given $ \WP{\m{t}}{Q}\isub{j->i} $, how does the reindexing propagate
to~$\m{t}$ and~$Q$?
There are four cases to consider, depending on whether~$i$ and~$j$
are indices of~$\m{t}$ or not.

\Cref{rule:wp-idx-pass} deals with the simple case where $i,j \notin \supp(\m{t})$, in which the reindexing has no effect on the hyper-term,
and can be simply propagated to the postcondition.

\Cref{rule:wp-idx-swap} handles the case where $j \in \supp(\m{t})$
but $i\notin \supp(\m{t})$.
In this case $\m{t} = \m[j: t] \m. \m{t}'$ for some~$t$ and $\m{t}'$,
with $i \notin \m{t}'$
(the latter constraint is implied by well-formedness of the judgment in the rule).
The rule then states that the reindexing is applied to the hyper-term
by exchanging index~$j$ for index~$i$ and the reindexing is propagated to the postcondition.
To be sound, the rule requires $Q$ not to predicate on the index~$i$:
indeed in the starting WP, references to index~$i$ would refer to the initial store at~$i$ (since the hyper-term does not affect it),
while in the resulting WP the store at~$i$ is modified by the hyper-term.

\Cref{rule:wp-idx-merge} deals with the case where both $i,j \in \supp(\m{t})$.
In this case the effect on the hyper-term should be of ``merging'' the two components, which is only meaningful if they are mapped to the same term.
Therefore the rule matches on $\m{t} = \m[i: t, j: t]\m.\m{t}'$.
The reindexing is then propagated to the postcondition.
The intuition is that we have proved that $Q$ holds on the results
of any two runs of~$t$, so it will hold when both runs are the same run (at a single index~$i$).

The only missing case is when~$i \in \supp(\m{t})$ and~$j \notin \supp(\m{t})$.
Since the reindexing does not affect the indices of~$\m{t}$,
one might be tempted to write a rule like \ref{rule:wp-idx-pass}:
$
  \V (\WP{\m{t}}{Q})\isub{j->i} |- \WP{\m{t}}{Q\isub{j->i}}.
$
Such rule would however be unsound.
For example, we could start with the valid
$
  \V \p{x}(\I2)=0 |- \WP{\m[\I1: \code{x:=1}]}{\p{x}(\I1)=1 \land \p{x}(\I2)=0},
$
apply \ref{rule:idx-intro} with $ \isub*{\pi} = \isub{2->1} $
to obtain the valid
$
  \V
  \p{x}(\I1)=0
  |-
  \bigl(\WP{\m[\I1: \code{x:=1}]}{\p{x}(\I1)=1 \land \p{x}(\I2)=0}\bigr)
  \isub{2->1}
  .
$
An application of our tentative rule would yield
$
  \V
  \p{x}(\I1)=0
  |-
  \WP{\m[\I1: \code{x:=1}]}{\p{x}(\I1)=1 \land \p{x}(\I1)=0}
$
which has an unsatisfiable postcondition (and is thus invalid).
One simple side condition that would make the rule sound is
$ j \notin \idx(Q) $; this situation is however already handled by the
\ref{rule:idx-irrel} rule.

\Cref{rule:wp-idx-post} represents a more useful way of dealing with
the reindexing $\isub{j->i}$ when~$i \in \supp(\m{t})$
and~$j \notin \supp(\m{t})$.
The rule states that in such case it is possible to \emph{introduce}
the reindexing in the postcondition,
provided the assumptions~$\Gamma$ do not constrain the store at~$j$.

The soundness argument of \ref{rule:wp-idx-post} goes as follows.
Let~$\m{s}$ be an input hyper-store satisfying~$\Gamma$,
and let~$\m{s}'\m[i: s]$ be the hyper-store resulting from running
$\m{t}$ on $ \m{s} $.
To establish $Q\isub{j->i}$ on~$\m{s}'\m[i: s]$,
we need to prove~$Q$ holds on $\m{s}'\m[i: s, j:s]$.
Since~$\Gamma$ does not constrain the store at~$j$,
we know it holds on~$\m{s}\m[j: s]$.
Moreover, since~$j\notin\supp(\m{t})$,
the hyper-term run from $\m{s}\m[j: s]$
will output $\m{s}'\m[i: s, j:s]$.
The premise therefore implies~$Q$ holds on the output hyper-store.
The crucial step in the proof is the act of feeding the output store at~$i$
in the conclusion, as the input store at~$j$ in the premise.
This precisely captures the conversion from indirect-style triples to direct
triples.

 \section{Discussion}
\label{sec:discussion}
Here, we highlight some key features of \thelogic{} through some examples and a discussion of how certain design choices make this logic stand out.
In the interest of space, the examples are presented in broad strokes.
\Appendixref{sec:case-studies}
contains their full proofs and additional case studies.

\subsection{Modularity}
\label{sec:discuss-modularity}
\begin{wrapfigure}[7]{R}{25ex}\vspace*{-1.8em}
  \begin{sourcecode*}[gobble=4]
    f(x, y) $\is$
      while (i < x) do
        r := op(r, y);
        i := i + 1
  \end{sourcecode*}
  \vspace*{-1em}
  \caption{A program that is parametric on \p{op}.}
  \label{fig:f-code}
\end{wrapfigure}

One of the main goals of \thelogic\ is allowing the construction of
truly modular proofs.
Consider for example the code in \cref{fig:f-code}:
the code uses a library-provided function \p{op};
\p{f} accumulates its output in~\p{r} (assumed initially~0).
Suppose we want to prove that \p{f} distributes
over \p{op} in the second argument:
$
  \p{f}(a,\p{op}(b,c)) = \p{op}(\p{f}(a,b),\p{f}(a,c)).
$
This property holds if \p{op} is:
\begin{enumerate*}[label=(\roman*)]
  \item not modifying variables of~\p{f},
    \label{prop:op-mods}
  \item a total function (\ie~projectable and deterministic)
    with $\p{op}(0,0)=0$ (1-safety),
    \label{prop:op-fun}
  \item associative (4-safety), and
    \label{prop:op-assoc}
  \item commutative (2-safety).
    \label{prop:op-comm}
\end{enumerate*}
Seeing as both the goal and these assumptions are hyper-safety properties,
we would want to build a modular proof of distributivity of \p{f}:
one that does not rely on the specific implementation of \p{op},
but only on the properties listed above expressed as hyper-triples.

In this example, the intuitive proof strategy is a vanilla lockstep alignment:
if we consider the hyper-term
$
 \m[
   \I1: \p{f}(a,\p{op}(b,c)),\allowbr
   \I2: \p{f}(a,b),\allowbr
   \I3: \p{f}(a,c)
 ]
$
all the components execute~$a$ iterations of their loops;
intuitively, we should be able to prove
the relational loop invariant
$ \p{r}(\I1) = \p{op}(\p{r}(\I2), \p{r}(\I3)) $
with a lockstep proof.
However, even though the high-level proof strategy of this example is a lockstep alignment,
the lockstep-based logics are unable to provide a modular proof.
There are two main obstacles.
First, the loop invariant sketched above applies \p{op} and thus cannot be
readily represented as an assertion.
Second, the properties of \p{op} we want to rely on have mixed arities,
and we would need to apply them to runs of \p{op} in
the bodies of the loops, the initial call to $\p{op}(b,c)$
and the one in the loop invariant, simultaneously.

By contrast, the ability of manipulating nested and mixed arity WPs
of \thelogic{} provides a modular proof of this example.
The goal can be expressed as:
\begin{equation*}
  \begin{conj}[c]
    \WP{\m[\I 4: \p{op}(b\code{,} c)]}{\ret.\ret[\I 4] = d}
    \and
    \p{r}(\I1) = \p{r}(\I2) = \p{r}(\I3) = 0
    \and
    \p{i}(\I1) = \p{i}(\I2) = \p{i}(\I3)
  \end{conj}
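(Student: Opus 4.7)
The plan is to prove the distributivity goal compositionally by a lockstep proof of the three \p{f}-loops at indices $\I 1, \I 2, \I 3$, using a relational loop invariant that records the distributive relation via a nested WP on a fresh auxiliary index, in the indirect style of \cref{sec:overview:loops}. The natural invariant is that the three counters agree and that $\p{r}(\I 1)$ equals $\p{op}(\p{r}(\I 2), \p{r}(\I 3))$, but since this last equation mentions a fresh \p{op}-application I would express it via a WP at a fresh index $\I 4$:
\[
  J \,\is\, \p{i}(\I 1){=}\p{i}(\I 2){=}\p{i}(\I 3) \,\land\,
    \E v_1, v_2, v_3.\bigl(\textstyle\bigwedge_{k} \p{r}(\I k){=}v_k \,\land\,
      \WP{\m[\I 4: \p{op}(v_2, v_3)]}{\ret.\ret[\I 4] = v_1}\bigr).
\]
Initially all $v_k = 0$, so property (ii) ($\p{op}(0,0) = 0$) makes the inner WP trivially witnessed. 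Upon loop exit the counters are equal to the common bound, and the inner WP, combined with determinism of \p{op} and the precondition's $\I 4$-WP for $\p{op}(b,c) = d$, discharges the distributivity conclusion.

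The real content is preservation. After one iteration the accumulators become $v_1' = \p{op}(v_1, d)$, $v_2' = \p{op}(v_2, b)$ and $v_3' = \p{op}(v_3, c)$ with $d = \p{op}(b,c)$. To re-establish $J$ I must show
\[
  \WP{\m[\I 4: \p{op}(\p{op}(v_2, b), \p{op}(v_3, c))]}{\ret.\ret[\I 4] = \p{op}(v_1, d)}\,.
\]
Substituting the old invariant's witness $v_1 = \p{op}(v_2, v_3)$, this reduces to the purely algebraic identity $\p{op}(\p{op}(v_2, b), \p{op}(v_3, c)) = \p{op}(\p{op}(v_2, v_3), \p{op}(b, c))$, which follows from associativity (property (iii), 4-safety) and commutativity (property (iv), 2-safety). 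I would stage the intermediate \p{op}-applications on fresh auxiliary indices, compose instances of the abstract 4-ary associativity and 2-ary commutativity hyper-triples using \ref{rule:wp-conj} and \ref{rule:wp-nest}, and then project out the auxiliary indices with \ref{rule:wp-proj}, in the transitivity-style pattern of \cref{sec:overview:proj}; each projectability side condition is discharged by property (ii).

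The hard part will be the index bookkeeping in the preservation step. The abstract relational specifications of associativity and commutativity use their own canonical indices, and aligning them with $\I 1, \I 2, \I 3, \I 4$ and with the several additional fresh indices needed to stage the algebraic rewrite chain requires repeated applications of \ref{rule:wp-idx-post}, \ref{rule:wp-idx-swap} and \ref{rule:reindex}; moreover, each \ref{rule:wp-conj} step needs its $\idx$ side conditions checked so that every postcondition is confined to the indices of its own hyper-term. Property (i) (that \p{op} does not touch \p{f}'s local variables) is used throughout via \ref{rule:wp-frame} to carry the counter-equality conjunct and the previously-computed value witnesses unchanged through each body iteration.
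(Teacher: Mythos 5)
Your proposal matches the paper's own proof in \appendixref{sec:ex-distrib-aligned}: the same lockstep \ref{rule:wp-while} with the same invariant (counters equal, existential witnesses $v_1,v_2,v_3$ for the accumulators, and a nested WP on a fresh index asserting $\p{op}(v_2,v_3)=v_1$, carried alongside the $\p{op}(b,c)=d$ WP via \ref{rule:wp-const}), initialization by $\p{op}(0,0)=0$, and preservation reduced to $\p{op}(\p{op}(v_2,b),\p{op}(v_3,c))=\p{op}(\p{op}(v_2,v_3),\p{op}(b,c))$ discharged by staging commutativity and associativity instances on auxiliary indices with \ref{rule:wp-conj}, \ref{rule:wp-nest}, the reindexing rules, and \ref{rule:wp-proj} with projectability from totality. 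The only nit is that the fresh index for the invariant's nested WP must differ from the $\I4$ already occupied by the $\p{op}(b,c)$ WP in the precondition (the paper uses $\I5$).
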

  \proves
    \WP
    {\m< \I 1: \p{f}(a\code{,} d)
       , \I 2: \p{f}(a\code{,} b)
       , \I 3: \p{f}(a\code{,} c)
    >}*{
      \begin{array}{l}
      \E v_1,v_2,v_3.\\
      \quad
      \begin{conj}
        \p{r}(\I1) = v_1 \land
        \p{r}(\I2) = v_2 \land
        \p{r}(\I3) = v_3
        \\
        \WP{\m[\I 5: \p{op}(v_2,v_3)]}{
          \ret. \ret[\I 5] = v_1
        }
      \end{conj}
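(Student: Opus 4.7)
The plan is to exploit that, even though \p{op} is only known through hyper-triples, all three components execute exactly $a$ iterations of structurally identical loops, so a single application of \ref{rule:wp-while} in lockstep over $\set{\I1,\I2,\I3}$ on top of a suitable ``op-free'' loop invariant should reduce the goal to a pure loop-body reasoning that can be discharged from the assumed hyper-triples. Concretely, I would take the loop invariant
\[
  P \;\is\;
  \p{i}(\I1)=\p{i}(\I2)=\p{i}(\I3) \land
  \p{x}(\I1)=\p{x}(\I2)=\p{x}(\I3)=a \land {}
\]
\[
  \E v_1,v_2,v_3.\;
     \p{r}(\I1)=v_1 \land \p{r}(\I2)=v_2 \land \p{r}(\I3)=v_3 \land
     \WP{\m[\I5: \p{op}(v_2,v_3)]}{\ret.\ret[\I5]=v_1},
\]
together with the external fact $\WP{\m[\I4: \p{op}(b,c)]}{\ret.\ret[\I4]=d}$ kept as a pure ``side hypothesis'' (framed in through \ref{rule:wp-const} across the loop, which is sound because \p{op} does not touch \p{f}'s variables by assumption~\ref{prop:op-mods}). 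The nested WP in $P$ is the key device: it replaces the informal equation ``$\p{op}(\p{r}(\I2),\p{r}(\I3))=\p{r}(\I1)$'' by a logical assertion available in \thelogic{}, exactly the pattern used for the loop invariant in \cref{sec:overview:loops}.

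For the base case, all \p{r}'s are $0$ and \p{i}'s are $0$, so I need $\WP{\m[\I5:\p{op}(0,0)]}{\ret.\ret[\I5]=0}$, which follows from assumption~\ref{prop:op-fun} (projectability plus determinism of \p{op} with the point equation $\p{op}(0,0)=0$). After the three loops exit we have $\m{b}=_{\set{\I1,\I2,\I3}}0$ together with $P$, and the postcondition of \eqref{ex:commut-op:goal}'s analogue asks precisely for the existentials and nested WP already in $P$, so this side is essentially immediate.

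The main obstacle is the inductive step: assuming $P$ holds and all three guards are true, after running \p{r:=op(r,$e_i$);i:=i+1} in each component for $e_1=d$, $e_2=b$, $e_3=c$, one must re-establish $P$. This is where the compositionality rules must do real work. I would first use \ref{rule:wp-nest} to focus on the three simultaneous assignments to \p{r}, and use \ref{rule:wp-conj} to combine: the determinism/totality of \p{op} on each component (giving new witnesses $v_1',v_2',v_3'$ for the updated \p{r}'s via a 3-fold nested WP over indices $\I1,\I2,\I3$); the nested WP $\WP{\m[\I5:\p{op}(v_2,v_3)]}{\ret.\ret[\I5]=v_1}$ already given by the inductive $P$; and the frame-preserved external WP $\WP{\m[\I4:\p{op}(b,c)]}{\ret.\ret[\I4]=d}$. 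The remaining obligation is to supply a new nested WP witnessing $\WP{\m[\I5:\p{op}(v_2',v_3')]}{\ret.\ret[\I5]=v_1'}$, i.e.\ to show $\p{op}(\p{op}(v_2,b),\p{op}(v_3,c)) = \p{op}(v_1,d) = \p{op}(v_1,\p{op}(b,c))$. This is the ``4-way interchange law'' $\p{op}(\p{op}(A,B),\p{op}(C,D)) = \p{op}(\p{op}(A,C),\p{op}(B,D))$, which factors as five rewrites: four uses of the associativity hyper-triple (assumption~\ref{prop:op-assoc}) and one use of commutativity (assumption~\ref{prop:op-comm}). Each rewrite step is exactly of the pattern of \cref{ex:commut-op}, and will use \ref{rule:wp-conj-simpl} to pair the current ``canonical'' op-tree with the assumed op-tree, \ref{rule:wp-proj} (with the projectability side-condition discharged via assumption~\ref{prop:op-fun}) to eliminate the auxiliary index afterwards, and \ref{rule:wp-idx-post}/\ref{rule:wp-idx-swap} to realign the fresh indices introduced by the 4-ary associativity triple with the single ``slot'' $\I5$ of the invariant. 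The hard part is bookkeeping: each rewrite introduces fresh auxiliary indices and existential witnesses, and one must carefully apply projection and reindexing to keep the WP shape in line with $P$. Once this chain of rewrites is completed, $P$ is re-established and the \ref{rule:wp-while} premise is closed.
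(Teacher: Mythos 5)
Your proposal is correct and follows essentially the same route as the paper's proof: a lockstep application of \ref{rule:wp-while} over $\set{\I1,\I2,\I3}$ with a loop invariant whose key conjunct is the nested assertion $\E v_1,v_2,v_3.\,\p{r}(\I1)=v_1 \land \p{r}(\I2)=v_2 \land \p{r}(\I3)=v_3 \land \WP{\m[\I5: \p{op}(v_2,v_3)]}{\ret.\ret[\I5]=v_1}$, the base case discharged by $\p{op}(0,0)=0$, and the body reduced to the interchange law $\p{op}(\p{op}(v_2,b),\p{op}(v_3,c))=\p{op}(\p{op}(v_2,v_3),\p{op}(b,c))$, proved by a five-step chain of commutativity/associativity rewrites each implemented with \ref{rule:wp-conj}, \ref{rule:wp-proj} (projectability from totality of \p{op}) and the reindexing rules. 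The only deviations are harmless: the paper factors the interchange law as three associativity and two commutativity steps rather than four-plus-one, and the fact $\WP{\m[\I4: \p{op}(b,c)]}{\ret.\ret[\I4]=d}$ must be carried as a conjunct of the loop invariant itself (as the paper does) rather than ``framed across'' the \ref{rule:wp-while} application, since that rule only makes the invariant available inside the loop-body obligation.
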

      \end{array}
    }
\end{equation*}
Note the use of nested WPs to invoke \p{op} on~$b$ and~$c$,
and on the results of components~\I2 and~\I3.

Using the same style, we can represent (and prove!)
the desired loop invariant:
\[
    \begin{conj}[c]
      \p{i}(\I1) = \p{i}(\I2) = \p{i}(\I3)
      \and
      \WP{\m[\I 4: \p{op}(b\code{,} c)]}{\ret.\ret[\I 4] = d}
    \end{conj}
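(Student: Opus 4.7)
The plan is to apply \ref{rule:wp-while} at index set $I=\set{\I1,\I2,\I3}$ to the three parallel \code{while$\;$(i$\;$<$\;$x)$\;$do$\;$(r:=op(r,$\cdot$); i:=i+1)} loops arising from the components $\p{f}(a,d)$, $\p{f}(a,b)$, $\p{f}(a,c)$, taking the displayed conjunction (plus the elided accumulator-relation clause that ties $\p{r}(\I1)$ to a nested WP over $\p{r}(\I2),\p{r}(\I3)$) as the relational loop invariant. The two visible conjuncts hold \emph{initially} for free: $\p{i}(\I1)=\p{i}(\I2)=\p{i}(\I3)$ is part of the goal's precondition (all counters are $\code{0}$), and $\WP{\m[\I4:\p{op}(b\code{,}c)]}{\ret.\ret[\I4]=d}$ is exactly the precondition clause that introduces~$d$ in the first place---it is carried into the invariant verbatim.

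The bulk of the work lies in \emph{preservation} under one in-sync iteration. First, the three guards $\p{i}<\p{x}$ evaluate to the same boolean in all components because $\p{x}$ is not touched by $\p{op}$ (assumption~\ref{prop:op-mods}) and the counters agree by hypothesis; this collapses the disjunction in the premise of \ref{rule:wp-while} to the true-guard branch. Since each body increments $\p{i}$ by $1$, counter-equality is re-established trivially after lockstep execution. For the nested-WP conjunct, I would use \ref{rule:wp-frame}: by assumption~\ref{prop:op-mods}, $\p{op}$ does not modify $\p{f}$'s locals, so in particular the variables $b,c,d$ occurring in $\WP{\m[\I4:\p{op}(b\code{,}c)]}{\ret[\I4]=d}$ are disjoint from $\mods$ of every body; moreover, index~\I4 is disjoint from the body indices $\set{\I1,\I2,\I3}$, so by the semantics of hyper-terms (\cref{def:hyper-bigstep}) the store at~\I4 is preserved untouched. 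The side condition $\pvar(P)\inters\mods(\m{t})=\emptyset$ of \ref{rule:wp-frame} is therefore discharged and the WP conjunct rides the body to the next iteration.

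The \emph{main obstacle} is not the two visible conjuncts themselves but the role they play: they are the ``scaffolding'' that makes the elided accumulator-relation conjunct provable. Counter-equality is what forces lockstep on the guards, and the carried $\WP{\m[\I4:\p{op}(b\code{,}c)]}{\ret[\I4]=d}$ is what allows the inductive step to link a body call $\p{op}(\p{r},d)$ in component~\I1 to the two body calls $\p{op}(\p{r},b)$ and $\p{op}(\p{r},c)$ in components~\I2,~\I3 via associativity and commutativity. In existing lockstep logics neither piece of scaffolding is expressible: a relational loop invariant cannot itself contain a WP, so one cannot remember ``$d$ equals $\p{op}(b,c)$'' abstractly, and one cannot carry the 4-ary associativity spec into an invariant of a 3-ary loop. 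Here, \ref{rule:wp-frame} applied to a WP-valued invariant conjunct, together with \ref{rule:wp-nest} to later float it to the top, is precisely the mechanism that lets \thelogic{} treat hyper-triples as first-class assertions inside loop invariants.
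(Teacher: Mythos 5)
Your proposal matches the paper's own treatment (carried out in the appendix on aligned loop distributivity): the displayed conjuncts are taken verbatim from the precondition to establish the invariant initially, guard agreement and counter equality drive the lockstep \ref{rule:wp-while} application, and the $\WP{\m[\I4: \p{op}(b\code{,}c)]}{\ret.\ret[\I4]=d}$ conjunct is preserved by \ref{rule:wp-frame} precisely because index~\I4 is disjoint from the body's indices, with the heavy lifting deferred to the accumulator conjunct via associativity and commutativity. No gaps; this is essentially the same argument.
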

    \land
    \E v_1,v_2,v_3.
\begin{conj}[c]
    \p{r}(\I1) = v_1 \land
    \p{r}(\I2) = v_2 \land
    \p{r}(\I3) = v_3
    \and
      \WP{\m[\I 5: \p{op}(v_2,v_3)]}{
        \ret. \ret[\I 5] = v_1
      }
    \end{conj}
\]

\subsection{Parsimonious Use of Refinement}
\label{sec:discuss-refinement}

Relational logics in the literature try to overcome some of the limitations
of lockstep reasoning by using refinement
to allow richer alignments between programs.
Our observation is that the required refinements can be recast in many cases
as hypersafety proofs.
\thelogic\ is flexible enough to prove these refinements \emph{within} the logic,
and to apply them using the hyper-structure rules.

Consider again the program
of \cref{fig:f-code} specialized with $\p{op}(x,y)\is x + y$.
We can prove that the program enjoys distributivity
on the \emph{first} argument as well.
In the form of a hyper-triple:\begin{equation}
  \J |- {\LAnd_{j\in\set{\I1,\I2,\I3}} \p{r}(j)=\p i(j)=0}
        {\m[\I 1: \p{f}(a+b, c),
          \I 2: \p{f}(a, c),
          \I 3: \p{f}(b, c)
        ]}
        {\p{r}(\I 1) = \p{r}(\I 2) + \p{r}(\I 3)}
  \label{spec:mult-distrib1}
\end{equation}
All three components in the judgment are while loops,
but they do not iterate the same number of times.
We therefore cannot apply the lockstep while rule.
Intuitively, we want to argue that the first~$a$ iterations
of component~\I{1} are matched exactly by component~\I{2},
and the remaining~$b$ iterations by component~\I{3}.
To formally implement the above strategy,
we replace component~\I{1} with an equivalent program
(here at index~\I4)
that splits the while loop into two loops:
\begin{equation}
\J |- {\LAnd_{j\in\set{\I1,\I2,\I4}} \p{r}(j)=\p i(j)=0}
        {\m[\I 4: \p{f}(a, c)\p;\p{f}(a{+}b, c),
          \I 2: \p{f}(a, c),
          \I 3: \p{f}(b, c)
        ]}
        {\p{r}(\I 4) = \p{r}(\I 2) + \p{r}(\I 3)}
  \label{spec:mult-distrib1-split}
\end{equation}
The term at~\I{4} may look like a typo, but it is not:
\p{f} does not initialize its variables,
so the second call will find in \p{i} the value left by the previous call.
It is easy to discharge this version of the hyper-triple
with a lockstep derivation.

To close the gap between
\eqref{spec:mult-distrib1} and~\eqref{spec:mult-distrib1-split}
we need to prove the equivalence between component~\I{1} and~\I{4},
which can be encoded by the auxiliary hyper-triple:
\begin{equation}
\J |- {\p{r}(\I1)=\p{r}(\I4) \\ \p{i}(\I1)=\p{i}(\I4)}
        {\m<\I 1: \p{f}(a+b\code{,} c),
          \I 4: \p{f}(a\code{,} c)\p;\p{f}(a+b\code{,} c)
        >}
        {\p{r}(\I1)=\p{r}(\I4) \\ \p{i}(\I1)=\p{i}(\I4)}
  \label{spec:mult-split}
\end{equation}

Then we would be able to apply~\ref{rule:wp-proj} and \ref{rule:wp-cons}
to replace component~\I{4} for component~\I{1} in our original goal:
\[\infer*[right=\ref{rule:wp-proj}]{
  \infer*[Right=\ref{rule:wp-conj}]{
    \eqref{spec:mult-distrib1-split}\;
    \J |- {\dots} {\m[\I4: \p f(a,c)\p;\p f(a{+}b,c),\I2: \p f(a,c),\I3: \p f(b,c)]} {\dots}
    \\
    \eqref{spec:mult-split}\;
    \J |- {\dots} {\m[\I1: \p f(a{+}b,c),\I4: \p f(a,c)\p;\p f(a{+}b,c)]} {\dots}
  }{
    \J |- {\dots}
          {\m[\I1: \p f(a{+}b,c),\I2: \p f(a,c),\I3: \p f(b,c),\I 4: \p f(a,c)\p;\p f(a{+}b,c)]}
          {\dots}
  }}{
    \eqref{spec:mult-distrib1}\;
    \J |- {\dots}
          {\m[\I1: \p f(a{+}b,c),\I2: \p f(a,c),\I3: \p f(b,c)]}
          {\dots}
  }
\]

We have thus reduced the original goal to proving~\eqref{spec:mult-split}.
The derivation, shown in \appendixref{sec:ex-loop-splitting},
uses the \ref{rule:wp-refine} rule only once
using the most basic refinement for while loops,
loop unfolding:
$
  \code{while$\;g\;$do$\;t$}
  \semeq
  \code{if$\;g\;$then$\;(t$;while$\;g\;$do$\;t)$}
$.

As we noted in \cref{sec:logic},
the hyper-triple encoding of equivalence for non-deterministic programs
is stronger than semantic equivalence.
This means that for some programs
the above proof pattern might not allow replacing some~$t_1$ for some~$t_2$,
even when $t_1 \semleq t_2$ holds.
A precise characterization of the limits of this proof pattern is an interesting
direction of future research.
For instance, one could ask which proofs are possible in our logic
if the \ref{rule:wp-refine} rule is \emph{replaced} by the special case
of unfolding one iteration of a loop.

\subsection{Divide and Conquer}
\label{sec:discuss-divide-and-conquer}
The hyper-structure rules of \thelogic{} are a useful tool to
decompose proofs into more tractable sub-goals.
The advantages of this are two-fold.
First, in some cases, they permit the proof engineer
to perform the formal decomposition following the same line
of a natural informal argument.
Second, smaller/simpler subgoals
are more likely provable using off-the-shelf automatic provers.

Take again the example in \cref{fig:f-code},
fixing $\p{op}(x,y)\is x + y$.
We have shown how to prove the two distributivity properties,
which we can informally summarize as
$\p{f}(a,b+c) = \p{f}(a,b) + \p{f}(a,c)$
and
$\p{f}(a+b,c) = \p{f}(a,c) + \p{f}(b,c)$.
These two properties imply distributivity on \emph{both} arguments:
$\p{f}(a+b,c+d) = \p{f}(a,c) + \p{f}(b,c) + \p{f}(a,d) + \p{f}(b,d)$.
The intuitive argument breaks the property as the conjunction of
$
  \p{f}(a+b,c+d) = \p{f}(a+b,c) + \p{f}(a+b,d)
$,
$
  \p{f}(a+b,c) = \p{f}(a,c) + \p{f}(b,c)
$, and
$
  \p{f}(a+b,d) = \p{f}(a,d) + \p{f}(b,d)
$,
which are instances of the distributivity properties we already established.
\thelogic\ can replicate the same simple argument,
on the hyper-triple encodings of these informal equations.
The conjunction of the three equations above can be handled using \ref{rule:wp-conj}.
\Cref{rule:wp-proj} can remove the auxiliary components
running the terms~$\p{f}(a+b,c)$ and $\p{f}(a+b,d)$
(see \appendixref{sec:distr-both}).
This illustrates how our logic allows the top-level goal
to be decomposed into simpler goals within the logic,
before the programs in the hyper-triples are even considered.

\subsection{Hyper-triples as Goals and as Assumptions}
\label{sec:discuss-idempotence}

Existing work on relational logics in the literature does not discuss the general problem of how to encode an arbitrary verification goal as a hyper-triple.
Conceptually, the high-level goal can often be formulated as some
semi-formal assertion involving ``calls'' to code
(e.g.~$ \p{f}(x) \leq \p{g}(x) $);
then a formalization in terms of hyper-triples can be obtained by
introducing a component for each ``call''
(e.g.~$ \V |- \WP{\m[\I1: \p{f}(x), \I1: \p{g}(x)]}{\ret.\ret[\I1]\leq\ret[\I2]} $).
While this procedure seems obvious enough in simple instances,
it can be surprisingly ambiguous in general.
Consider the case of idempotence,
which we already partially addressed in \cref{sec:overview:reindex}.
Informally, the goal looks like $t \sim (t\p; t)$,
which can plausibly be formalized in (at least) three different ways:
\begin{align}
  &\J |- { \pv{x}(\I1)=\pv{x}(\I2) }
        {\m[\I 1: t, \I 2: (t\p; t)]}
        { \pv{x}(\I1)=\pv{x}(\I2) }
  \tag{\ref{spec:t-idemp-seq}}
  \label{spec:t-idemp2-seq}
  \\
  &\J |- { \pv{x}(\I1)=\pv{x}(\I2) \land \pv{x}(\I3)=\vec{v} }
        {\m[\I 1: t, \I 2: t, \I 3: t]}
        { \pv{x}(\I2)=\vec{v} \implies
            \pv{x}(\I1)=\pv{x}(\I3) }
  \tag{\textsc{Idem}$^3_t$}
  \label{spec:t-idemp3}
  \\
  &\J |- { \pv{x}(\I2)=\vec{v} }
        {\m[\I 1: t, \I 2: t]}
        { \pv{x}(\I1)=\vec{v} \implies \pv{x}(\I2)=\vec{v} }
  \tag{\ref{spec:t-idemp}}
  \label{spec:t-idemp2-ndet}
\end{align}

The most verbatim translation of the informal goal is~\eqref{spec:t-idemp2-seq}.
The hyper-triple~\eqref{spec:t-idemp3} uses indirect-style to feed the output of \I2 as the input of \I3,
and considers each occurrence of~$t$
in the informal equation $t \sim (t\p; t)$ as a separate component.
\citet{SousaD16} propose~\eqref{spec:t-idemp2-ndet}
which also uses indirect-style,
but conflates components \I1 and \I2 of~\eqref{spec:t-idemp3}
into the single component \I1.

A natural question arises about the differences between these formulations.
Thanks to the generality of \thelogic{} rules,
this question can be investigated within the logic.
For example, we can prove:
\begin{enumerate*}[label=(\roman*)]
  \item $\eqref{spec:t-idemp3} \implies \eqref{spec:t-idemp}$
  \item $\eqref{spec:t-idemp3} \implies \eqref{spec:t-idemp-seq}$
  \item $
    \bigl( \eqref{spec:t-idemp} \land (\textsc{Det}_t) \bigr)
      \implies \eqref{spec:t-idemp3}
  $
  \item $
    \bigl( \eqref{spec:t-idemp3} \land \proj(t) \bigr)
      \implies (\textsc{Det}_t)
  $
\end{enumerate*}
where~(\textsc{Det}$_t$) is the hyper-triple asserting
determinism of~$t$ on $\pv{x}$.
All the implications in the other directions do not hold.
All the details are in \appendixref{sec:specs-for-idemp}.

As we explained in \cref{sec:overview:reindex},
even in the cases where the specifications are equivalent,
there is a tension between which hyper-triples are easier to \emph{prove},
\ie they are better as goals, and the ones which are easier to \emph{use},
\ie they are better as assumptions.
There is no silver bullet: different proofs may need idempotence
to be expressed as relating occurrences of~$t$ in the same component, or two or three components.
If the logic cannot perform the inter-derivations we just examined,
choosing one specification over the other
may make it impossible to prove some valid goals.
\thelogic{} is unique among relational logics in supporting the derivation of one specification from the other.

\subsection{Hyper-triple Semantics and Termination}
\label{sec:discuss-design}

Some relational logics, notably~\cite{Benton04,Yang07},
propose an ``equi-termination'' semantics:
given a hyper-store satisfying the precondition,
a component can diverge only if all the others do.
In a language with non-determinism, the more appropriate semantics
would be ``\emph{may} equi-termination'':
if a component \emph{may} terminate, then all the others \emph{may} terminate as well.
Notice that the (may) equi-termination semantics encodes a property that
goes beyond \pre k-safety: it quantifies over traces existentially.

The equi-termination semantics has some useful consequences for the rules.
For example, the \ref{rule:wp-proj} rule would not need the projectability side condition.
The \ref{rule:wp-conj} rule, on the other hand, would only be sound if
$\m{t}_1$ and $\m{t_2}$ overlap on at least one component.
In fact the [R-Tr] rule of \cite{Benton04} can be seen as a binary special case
of a combination of \ref{rule:wp-proj} and overlapping \ref{rule:wp-conj}
which is only sound in the equi-termination semantics.

The equi-termination semantics nicely supports the soundness
of the lockstep \ref{rule:wp-while} rule:
if the guards are always either all true or all false,
the while loops are equi-terminating;
note that this is weaker than proving termination of the loops
(no variant-based reasoning is needed).

The equi-termination semantics, however, does not validate
some important rules like the left to right direction of \ref{rule:wp-nest},
and the \ref{rule:wp-idx-post} rule.

A more powerful extension of our logic with support for equi-termination WPs would annotate hyper-triples with an equivalence relation~$\approx$
on its indexes.
An assertion $ \wpsymb_\approx\,\m{t}\,\{Q\}$ would---in addition to the constraints of the current definition---require that, for every hyper-store~$\m{s}$ satisfying~$P$,
if $\bigsome{\m{t}(i)}{\m{s}(i)}$ and $i \approx j$,
then $\bigsome{\m{t}(j)}{\m{s}(j)}$.
This would make it feasible to derive the most precise equi-termination
guarantees each rule can provide,
given the equivalence relations that hold in the premises.
In \ref{rule:wp-proj}, the projectability side condition
could be then replaced by $\forall i\in I\st \exists j \notin I. i\approx j$.
More ambitiously, as we speculate in \cref{sec:conclusion},
termination-related properties could become
first-class citizens in a logic for \emph{hyperliveness}.
 \section{Related work}
\label{sec:related-work}

There is a large body of work on relational verification.
Here, we survey and compare against the most closely related work.

\paragraph{Relational Hoare Logic and product programs}
The seminal Relational Hoare Logic~(RHL) of~\cite{Benton04}
is a program logic with judgments on pairs of programs supporting
variations of lockstep proofs.
RHL and its extension to Separation Logic~\cite{Yang07}
have been used to prove a wide range of relational properties
of first-order imperative programs, from correctness of optimisations
to information-flow.
\citet{BartheDR04}
introduced self-composition
as a relatively complete method to reason about properties like
non-interference and information flow,
an approach further developed in
\citet{DarvasHS05,BeringerH07,Beringer11,TerauchiA05}.
Building on this idea,
\citet{BartheCK11} proposed product programs as a way to
reduce relational verification of \pre 2-properties to standard Hoare logic reasoning on a single program, without losing the relational flavour of the proofs.
The technique has been incorporated in program logics and extended to
handle probabilistic properties~\cite{BartheGHS17,BartheKOB13}
like differential privacy or security of cryptographic code.
\citet{KovacsSF13} incorporates this approach in abstract interpretation.
Cartesian Hoare Logic~(CHL)~\cite{SousaD16}
is an extension of RHL to handle \pre k-safety properties,
for fixed~$k$.
The utility of going beyond \pre 2-properties is shown by
providing hyper-triple-like specifications for common correctness checks
that arise in application domains like comparators of user-defined types,
\eg transitivity, associativity, and commutativity.
The problem of alignment of while loops is addressed through
a limited set of rules that implicitly construct product programs through derivation sequences. \citet{GodlinS13} proposed \emph{regression verification} for proving equivalence between two \emph{similar} programs using automatically inferred alignments.

We build on these logics by introducing patterns of reasoning
that involve hyper-triples of mixed arities---i.e.~transforming a hyper-triple goal of arity~$k$ into hyper-triple subgoals with arities potentially different from~$k$.
No prior relational logics support such reasoning.
See \cref{sec:logic} and \cref{sec:discuss-design}
for a more technical distinction between \thelogic's design decisions
and other logics.

We briefly commented on relative completeness of \thelogic\ in
\cref{sec:hyper-str-rules}.
\citet{NagasamudramN21} studies completeness of various combinations
of enhanced lockstep binary rules in terms of the classes of product programs
(called alignments) they can represent.
Characterising, in the same spirit, the
new classes of alignments expressible through \thelogic,
is interesting future work.

\paragraph{Logics for proving refinement}
Another set of program logics commonly labelled as ``relational''
are the ones which aim to prove refinement between two programs,
often with a focus on concurrency,
e.g.~\cite{LiangFF12,FruminKB18}.
Refinement is a property that quantifies both universally
and existentially over traces, and is thus not a hypersafety property (since the latter quantifies universally over all traces).
Moreover, all refinement judgments are between exactly two programs.
However, since proving refinement often retains the ``relational'' flavour of proofs that we seek in proving hypersafety,
exploring a single logic encompassing
both kinds of relational reasoning is an interesting path for future research---see~\cref{sec:conclusion}.

\paragraph{Higher-order logics}
Relational proofs can also be conducted in a higher-order logic framework.
Approaches like Relational Higher-Order Logic~(RHOL)~\cite{AguirreBGGS19}
propose rules that can derive relational proofs of \pre 2-safety properties
of pure higher-order functional programs.
Notably, RHOL can embed various quantitative/probabilistic extensions
as special cases of the general framework.
This approach tends to work well for programs that can be expressed directly
as pure expressions in the host meta-logic (e.g.~Coq).
\thelogic\ obtains compositional relational reasoning in a \emph{first-order}
program logic that supports direct treatment of non-deterministic impure code.
We establish WPs on hyper-terms as the right building block to embed
such programs in assertions.

\paragraph{Automation}
While we do not consider automation in this paper,
part of the motivation for considering relational proofs
is that they are more feasible for automation,
 whereas functional-specification-based proofs of the same properties
would be completely out of reach of current algorithms. 
Consider, as an example, the relational proof of distributivity
of multiplication (e.g.~our running example~\p{f} when~\p{op} is addition).
The proof can be done by exclusively using conjunctions of equalities and additions, which is within the power of Presburger arithmetic, a decidable theory.
A functional-specification-based proof of the same property
would require the use of mathematical multiplication in assertions,
which requires the use of undecidable Peano arithmetic. 

This observation has been exploited in various works
that use product programs~\cite{BartheCK11,EilersMH20}.
In~\cite{FarzanV19} this idea is taken even further by showing
that it is possible to automate the \emph{search of appropriate product programs}
itself, even in the more general case of \pre k-safety.
\citet{ShemerGSV19} and \citet{UnnoTK21} push the boundaries of automatically inferrable enhanced lockstep proofs, with the latter supporting hyper-liveness too.

All these works are based on whole-program enhanced lockstep proofs,
leading to searches that are global and do not scale well beyond small programs. Our logic outlines how a monolithic task like this can be effectively decomposed.
It would be possible to integrate these tools in our logic
with the goal of discharging as many sub-hypertriples involved in a proof as possible.
The compositionality achieved by our proof system can make 
these tools applicable to smaller, more tractable instances,
increasing the scalability of the overall method.

\paragraph{Model checking}
HyperLTL and HyperCTL$^*$~\cite{ClarksonFKMRS14}
are temporal logics which have been proposed
as specification formalisms for hyperproperties
(with \pre k-safety as a special case). MultiLTL~\cite{GoudsmidGS21} extends the idea to the case where a hyperproperty
property involves~$k$ different finite-state models. They target finite-state systems, and therefore substantially differ in scope compared to this work. Temporal logics specify hyperproperties as global properties of sets of traces,
which does not lead naturally to the kind of decomposition of proofs
that we seek to obtain with \thelogic. The hope is that compositional program logics like ours can inspire interesting decomposition heuristics for these monolithic verification tasks in the same style that compositional concurrent program logics inspired algorithms for model checking of concurrent systems \cite{GuptaPR11,McMillan99a,FlanaganQS02}.

 \section{Conclusion and Future work}
\label{sec:conclusion}

We introduced \thelogic, proved it sound,
and showed how a handful of new rules
can unlock important compositional proof principles for hypersafety.

\thelogic{} is just a first step in exploring
the change of perspective of embracing hyper-triples as proof building blocks.
A first interesting line for future work is to investigate how the
compositionality and locality dimension added to proofs by Separation Logic~\cite{Yang07}
interacts with the hyper-triple compositions of \thelogic.
In addition to supporting proofs of heap-manipulating programs,
such an extension could pave the way to support for parallel composition.

Hypersafety properties such as determinism, non-interference, and correctness of parallelizing optimizations are examples of important applications
of an extension of \thelogic{} supporting concurrency.
The key challenge would be to blend rely/guarantee-style reasoning
together with the relational flavour of hypersafety verification.
In particular, obtaining rely/guarantee proofs that avoid the use of strong functional specifications looks like a very intriguing research problem.

An important next step is exploring automation.
The rules of \thelogic{} generate a search space that is different from those of
previous automated work~(\eg \cite{FarzanV19,UnnoTK21}) in two key senses:
  (1) As a consequence of expressivity of \thelogic{},
      the search space is larger; and
  (2) as a counter effect to (1),
      its structure offers enticing opportunities for proof reuse and
      compositional arity-changing proof patterns.
None of the heuristics used in previous relational logics
handle this dimension of the search.

An even more ambitious direction for future research is the extension
of \thelogic{} to reason about hyperproperties beyond hypersafety,
such as \emph{hyperliveness}~\cite{ClarksonS08}.
For example, it would be interesting to extend hyper-triples
from hypersafety statements of the form $\forall^n$ to
hyperliveness statements of the form $\forall^n\exists^m$.
As an enticing by-product, such a logic would embed both
\pre n-safety triples \emph{and refinement}
(which is a $\forall\exists$ hyperproperty)
in a single generalized judgment.
Understanding the proof principles available for such a judgment
would be a way to unify hypersafety-based and refinement-based relational reasoning.
 
\begin{acks}
  The authors would like to thank
  Deepak Garg and Viktor Vafeiadis for their useful comments
  on an early version of the paper.

  This work was supported by a
  \grantsponsor{ERC}{European Research Council}{https://erc.europa.eu/} (ERC)
  Consolidator Grant for the project ``PERSIST'' under the European Union's
  Horizon 2020 research and innovation programme
  (grant agreement No.~\grantnum{ERC}{101003349}).
\end{acks}

\setlabel{LAST}
\label{paper-last-page}

\appendix

\section{Program semantics}
\label{sec:program-semantics}

We specify the programming language using a big-step semantics
with judgment $\bigstep{t}{s}{v}{s'}$.

\begin{mathpar}
  \let\RightTirName\RuleNameLbl
  \infer*{ }{
    \bigstep{\code{skip}}{s}{v}{s}
  }

  \infer*{ }{
    \bigstep{v}{s}{v}{s}
  }

  \infer*{ }{
    \bigstep{\p{x}}{s}{s(\p{x})}{s}
  }

  \infer*{ }{
    \bigstep{*}{s}{v}{s}
  }

  \infer*{
    \bigstep{e_1}{s}{v_1}{s'}
    \\
    \bigstep{e_2}{s'}{v_2}{s''}
  }{
    \bigstep{e_1 \oplus e_2}{s}{\sem{\oplus}(v_1, v_2)}{s''}
  }

  \infer*{
    \bigstep{e}{s}{v}{s'}
  }{
    \bigstep{x=e}{s}{v}{s'\m[\p{x}: v]}
  }

  \infer*{
    \bigstep{t_1}{s}{\_}{s'}
    \\
    \bigstep{t_2}{s'}{\_}{s''}
  }{
    \bigstep{t_1\p; t_2}{s}{v}{s''}
  }

  \infer*{
    \bigstep{g}{s}{b}{s'}
    \\
    b \ne 0
    \\
    \bigstep{t_1}{s'}{v}{s''}
  }{
    \bigstep{\code{if$\;g\;$then$\;t_1\;$else$\;t_2\;$}}{s}{v}{s''}
  }

  \infer*{
    \bigstep{g}{s}{b}{s'}
    \\
    b = 0
    \\
    \bigstep{t_2}{s'}{v}{s''}
  }{
    \bigstep{\code{if$\;g\;$then$\;t_1\;$else$\;t_2\;$}}{s}{v}{s''}
  }

  \infer*[right=ws$_0$]{
    \bigstep{g}{s}{0}{s'}
  }{
    \bigstep{\code{while$\;g\;$do$\;t$}}{s}{v}{s'}
  }
  \label{rule:while-sem-0}

  \infer*[right=ws$_1$]{
    \bigstep{g}{s_0}{b}{s_1}
    \\
    b \ne 0
    \\
    \bigstep{t}{s_1}{\_}{s_2}
    \\
    \bigstep{\code{while$\;g\;$do$\;t$}}{s_2}{v}{s_3}
  }{
    \bigstep{\code{while$\;g\;$do$\;t$}}{s_0}{v}{s_3}
  }
  \label{rule:while-sem-1}
\end{mathpar}

The primitive operations~$\oplus$ have standard meaning:
\begin{align*}
  \sem{\p+}(v_1,v_2) & \is v_1 + v_2
  &
  \sem{\p-}(v_1,v_2) & \is v_1 - v_2
  &
  \sem{\p<}(v_1,v_2) & \is
    \begin{cases}
      1 \CASE v_1 < v_2\\
      0 \OTHERWISE
    \end{cases}
\end{align*}

 \section{Omitted rules}
\label{sec:omitted-rules}

We present here some omitted valid rules.
\Cref{fig:wp-primitives} shows rules to handle primitive operations;
they are entirely standard.

\begin{figure}[!htbp]
  \adjustfigure[\small]
  \begin{proofrules}
  \infer*[lab=wp-prim$_{\oplus}$]{}{
  \V
  \WP {\m[i: e_1]}{
    \fun\m{v}_1.
    \WP {\m[i: e_2]}{
      \fun\m{v}_2.
      \WP {\m[i: \m{v}_1(i) \oplus \m{v}_2(i)]}{Q}
    }
  }
  |-
  \WP {\m[i: e_1 \oplus e_2]}{Q}
}   \label{rule:wp-primop-eval}

  \infer*[lab=wp-prim$_{\oplus}$]{}{
  \V |- \WP {\m[i: v_1 \oplus v_2]}
            {\ret. \ret[i] = (v_1 \mathbin{\sem{\oplus}} v_2)}
}   \label{rule:wp-primop}

  \infer*[lab=wp-val]{}{
  \V |- \WP {\m[i: v]}{\ret.\ret[i] = v}
}   \label{rule:wp-val}

  \infer*[lab=wp-var]{}{
  \V |- \WP { \m[i: \p{x}] } { \ret. \ret(i) = \p{x}(i) }
}   \label{rule:wp-var}
  \end{proofrules}
  \caption{Weakest precondition laws: primitives.}
  \label{fig:wp-primitives}
\end{figure}

\Cref{fig:proj-t-laws} show two rules that explain the relation
between $\proj(\m{t})$ and WPs on~$\m{t}$.
\Cref{rule:proj-elim} says that when proving $\WP{\m{t}}{Q}$,
the assumption of~$\proj(\m{t})$ is redundant:
when $\m{t}$ is not projectable, the WP is vacuously true for any~$Q$.
By using projectability, we can also provide \ref{rule:wp-elim},
which can be seen as a bi-directional version of frame.
If~$P$ does not predicate on variables that are modified by~$\m{t}$,
whether~$P$ is asserted on the current hyper-store or in the output
hyperstore of~$t$ should not matter; provided, that is, that~$\m{t}$ is projectable.

\begin{figure}[!htbp]
  \adjustfigure[\small]
  \begin{proofrules}
    \infer*[lab=proj-elim]{}{
  \proj(\m{t}) \implies \WP {\m{t}} {Q}
  \proves
  \WP {\m{t}} {Q}
}
     \label{rule:proj-elim}

    \infer*[lab=wp-elim]{
  \pvar(P) \inters \mods(\m{t}) = \emptyset
}{
  \WP {\m{t}} {P}
  \lequiv
  \proj(\m{t}) \implies P
}
     \label{rule:wp-elim}
  \end{proofrules}
  \caption{Projectability rules}
  \label{fig:proj-t-laws}
\end{figure}

In fact, as shown in \cref{sec:soundness},
the rules in \cref{fig:proj-t-laws} can be used to derive
\ref{rule:wp-triv},
\ref{rule:wp-const}, and \ref{rule:wp-proj}. 

\begin{figure}[!htbp]
  \adjustfigure[\small]
  \begin{proofrules}
    \infer*[lab=wp-skip]{}{
  \WP {(\m{t} \m. \m[i: \code{skip}])} {Q}
  \lequiv
  \WP {\m{t}} {Q}
}     \label{rule:wp-skip}

    \infer*[lab=wp-empty]{}{
  P
  \lequiv
  \WP {\m[]} {P}
}
     \label{rule:wp-empty}

    \infer*[lab=wp-impl-l]{
  \pvar(P) \inters \mods(\m{t}) = \emptyset
}{
  \V
  \proj(\m{t}),
  \WP{\m{t}}{\ret. Q(\ret) \implies P}
  |-
  \WP{\m{t}}{Q} \implies P
}     \label{rule:wp-impl-l}

\infer*[lab=wp-rename]{
  \V \Gamma |- \WP{\m{t}}{Q}
  \\
  \pi\ \mathrm{bij.}
}{
  \V \Gamma\isub*{\pi} |- \WP{\m{t}\isub*{\pi}}{Q\isub*{\pi}}
}     \label{rule:wp-rename}

    \infer*[lab=wp-seq-plus]{
  \V \Gamma
  |- \WP{(\m[i: t_i | i \in I] \m. \m{t}_1)}{
        \WP{(\m[i: t_i' | i \in I] \m. \m{t}_2)}{Q}
     }
}{
  \V \Gamma
  |- \WP{(\m[i: (t_i\code{;}\ t'_i) | i \in I] \m. \m{t}_1 \m. \m{t}_2)}{Q}
}     \label{rule:wp-seq-plus}

\infer*[
  lab=wp-indirect,
  Right={$\begin{smallmatrix*}[l]
    j \notin \idx(\Gamma)
    \\
    j \notin \supp(\m{t})
  \end{smallmatrix*}$}
]{
  \V \Gamma |- \WP{\m[i: t_1]}{\E v. \at{A(v)}{i}}
  \\
  \forall v\st\ \V
  \Gamma, \at{A(v)}{j}
  |-
  \WP{(\m[i: t_1, j: t_2] \m. \m{t})}{
    \at{A(v)}{i} \implies Q(v)
  }
}{
  \V
  \Gamma
  |-
  \WP{(\m[i: t_1] \m. \m{t})}*{
      \E v.
      \at{A(v)}{i}
      \land
      \WP{\m[i: t_2]}{
        \bigl( \PP i. Q(v) \bigr)\isub{j->i}
      }
  }
}
     \label{rule:wp-indirect}
\end{proofrules}
  \caption{Derived rules for weakest precondition.}
  \label{fig:wp-derived-laws}
\end{figure}

The rules in \cref{fig:wp-derived-laws} collect some useful derived rules.
The most interesting is \ref{rule:wp-indirect} which represents
the general pattern that converts an indirect-style hyper-triple
into a direct style one.
 \section{Soundness}
\label{sec:soundness}

The soundness of many rules is standard or straightforward.
We detail the proofs of the non-standard rules.

We first prove some simple auxiliary lemmas.
We use the following notation.
For~$X \subs (\PVar \times \Idx)$,
and hyper-stores~$\m{s},\m{s}'$,
the relation $ \m{s} =_X \m{s}' $ states that
$ \forall (\p{x}, i) \in X \st \m{s}(i)(\p{x}) = \m{s}'(i)(\p{x}) $.

\begin{lemma}
\label{lm:pvar-P-agree}
  If $P(\m{s})$ and $\m{s} =_{\pvar(P)} \m{s}'$
  then $P(\m{s}').$
\end{lemma}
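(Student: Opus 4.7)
The plan is to reduce the simultaneous replacement of $\m{s}$ by $\m{s}'$ to a sequence of single-coordinate overwrites, each one justified directly by the defining property of $\pvar$. Specifically, let $D \is \set*{ (\p{x},i) \in \PVar \times \Idx | \m{s}(i)(\p{x}) \ne \m{s}'(i)(\p{x}) }$ be the set of coordinates on which the two hyperstores disagree; the hypothesis $\m{s} =_{\pvar(P)} \m{s}'$ is equivalent to $D \cap \pvar(P) = \emptyset$. Hence every $(\p{x},i) \in D$ lies in the complement of $\pvar(P)$, and by definition of $\pvar(P)$ we have, for every hyperstore $\m{s}_0$ and every value $v$, the equivalence $P(\m{s}_0) \iff P(\m{s}_0\m[i: \m{s}_0(i)\m[\p{x}: v]])$.

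I would then induct on an enumeration of $D$, starting from $\m{s}_0 \is \m{s}$ (for which $P$ holds by hypothesis). At step $k{+}1$, construct $\m{s}_{k+1}$ from $\m{s}_k$ by overwriting its $k$-th disagreeing coordinate $(\p{x}_k, i_k)$ with the value $\m{s}'(i_k)(\p{x}_k)$; a direct appeal to the equivalence above gives $P(\m{s}_k) \iff P(\m{s}_{k+1})$, hence $P(\m{s}_{k+1})$. After exhausting $D$, the resulting hyperstore agrees with $\m{s}'$ at every coordinate---inside $D$ by construction, and outside $D$ by the definition of $D$---so it equals $\m{s}'$, yielding $P(\m{s}')$.

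The main obstacle is the case where $D$ is infinite, in which a finite induction does not reach $\m{s}'$ in the limit. The cleanest resolution is a preliminary structural induction over the grammar of \cref{fig:hyper-assertions}, showing that every syntactically built hyper-assertion depends on only finitely many coordinates, so that $D$ is necessarily finite in all uses of the lemma arising from the paper. With this finite-support property in hand, the enumeration above is finite and the main induction concludes. A purely semantic reading of the lemma (allowing arbitrary $P \in \HAssrt$) would additionally require either an explicit finite-support hypothesis on $P$ or a stronger, simultaneous-independence definition of $\pvar$; in the paper's setting this strengthening is harmless since the over-approximation disclaimer following \cref{def:assrt-idx} already permits replacing the semantic $\pvar$ by any sound syntactic over-approximation.
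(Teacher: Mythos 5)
Your proposal is correct, and it takes a genuinely different (and in fact more careful) route than the paper. The paper argues by contraposition: assuming $P(\m{s})$ and $\neg P(\m{s}')$, it picks a coordinate $(\p{x},i)$ at which $\m{s}$ and $\m{s}'$ differ and concludes ``by definition'' that $(\p{x},i)\in\pvar(P)$. But the definition of $\pvar(P)$ only quantifies over hyperstores differing at a \emph{single} coordinate, so that step implicitly requires exactly the decomposition into one-coordinate overwrites that you make explicit---together with the finiteness of the disagreement set $D$ needed for your induction to terminate. Your worry about infinite $D$ is not hypothetical: for the purely semantic reading, take $P(\m{s})$ to be ``$\m{s}$ is nonzero at only finitely many coordinates of $\PVar\times\Idx$''; every single-coordinate change preserves $P$, so $\pvar(P)=\emptyset$, hence $\m{s}=_{\pvar(P)}\m{s}'$ holds for \emph{every} pair of hyperstores, yet $P$ is not constant. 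So the lemma (and its $\idx$ analogue) fails for arbitrary $P\in\HAssrt$ and genuinely needs the finite-support restriction you identify, which does hold for every assertion generated by the grammar of \cref{fig:hyper-assertions}. In short: the paper's proof is shorter because it buries the coordinate-by-coordinate induction inside an unjustified ``by definition,'' whereas yours surfaces both the induction and the finiteness hypothesis that actually makes the statement true; the only thing to add is that the finite-support side condition should be stated as part of the lemma (or of \cref{def:hyper-assertion}) rather than left implicit.
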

\begin{proof}
  Suppose $P(\m{s})$ holds but $P(\m{s}')$ does not.
  Then there exists a $(\p{x},i)$ such that
  $ \m{s}(i)(\p{x}) \ne \m{s}'(i)(\p{x}) $.
  Then, by definition, $ (\p{x},i) \in \pvar(P) $.
  But the hypothesis that $\m{s} =_{\pvar(P)} \m{s}'$
  would then imply that
  $ \m{s}(i)(\p{x}) = \m{s}'(i)(\p{x}) $,
  which is a contradiction.
\end{proof}

\begin{lemma}
\label{lm:non-mods-t-agree}
  If $ X \inters \mods(\m{t}) = \emptyset $ and
  $\bigstep{\m{t}}{\m{s}}{\m{v}}{\m{s}'}$,
  then $\m{s} =_{X} \m{s}'$.
\end{lemma}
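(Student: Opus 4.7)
The plan is to first establish the unary analogue and then lift it pointwise to hyper-terms. Concretely, I would state and prove the auxiliary lemma: if $Y \subseteq \PVar$ with $Y \cap \mods(t) = \emptyset$ and $\bigstep{t}{s}{v}{s'}$, then $s(\p{x}) = s'(\p{x})$ for every $\p{x} \in Y$. This is a routine induction on the derivation of $\bigstep{t}{s}{v}{s'}$ (see \cref{sec:program-semantics}), with the only substantive case being assignment $\p{x}\code{:=}\,e$, where $\p{x} \in \mods(t)$, so the premise forces $\p{x} \notin Y$, leaving the updated binding irrelevant to $Y$; sequencing, conditionals, and loops follow by combining the inductive hypotheses on the sub-derivations, using that $\mods$ over-approximates modifications through each constructor.

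For the main statement, fix an arbitrary $(\p{x},i) \in X$; the goal is $\m{s}(i)(\p{x}) = \m{s}'(i)(\p{x})$. I would case split on whether $i \in \supp(\m{t})$ using \cref{def:hyper-bigstep}. If $i \notin \supp(\m{t})$, the semantics gives $\m{s}(i) = \m{s}'(i)$ directly, and we are done. If $i \in \supp(\m{t})$, then by the definition of $\mods$ on hyper-terms, $\mods(\m{t}) \supseteq \{(\p{y},i) \mid \p{y} \in \mods(\m{t}(i))\}$. Since $X \cap \mods(\m{t}) = \emptyset$ and $(\p{x},i) \in X$, we get $\p{x} \notin \mods(\m{t}(i))$. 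The hypothesis $\bigstep{\m{t}}{\m{s}}{\m{v}}{\m{s}'}$ gives the component-wise derivation $\bigstep{\m{t}(i)}{\m{s}(i)}{\m{v}(i)}{\m{s}'(i)}$, to which we apply the unary lemma with $Y = \PVar \setminus \mods(\m{t}(i))$ (which contains $\p{x}$) to conclude $\m{s}(i)(\p{x}) = \m{s}'(i)(\p{x})$.

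Since $(\p{x},i) \in X$ was arbitrary, this establishes $\m{s} =_X \m{s}'$ by the definition of the relation. There is no real obstacle here: the work is entirely in the unary induction, and the lifting to hyper-terms is immediate from the pointwise definition of the hyper big-step judgment and of $\mods$ on hyper-terms.
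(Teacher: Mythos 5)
Your proof is correct and is exactly the standard argument the paper has in mind when it dismisses this lemma as ``Straightforward'': a routine induction for the unary case (with assignment as the only store-modifying construct) followed by the pointwise lift via \cref{def:hyper-bigstep} and the definition of $\mods$ on hyper-terms. Nothing to add.
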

\begin{proof}
  Straightforward.
\end{proof}

We start with the projectability rules in \cref{fig:proj-t-laws}
as they are useful to derive other rules.

\begin{lemma}[Soundness of~\ref{rule:proj-elim}]
\label{lm:rule:proj-elim}
  If $
    \proj(\m{t})(\m{s}) \implies (\WP {\m{t}} {Q})(\m{s})
  $ then $
    (\WP {\m{t}} {Q})(\m{s}).
  $
\end{lemma}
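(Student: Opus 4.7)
The plan is to unfold the definitions and do a straightforward case analysis on whether $\proj(\m{t})(\m{s})$ holds at the hyper-store $\m{s}$ under consideration.

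Recall that by definition $\proj(\m{t})(\m{s}) \is \bigsome{\m{t}}{\m{s}}$, which is the existence of some $\m{v}, \m{s}'$ with $\bigstep{\m{t}}{\m{s}}{\m{v}}{\m{s}'}$, while $(\WP{\m{t}}{Q})(\m{s})$ is the universal statement $\forall \m{v}, \m{s}'\st \bigstep{\m{t}}{\m{s}}{\m{v}}{\m{s}'} \implies Q(\m{v})(\m{s}')$. The key observation is that $\WP{\m{t}}{Q}$ is a partial-correctness statement, so it is vacuously true precisely when $\m{t}$ has no terminating run from $\m{s}$, i.e.~when $\proj(\m{t})(\m{s})$ fails.

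First, if $\proj(\m{t})(\m{s})$ holds, then the hypothesis $\proj(\m{t})(\m{s}) \implies (\WP{\m{t}}{Q})(\m{s})$ gives $(\WP{\m{t}}{Q})(\m{s})$ immediately by modus ponens. Otherwise, $\neg \proj(\m{t})(\m{s})$ means there are no $\m{v}, \m{s}'$ with $\bigstep{\m{t}}{\m{s}}{\m{v}}{\m{s}'}$, so the universally quantified implication in the definition of $\WP{\m{t}}{Q}$ is vacuously satisfied at $\m{s}$, yielding $(\WP{\m{t}}{Q})(\m{s})$.

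There is no real obstacle here; the proof is a one-line unfolding once the two definitions are laid side by side. The lemma is essentially recording, on the semantic level, the tautology $(A \implies B) \implies B$ whenever $\neg A$ already entails $B$ on its own, which is exactly what the vacuous-truth direction of $\WP{\m{t}}{Q}$ provides.
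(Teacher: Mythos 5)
Your proof is correct and matches the paper's argument in substance: both hinge on the observation that a terminating run $\bigstep{\m{t}}{\m{s}}{\m{v}}{\m{s}'}$ witnesses $\proj(\m{t})(\m{s})$, so the hypothesis can be discharged. The paper phrases this directly (assume a run, derive projectability, apply the hypothesis) rather than via your explicit case split on $\proj(\m{t})(\m{s})$, but the two presentations are interchangeable for this one-line lemma.
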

\begin{proof}
  We have to prove $
    \forall \m{v}, \m{s}' \st
      \bigstep{\m{t}}{\m{s}}{\m{v}}{\m{s}'} \implies Q(\m{v})(\m{s}')
  $.
  Assume $ \bigstep{\m{t}}{\m{s}}{\m{v}}{\m{s}'} $.
  Then, by definition, $ \proj(t)(\m{s}) $ holds.
  We can therefore apply our hypothesis and get
  $ Q(\m{v})(\m{s}') $ as desired.
\end{proof}

\begin{lemma}[Soundness of~\ref{rule:wp-elim}]
\label{lm:rule:wp-elim}
  If
  $\pvar(P) \inters \mods(\m{t}) = \emptyset$
  then,
  $
    {\WP {\m{t}} {P}
    \lequiv
    \proj(\m{t}) \implies P}.
  $
\end{lemma}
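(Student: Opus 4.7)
The plan is to prove the two directions separately, using the two auxiliary lemmas (\cref{lm:pvar-P-agree,lm:non-mods-t-agree}) already established, together with the definitions of $\WP{\m{t}}{P}$ and $\proj(\m{t})$. The key conceptual observation, which makes the equivalence work, is that the side condition $\pvar(P) \inters \mods(\m{t}) = \emptyset$ guarantees that any run of $\m{t}$ leaves the truth value of $P$ unchanged, so $P$ holds in the output hyper-store iff it holds in the input hyper-store. With that in hand, the WP over $P$ collapses to a simple implication guarded by the existence of \emph{some} terminating trace.

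For the forward direction ($\WP{\m{t}}{P} \proves \proj(\m{t}) \implies P$), I fix a hyper-store $\m{s}$ satisfying $\WP{\m{t}}{P}$ and assume $\proj(\m{t})(\m{s})$. By definition of $\proj$, there exist $\m{v},\m{s}'$ with $\bigstep{\m{t}}{\m{s}}{\m{v}}{\m{s}'}$, so the WP hypothesis yields $P(\m{s}')$. By \cref{lm:non-mods-t-agree} applied with $X := \pvar(P)$ (whose disjointness from $\mods(\m{t})$ is the side condition), we have $\m{s} =_{\pvar(P)} \m{s}'$, and \cref{lm:pvar-P-agree} transports $P$ backwards from $\m{s}'$ to $\m{s}$.

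For the backward direction ($\proj(\m{t}) \implies P \proves \WP{\m{t}}{P}$), I unfold the WP: given any $\m{v},\m{s}'$ with $\bigstep{\m{t}}{\m{s}}{\m{v}}{\m{s}'}$, I must show $P(\m{s}')$. That derivation itself witnesses $\proj(\m{t})(\m{s})$, so the hypothesis gives $P(\m{s})$, and then the same pair of lemmas (\cref{lm:non-mods-t-agree,lm:pvar-P-agree}) transports $P$ forwards from $\m{s}$ to $\m{s}'$.

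I don't expect any real obstacle: once the side condition is used to invoke \cref{lm:non-mods-t-agree}, the argument is symmetric and purely definitional. The only subtlety worth flagging is the coercion of the hyper-assertion $P$ to the post hyper-assertion $\fun\_.P$ implicit in $\WP{\m{t}}{P}$, which is why $\pvar(P)$ (rather than $\pvar$ of a post hyper-assertion) is the right set to use when applying the lemmas.
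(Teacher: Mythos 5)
Your proof is correct and matches the paper's argument essentially verbatim: both directions obtain a terminating run (from $\proj(\m{t})$ in one direction, from the assumed big-step judgment in the other) and then transport $P$ between input and output hyper-stores via \cref{lm:non-mods-t-agree} and \cref{lm:pvar-P-agree}, exactly as the paper does. No gaps.
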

\begin{proof}
  We prove the two directions separately.
  \begin{casesplit}
    \case*[Direction $(\proves)$]
      From $\proj(\m{t})(\m{s})$ we obtain that
      $\bigstep{\m{t}}{\m{s}}{\m{v}}{\m{s}'}$
      for some~$\m{v}$ and~$\m{s}'$.
      From the hypothesis $(\WP {\m{t}} {P})(\m{s})$
      we get that~$ P(\m{s}') $.
      By \cref{lm:non-mods-t-agree}~$\m{s} =_{\pvar(P)} \m{s}'$,
      so by \cref{lm:pvar-P-agree}, $ P(\m{s}) $.
    \case*[Direction $(\provedby)$]
      Assume
      $\bigstep{\m{t}}{\m{s}}{\m{v}}{\m{s}'}$.
      Then by definition $ \proj(\m{t})(\m{s}) $ holds.
      We can then apply the hypothesis and obtain~$P(\m{s})$.
      By \cref{lm:non-mods-t-agree}~$\m{s} =_{\pvar(P)} \m{s}'$,
      so by \cref{lm:pvar-P-agree}, $ P(\m{s}) $.
      \qedhere
  \end{casesplit}
\end{proof}

\begin{lemma}[Soundness of~\ref{rule:reindex}]
\label{lm:rule:reindex}
  If $ \forall\hat{\m{s}}\st \Gamma(\hat{\m{s}}) \implies P(\hat{\m{s}}) $,
  then $ \forall\m{s}\st \Gamma(\m{s}\isub*{\pi}) \implies P(\m{s}\isub*{\pi}) $
\end{lemma}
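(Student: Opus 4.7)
The plan is to observe that this is essentially a one-line instantiation argument. Fix an arbitrary hyper-store $\m{s}$ and assume $\Gamma(\m{s}\isub*{\pi})$ holds. I would like to conclude $P(\m{s}\isub*{\pi})$. To do so, I simply instantiate the hypothesis $\forall \hat{\m{s}}\st \Gamma(\hat{\m{s}}) \implies P(\hat{\m{s}})$ with $\hat{\m{s}} \is \m{s}\isub*{\pi}$; this is well-typed because $\m{s}\isub*{\pi}$ is itself a hyper-store, namely the hyper-store whose value at index $i$ is $\m{s}(\pi(i))$, by the definition of reindexing in \cref{def:finmaps}.

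Applying the instantiated hypothesis to our assumption $\Gamma(\m{s}\isub*{\pi})$ yields $P(\m{s}\isub*{\pi})$, which is exactly the conclusion we needed. Since $\m{s}$ was arbitrary, this establishes the desired universally quantified statement.

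There is no real obstacle here: unlike the surrounding rules, \ref{rule:reindex} has no side conditions on $\pi$ (e.g.~bijectivity), and the semantic definition of reindexing on hyper-assertions, $P\isub*{\pi} \is \fun \m{s}. P(\m{s}\isub*{\pi})$, is tailored precisely so that applying a reindexing to an assertion amounts to pre-composing the hyper-store argument with $\pi$. Consequently, validity of an entailment under arbitrary hyper-stores transports to validity under the restricted class of hyper-stores of the form $\m{s}\isub*{\pi}$ without any extra work. The only subtlety worth flagging is that we rely on $\Gamma\isub*{\pi}$ being interpreted as $\LAnd_{A\in\Gamma} A\isub*{\pi}$, which is immediate from the convention (stated after \cref{def:entailment}) that $\Gamma$ is identified with the conjunction of its items together with \ref{rule:idx-conj}, so that $\Gamma\isub*{\pi}(\m{s})$ is equivalent to $\Gamma(\m{s}\isub*{\pi})$.
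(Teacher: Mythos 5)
Your proof is correct and is exactly the paper's argument: instantiate the hypothesis with $\hat{\m{s}} = \m{s}\isub*{\pi}$, which is a legitimate hyper-store by the definition of reindexing. The additional remarks about $\Gamma\isub*{\pi}$ and the absence of side conditions are accurate but not needed for the one-line instantiation.
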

\begin{proof}
  Follows immediately by applying the hypothesis
  with $ \hat{\m{s}} = \m{s}\isub*{\pi} $.
\end{proof}

\begin{lemma}[Soundness of~\ref{rule:idx-ex}]
\label{lm:rule:idx-ex}
  $(\E x.P(x))\isub*{\pi} \lequiv \E x.(P(x)\isub*{\pi})$.
\end{lemma}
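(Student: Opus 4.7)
The plan is to prove the equivalence by showing that both sides of $\lequiv$ have, literally, the same extension as hyper-assertions, via direct unfolding of the semantic definitions from \cref{fig:hyper-assertions,def:entailment}. Since $\lequiv$ just means two-way entailment, and entailment on hyper-assertions is pointwise implication on every hyper-store, it suffices to fix an arbitrary hyper-store $\m{s}$ and compare the two truth values.

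First I would unfold the left-hand side. By the definition $P\isub*{\pi} \is \fun\m{s}.P(\m{s}\isub*{\pi})$, we have $(\E x.P(x))\isub*{\pi}(\m{s}) = (\E x.P(x))(\m{s}\isub*{\pi})$, and then by the definition of $\E$ this is $\exists x\st P(x)(\m{s}\isub*{\pi})$. Next I would unfold the right-hand side: $\bigl(\E x.(P(x)\isub*{\pi})\bigr)(\m{s}) = \exists x\st (P(x)\isub*{\pi})(\m{s})$, and applying the reindexing definition inside the binder yields $\exists x\st P(x)(\m{s}\isub*{\pi})$. The two expressions are identical, so both entailments in the definition of $\lequiv$ hold trivially.

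There is essentially no obstacle: the lemma is a direct consequence of the fact that reindexing is defined semantically as precomposition with $\isub*{\pi}$ on the hyper-store argument, and this precomposition commutes with existential quantification over a variable $x$ that does not range over hyper-stores. The same pattern, unfold-and-observe-identical-formulas, also discharges the companion rules \ref{rule:idx-conj} and \ref{rule:idx-impl}; the only reindexing laws that require nontrivial reasoning are \ref{rule:idx-pvar} (which depends on the concrete semantics of $\p{x}(i)$) and \ref{rule:idx-irrel} (which invokes the semantic definition of $\idx(P)$).
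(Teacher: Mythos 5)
Your proof is correct and matches the paper's own argument: the paper likewise proves \cref{lm:rule:idx-ex} by unfolding the definitions of reindexing and $\E$ on an arbitrary hyper-store and observing that both sides reduce to $\exists x\st P(x)(\m{s}\isub*{\pi})$. Nothing is missing.
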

\begin{proof}
  By definition,
  $
    \bigl((\E x.P(x))\isub*{\pi}\bigr)(\m{s})
    \iff
    \exists v. P(v)(\m{s}\isub*{\pi})
    \iff
    \bigl(\E x.(P(x)\isub*{\pi})\bigr)(\m{s}).
  $
\end{proof}

\begin{lemma}[Soundness of~\ref{rule:proj-intro}]
\label{lm:rule:proj-intro}
  If $ \forall\hat{\m{s}}\st \Gamma(\hat{\m{s}}) \implies P(\hat{\m{s}}) $,
  then \[
    \forall\m{s}\st
      (\exists \m{s}'\st \Gamma(\m{s}\m[i: \m{s}'(i) | i\in I]))
      \implies
      (\exists \m{s}'\st P(\m{s}\m[i: \m{s}'(i) | i\in I])).
  \]
\end{lemma}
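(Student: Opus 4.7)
The plan is to prove this by direct unfolding of the definitions and reuse of the witness for the existential. The lemma is essentially the statement that the projection modality $\Pi_I$ is monotone with respect to entailment, which is intuitively clear: if $P$ follows from $\Gamma$ on every hyper-store, then in particular it follows on every hyper-store obtained by reassigning the stores at indices in $I$.

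Concretely, I would fix an arbitrary hyper-store $\m{s}$ and assume the antecedent of the implication to be proved: namely, that there exists some $\m{s}'$ with $\Gamma(\m{s}\m[i: \m{s}'(i) \mid i \in I])$. Letting $\hat{\m{s}} \is \m{s}\m[i: \m{s}'(i) \mid i \in I]$, this is precisely $\Gamma(\hat{\m{s}})$, so the hypothesis of the lemma gives us $P(\hat{\m{s}})$. To conclude, I reuse the \emph{same} witness $\m{s}'$ for the existential in the consequent: with this choice $P(\m{s}\m[i: \m{s}'(i) \mid i \in I]) = P(\hat{\m{s}})$ holds.

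There is no serious obstacle here; the only thing to be careful about is to notice that the hypothesis $\forall \hat{\m{s}}\st \Gamma(\hat{\m{s}}) \implies P(\hat{\m{s}})$ is being instantiated at the witnessed hyper-store, not at $\m{s}$ itself, so the witness $\m{s}'$ supplied by the antecedent is exactly what makes the application of the hypothesis go through, and then serves again as the witness for the conclusion. The proof is essentially a one-line unfolding, so in the paper I would present it as ``Immediate from the hypothesis, taking the same witness $\m{s}'$ on both sides.''
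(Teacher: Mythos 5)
Your proof is correct and is exactly the argument the paper gives: instantiate the hypothesis at $\hat{\m{s}} = \m{s}\m[i: \m{s}'(i) \mid i \in I]$ using the witness $\m{s}'$ from the antecedent, and reuse that same witness for the existential in the consequent. No differences worth noting.
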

\begin{proof}
  Assume $\Gamma(\m{s}\m[i: \m{s}'(i) | i\in I])$.
  Then by the hypothesis with~$\hat{\m{s}} = \m{s}\m[i: \m{s}'(i) | i\in I]$
  we get~$ P(\m{s}\m[i: \m{s}'(i) | i\in I]) $, which proves the statement.
\end{proof}

\begin{lemma}[Soundness of~\ref{rule:wp-const}]
\label{lm:rule:wp-const}
  If $ \pvar(P) \inters \mods(\m{t}) = \emptyset $
  then
  \[
    P \land \WP{\m{t}}{Q}
    \proves
    \WP{\m{t}}{P \land Q}.
  \]
\end{lemma}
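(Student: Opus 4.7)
The plan is a direct unfolding of the semantics of $\WP{\cdot}{\cdot}$ combined with the two auxiliary lemmas just established, \cref{lm:pvar-P-agree,lm:non-mods-t-agree}. Concretely, I would fix a hyper-store $\m{s}$ satisfying $(P \land \WP{\m{t}}{Q})(\m{s})$ and show $(\WP{\m{t}}{P \land Q})(\m{s})$. By \cref{def:wp}, this amounts to proving that for every $\m{v}, \m{s}'$ with $\bigstep{\m{t}}{\m{s}}{\m{v}}{\m{s}'}$, the conjunction $(P \land Q)(\m{v})(\m{s}')$ holds, which, since $P$ does not mention the return value, reduces to establishing $P(\m{s}')$ and $Q(\m{v})(\m{s}')$ separately.

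The conjunct $Q(\m{v})(\m{s}')$ is immediate from the assumption $(\WP{\m{t}}{Q})(\m{s})$ applied to the derivation $\bigstep{\m{t}}{\m{s}}{\m{v}}{\m{s}'}$. For the conjunct $P(\m{s}')$, the side condition $\pvar(P) \inters \mods(\m{t}) = \emptyset$ together with \cref{lm:non-mods-t-agree} instantiated at $X = \pvar(P)$ yields $\m{s} =_{\pvar(P)} \m{s}'$; combining this with the hypothesis $P(\m{s})$ via \cref{lm:pvar-P-agree} gives $P(\m{s}')$, as required.

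There is no genuine obstacle here: the proof is a routine semantic unfolding, and the only subtlety is making sure the upward-closure of post hyper-assertions on return values is used implicitly to lift $P$ (which is an ordinary hyper-assertion) into the postcondition position. As an aside, a more syntactic derivation is also available: $P$ trivially entails $\proj(\m{t}) \implies P$, so by \cref{lm:rule:wp-elim} we have $P \proves \WP{\m{t}}{P}$; conjoining with $\WP{\m{t}}{Q}$ and observing that two WPs over the same hyper-term fuse into one (a special case of the reasoning behind \ref{rule:wp-conj}) would give the desired conclusion. I would nonetheless prefer the direct semantic argument, since it makes the role of the side condition entirely transparent and does not depend on rules whose soundness is proved later in the development.
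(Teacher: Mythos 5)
Your proof is correct, but it takes a different route from the paper's. The paper does not unfold the semantics of $\WP{\m{t}}{\cdot}$ directly here; instead it gives a short derivation that is essentially the ``aside'' you mention: from $P \proves \proj(\m{t}) \implies P$ and \cref{lm:rule:wp-elim} it obtains $P \proves \WP{\m{t}}{P}$, and then fuses this with $\WP{\m{t}}{Q} \proves \WP{\m{t}}{Q}$ via \ref{rule:wp-conj} (whose side conditions are trivially met since the two hyper-terms coincide). The semantic content is ultimately the same --- the paper's proof of \cref{lm:rule:wp-elim} is exactly your combination of \cref{lm:non-mods-t-agree} and \cref{lm:pvar-P-agree} --- so the difference is one of factoring: the paper reuses already-packaged lemmas and gets a two-line derivation, while your direct argument is self-contained and makes the role of the side condition $\pvar(P) \inters \mods(\m{t}) = \emptyset$ visible at the point of use. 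Your stated reason for preferring the direct argument is also apt: the paper's derivation invokes \ref{rule:wp-conj}, whose soundness (\cref{lm:rule:wp-conj}) is only established later in the section --- there is no circularity, since that proof does not depend on \ref{rule:wp-const}, but your ordering-independent argument avoids even the appearance of one. One small point worth making explicit in your write-up: the conjunction $P \land Q$ in postcondition position is the post-hyper-assertion $\ret.\,P \land Q(\ret)$ per \cref{fig:hyper-assertions}, so no appeal to upward-closure is actually needed --- the constant lift of $P$ is trivially upward-closed and the goal splits syntactically into $P(\m{s}')$ and $Q(\m{v})(\m{s}')$ exactly as you argue.
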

\begin{proof}
  By the following derivation:
  \begin{derivation}
    \infer*[right=\ref{rule:wp-conj}]{
      \infer*{
        P \proves \proj(\m{t}) \implies P
        \\
        \infer*[Right=\ref{rule:wp-elim}]{
          \pvar(P) \inters \mods(\m{t}) = \emptyset
        }{
          \proj(\m{t}) \implies P \proves \WP{\m{t}}{P}
        }
      }{
        P \proves \WP{\m{t}}{P}
      }
      \\
      \infer*{}{
        \WP{\m{t}}{Q} \proves \WP{\m{t}}{Q}
      }
    }{
      P \land \WP{\m{t}}{Q}
      \proves
      \WP{\m{t}}{P \land Q}
    }
    \qedhere
  \end{derivation}
\end{proof}

\begin{lemma}[Soundness of~\ref{rule:wp-impl-r}]
\label{lm:rule:wp-impl-r}
  If $\pvar(P) \inters \mods(\m{t}) = \emptyset$
  then \[
    P \implies \WP{\m{t}}{Q}
    \lequiv
    \WP{\m{t}}{\ret. P \implies Q(\ret)}.
  \]
\end{lemma}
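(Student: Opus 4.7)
The plan is to prove the two directions of the equivalence separately by unfolding the semantic definition of $\WP{}{}$, closely mirroring the proof of \ref{rule:wp-elim} (\cref{lm:rule:wp-elim}). The two key facts needed in both directions are \cref{lm:non-mods-t-agree} (which tells us that any big-step $\bigstep{\m{t}}{\m{s}}{\m{v}}{\m{s}'}$ satisfies $\m{s} =_{\pvar(P)} \m{s}'$, thanks to the side condition $\pvar(P) \cap \mods(\m{t}) = \emptyset$) and \cref{lm:pvar-P-agree} (which transfers a fact $P$ between any pair of hyper-stores that agree on $\pvar(P)$). In combination these two lemmas yield $P(\m{s}) \iff P(\m{s}')$ whenever $\m{t}$ can step from $\m{s}$ to $\m{s}'$.

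For direction $(\proves)$, I assume $(P \implies \WP{\m{t}}{Q})(\m{s})$ and must establish $(\WP{\m{t}}{\ret.\, P \implies Q(\ret)})(\m{s})$. Unfolding $\WP{}{}$, I fix a big-step $\bigstep{\m{t}}{\m{s}}{\m{v}}{\m{s}'}$ and suppose $P(\m{s}')$; the goal becomes $Q(\m{v})(\m{s}')$. Applying \cref{lm:non-mods-t-agree} gives $\m{s} =_{\pvar(P)} \m{s}'$, and then \cref{lm:pvar-P-agree} transports $P(\m{s}')$ to $P(\m{s})$. The assumption yields $(\WP{\m{t}}{Q})(\m{s})$, and the chosen big-step produces $Q(\m{v})(\m{s}')$ as required.

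For direction $(\provedby)$, I assume $(\WP{\m{t}}{\ret.\, P \implies Q(\ret)})(\m{s})$ together with $P(\m{s})$, and must show $(\WP{\m{t}}{Q})(\m{s})$. Fixing a big-step $\bigstep{\m{t}}{\m{s}}{\m{v}}{\m{s}'}$, the hypothesis gives $P(\m{s}') \implies Q(\m{v})(\m{s}')$. Again \cref{lm:non-mods-t-agree} plus \cref{lm:pvar-P-agree}, this time applied in the forward direction, transfer $P(\m{s})$ to $P(\m{s}')$; discharging the implication yields $Q(\m{v})(\m{s}')$.

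I do not expect any real obstacles: the argument is entirely symmetric to \ref{rule:wp-elim} and relies only on the fact that the side condition $\pvar(P) \cap \mods(\m{t}) = \emptyset$ allows $P$ to be freely moved between the input and output hyper-stores of any run of $\m{t}$. The only subtlety worth noting is that $Q$ genuinely depends on the output hyper-store (and on $\m{v}$), so a purely syntactic derivation from \ref{rule:wp-elim}---which handles state-independent postconditions---does not seem to apply; the direct semantic argument is the natural approach.
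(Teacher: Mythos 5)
Your proof is correct and is essentially identical to the paper's: both directions unfold the semantic definition of WP and use \cref{lm:non-mods-t-agree} together with \cref{lm:pvar-P-agree} to transfer $P$ between the input and output hyper-stores, exactly as the paper does. No issues.
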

\begin{proof}
  We prove the two directions separately.
  \begin{casesplit}
  \case*[Direction $(\proves)$]
    Assume $\bigstep{\m{t}}{\m{s}}{\m{v}}{\m{s}'}$.
    We have to prove
      $P(\m{s}') \implies Q(\m{v})(\m{s}')$.
    Assume~$P(\m{s}')$, then by \cref{lm:non-mods-t-agree,lm:pvar-P-agree}
    we have~$P(\m{s})$.
    By assumption then we have $(\WP{\m{t}}{Q})(\m{s})$
    which gives us~$ Q(\m{v})(\m{s}') $ as desired.
  \case*[Direction $(\provedby)$]
    Assume $P(\m{s})$ and $\bigstep{\m{t}}{\m{s}}{\m{v}}{\m{s}'}$.
    By the hypotesis $ (\WP{\m{t}}{\ret. P \implies Q(\ret)})(\m{s}) $
    we then have that if $P(\m{s}')$ then $ Q(\m{v})(\m{s}') $.
    By \cref{lm:non-mods-t-agree,lm:pvar-P-agree}
    $P(\m{s}')$ holds, and we get the desired postcondition holds.
    \qedhere
  \end{casesplit}
\end{proof}

\begin{lemma}[Soundness of~\ref{rule:wp-idx}]
\label{lm:rule:wp-idx}
  Let $\pi$ be a bijective reindexing.
  Then, \[
    \V (\WP{\m{t}}{Q})\isub*{\pi} |- \WP{\m{t}\isub*{\pi}}{Q\isub*{\pi}}.
  \]
\end{lemma}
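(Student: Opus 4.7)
The plan is to reduce the entailment to a purely semantic invariance lemma relating the hyper big-step judgment under bijective reindexings, and then conclude by a direct definition chase. First I unfold both sides. The hypothesis $(\WP{\m{t}}{Q})\isub*{\pi}(\m{s})$ unfolds, by the definition of $\isub*{\pi}$ on hyper-assertions, to $(\WP{\m{t}}{Q})(\m{s}\isub*{\pi})$, i.e., whenever $\bigstep{\m{t}}{\m{s}\isub*{\pi}}{\m{v}}{\m{s}^*}$ we have $Q(\m{v})(\m{s}^*)$. The goal $(\WP{\m{t}\isub*{\pi}}{Q\isub*{\pi}})(\m{s})$ unfolds to the obligation: whenever $\bigstep{\m{t}\isub*{\pi}}{\m{s}}{\m{u}}{\m{s}'}$, the assertion $Q\isub*{\pi}(\m{u})(\m{s}') = Q(\m{u}\isub*{\pi})(\m{s}'\isub*{\pi})$ holds (by the definition of $\isub*{\pi}$ on post hyper-assertions).

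The crux is to prove the following semantic lemma for bijective $\pi$:
\[
  \bigstep{\m{t}\isub*{\pi}}{\m{s}}{\m{u}}{\m{s}'}
  \iff
  \bigstep{\m{t}}{\m{s}\isub*{\pi}}{\m{u}\isub*{\pi}}{\m{s}'\isub*{\pi}}.
\]
I would prove this by inlining \Cref{def:hyper-bigstep} on both sides and matching the per-index clauses. For a support index $i$ of $\m{t}\isub*{\pi}$, i.e.\ $\pi(i)\in\supp(\m{t})$, the LHS provides $\bigstep{\m{t}(\pi(i))}{\m{s}(i)}{\m{u}(i)}{\m{s}'(i)}$. Setting $j = \pi(i)$, this recasts via the definitional identities $\m{s}\isub*{\pi}(j) = \m{s}(\pi(j))$, $\m{u}\isub*{\pi}(j) = \m{u}(\pi(j))$, $\m{s}'\isub*{\pi}(j) = \m{s}'(\pi(j))$ into a per-index premise of the RHS at $j \in \supp(\m{t})$. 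Bijectivity of $\pi$ is what gives a one-to-one correspondence between $\supp(\m{t})$ and $\supp(\m{t}\isub*{\pi})$ and similarly ensures that, at each complementary index, the out-of-support store-preservation clauses on the two sides are the $\isub*{\pi}$-translates of one another. The converse direction runs symmetrically using $\pi^{-1}$.

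Given the lemma, the soundness argument concludes immediately: pick any $\m{u}, \m{s}'$ with $\bigstep{\m{t}\isub*{\pi}}{\m{s}}{\m{u}}{\m{s}'}$, apply the lemma to obtain $\bigstep{\m{t}}{\m{s}\isub*{\pi}}{\m{u}\isub*{\pi}}{\m{s}'\isub*{\pi}}$, feed this into the hypothesis to get $Q(\m{u}\isub*{\pi})(\m{s}'\isub*{\pi})$, which by the definition of $Q\isub*{\pi}$ is exactly $Q\isub*{\pi}(\m{u})(\m{s}')$, as required.

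The main obstacle is the index bookkeeping inside the invariance lemma: carefully tracking how $\pi$ sets up the correspondence between $\supp(\m{t})$ and $\supp(\m{t}\isub*{\pi})$, and verifying both the matched per-index big-step premises and the unmatched store-preservation clauses translate faithfully. Bijectivity is indispensable: without it the support correspondence collapses and the rule becomes unsound, which is precisely why the non-bijective variants receive the separate, more constrained rules of \Cref{fig:wp-idx-laws}. The rest is routine unfolding of the WP and reindexing definitions.
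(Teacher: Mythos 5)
Your proof follows the paper's argument essentially step for step: both reduce the entailment to a transport property of the hyper big-step judgment under bijective reindexing and then conclude by unfolding the definitions of the weakest precondition and of $\isub*{\pi}$ on (post) hyper-assertions. The only difference is that the paper asserts the transport in a single sentence (``since $\pi$ is a bijection\ldots''), while you sketch its per-index verification.

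That sketch, however, contains a step that does not check out, and it is worth spelling out because the paper's one-liner hides exactly the same problem. From the clause of $\bigstep{\m{t}\isub*{\pi}}{\m{s}}{\m{u}}{\m{s}'}$ at a support index $i$ you get $\bigstep{\m{t}(\pi(i))}{\m{s}(i)}{\m{u}(i)}{\m{s}'(i)}$; setting $j=\pi(i)$ this is $\bigstep{\m{t}(j)}{\m{s}(\pi^{-1}(j))}{\m{u}(\pi^{-1}(j))}{\m{s}'(\pi^{-1}(j))}$, whereas the clause of $\bigstep{\m{t}}{\m{s}\isub*{\pi}}{\m{u}\isub*{\pi}}{\m{s}'\isub*{\pi}}$ at $j$ demands $\bigstep{\m{t}(j)}{\m{s}(\pi(j))}{\m{u}(\pi(j))}{\m{s}'(\pi(j))}$, since $(\m{s}\isub*{\pi})(j)=\m{s}(\pi(j))$. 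These coincide only when $\pi=\pi^{-1}$. The transport that actually holds is
\[
  \bigstep{\m{t}\isub*{\pi}}{\m{s}}{\m{u}}{\m{s}'}
  \iff
  \bigstep{\m{t}}{\m{s}\isub*{\pi^{-1}}}{\m{u}\isub*{\pi^{-1}}}{\m{s}'\isub*{\pi^{-1}}},
\]
and the discrepancy is not cosmetic: for the $3$-cycle $\pi(\I1)=\I2$, $\pi(\I2)=\I3$, $\pi(\I3)=\I1$, with $\m{t}=\m[\I1: \code{x:=x+1}]$ and $Q=(\p{x}(\I1)=1)$, the hypothesis $(\WP{\m{t}}{Q})\isub*{\pi}$ reduces to $\p{x}(\I2)=0$ while the conclusion $\WP{\m{t}\isub*{\pi}}{Q\isub*{\pi}}$ (note $\m{t}\isub*{\pi}=\m[\I3: \code{x:=x+1}]$) reduces to $\p{x}(\I2)=1$, so the entailment fails. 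Everything goes through when $\pi$ is an involution, and the general statement can be repaired by putting $\pi^{-1}$ on the hyper-term (equivalently, by proving the transport in the form displayed above and threading the inverse through the definition chase); your concluding unfolding of the WP and of $Q\isub*{\pi}$ is then correct as written. Since the paper's proof relies on the identical (unjustified) transport claim, the two proofs stand or fall together here --- but as a self-contained argument, yours needs $i=\pi(j)$ rather than $j=\pi(i)$ to make the per-index correspondence land, and that is precisely where the inverse appears.
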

\begin{proof}
  Assume $ \bigstep{\m{t}\isub*{\pi}}{\m{s}}{\m{v}}{\m{s}'} $.
  We have to prove~$ Q(\m{v}\isub*{\pi})(\m{s}\isub*{\pi}) $.
  By definition, the hypotesis reads $ (\WP{\m{t}}{Q})(\m{s}\isub*{\pi}) $
  which means that for any $ \hat{\m{v}},\hat{\m{s}}'$,
  if $ \bigstep{\m{t}}{\m{s}\isub*{\pi}}{\hat{\m{v}}}{\hat{\m{s}}'} $,
  then $ Q(\hat{\m{v}})(\hat{\m{s}}') $.
  Since $\pi$ is a bijection, we have
  $ \bigstep{\m{t}\isub*{\pi}}{\m{s}}{\m{v}}{\m{s}'} $
  implies
  $ \bigstep{\m{t}}{\m{s}\isub*{\pi}}{\m{v}\isub*{\pi}}{\m{s}'\isub*{\pi}} $.
  Therefore, by setting
    $\hat{v} = \m{v}\isub*{\pi}$ and
    $\hat{\m{s}}'=\m{s}'\isub*{\pi}$
  we have proven our goal.
\end{proof}

\begin{lemma}[Soundness of~\ref{rule:wp-while}]
\label{lm:rule:wp-while}
  Let
  \begin{align*}
    \m{w} &= \m[i: \code{while}\ g_i\ \code{do}\ t_i | i \in I]  &
    \m{t} &= \m[i: t_i | i \in I]  &
    \m{g} &= \m[i: g_i | i\in I]
  \end{align*}
  If $
    \V P |-
    \WP {\m{g}}[\big]{
      \fun\m{b}.
      (\m{b} =_I 0 \land R)
      \lor
      (\m{b} \ne_I 0 \land \WP {\m{t}} {P})
    }
  $ then $
    \V P |-
    \WP {\m{w}} {R}.
  $
\end{lemma}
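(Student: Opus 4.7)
The plan is to unfold the definition of the WP and argue by induction. Assume $P(\m{s})$ and $\bigstep{\m{w}}{\m{s}}{\m{v}}{\m{s}'}$; the goal is to establish $R(\m{s}')$. By \cref{def:hyper-bigstep}, the hyper-big-step amounts to a per-component big-step $\bigstep{w_i}{\m{s}(i)}{\m{v}(i)}{\m{s}'(i)}$ for every $i \in I$, together with $\m{s}'(j) = \m{s}(j)$ for $j \notin I$. Each per-component derivation is a finite chain of \labelcref{rule:while-sem-1} applications terminated by one \labelcref{rule:while-sem-0} application; let $k_i$ denote the number of \labelcref{rule:while-sem-1} applications used for component $i$. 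I will induct on $k \is \max_{i\in I} k_i$.

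For the inductive step, I extract the guard evaluations that open each per-component derivation: for every $i\in I$ there exist $b_i$ and $s_i''$ with $\bigstep{g_i}{\m{s}(i)}{b_i}{s_i''}$. Assembling these yields $\bigstep{\m{g}}{\m{s}}{\m{b}}{\m{s}''}$, where $\m{b} = \m[i: b_i | i \in I]$, $\m{s}''(i) = s_i''$ for $i \in I$, and $\m{s}''(j) = \m{s}(j)$ elsewhere (note $\supp(\m{g}) = I$, so on $j \notin I$ the hyper-store is preserved by guard evaluation). Applying the premise at $\m{s}$ with $P(\m{s})$ gives a disjunction: either $\m{b} =_I 0 \land R(\m{s}'')$, or $\m{b} \ne_I 0 \land (\WP{\m{t}}{P})(\m{s}'')$.

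In the first case, $b_i = 0$ forces each per-component derivation to use \labelcref{rule:while-sem-0} (since \labelcref{rule:while-sem-1} requires a nonzero guard), hence $k_i = 0$ and $\m{s}'(i) = s_i''$ for $i\in I$; combined with agreement outside $I$ this gives $\m{s}' = \m{s}''$, so $R(\m{s}')$ follows immediately. In the second case, every per-component derivation uses \labelcref{rule:while-sem-1}, so the bodies collectively give $\bigstep{\m{t}}{\m{s}''}{\m{v}'}{\m{s}'''}$ for some $\m{v}',\m{s}'''$; the second disjunct then yields $P(\m{s}''')$. What is left of the original derivation is a sub-derivation $\bigstep{\m{w}}{\m{s}'''}{\m{v}}{\m{s}'}$ in which each component has iteration count $k_i - 1$, so its $\max$ is $k-1$. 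The induction hypothesis applies and delivers $R(\m{s}')$.

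The subtle point is the lockstep synchronization: a priori each component's while loop can take a different number of iterations, and the language is nondeterministic so the guards need not evaluate to the same value across different derivations. The proof hinges on the observation that \emph{once the derivation of $\bigstep{\m{w}}{\m{s}}{\m{v}}{\m{s}'}$ is fixed}, the guard values are determined, and the premise, applied at precisely those values, forces the either/or dichotomy that rules out mixed cases (some $b_i = 0$, others $b_j \ne 0$). That dichotomy is what makes both the base case and the decrement of the induction measure work uniformly over $I$.
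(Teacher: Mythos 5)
Your proof is correct and follows essentially the same route as the paper's: extract the guard evaluation from the fixed derivation, apply the premise to obtain the all-or-nothing dichotomy on the guard values, and recurse on the residual loop derivation in the nonzero case. The only difference is presentational: the paper informally inducts on ``the derivation depth'' of the hyper-big-step judgment, whereas you make the measure explicit as $\max_{i\in I} k_i$ over per-component iteration counts --- a slightly more careful rendering of the same argument, since the hyper-judgment is really a family of derivations rather than a single one.
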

\begin{proof}
  Assume $P(\m{s})$ and
  $
    \bigstep{\m{w}}{\m{s}_0}{\wtv}{\m{s}_2}
  $.
  We proceed by induction on the big-step semantics derivation.
  By definition of the big-step semantics,
  that can hold only if, for some $\m{b}$ and~$\m{s}_1$,
  $ \bigstep{\m{g}}{\m{s}}{\m{b}}{\m{s}_1} $.
  By hypothesis, we have two cases.
  \begin{casesplit}
  \case[$ \m{b} =_I 0 \land R(\m{s}_1) $]
    Then, by the big-step semantics, we have $\m{s}_1 = \m{s}_2$
    and $ R(\m{s}_2) $ holds as desired.
  \case[$ \m{b} \ne_I 0 \land (\WP {\m{t}} {P})(\m{s}_1) $]
    Then, by the big-step semantics,
    we have $\bigstep{\m{t}}{\m{s}_1}{\wtv}{\m{s}_1'}$
    and $\bigstep{\m{w}}{\m{s}_1'}{\wtv}{\m{s}_2}$.
    The latter has a lower derivation depth than the judgment we started with,
    so we get, by induction hypothesis, that
    $
      P(\m{s}_1') \implies R(\m{s}_2).
    $.
    From $(\WP {\m{t}} {P})(\m{s}_1)$ we get that
    $ P(\m{s}_1') $, which concludes the proof.
    \qedhere
  \end{casesplit}
\end{proof}

\begin{lemma}[Soundness of~\ref{rule:wp-refine}]
\label{lm:rule:wp-refine}
  If $(t_1 \semleq t_2)(\m{s}(i))$
  and  $(\WP {(\m[i: t_2] \m. \m{t})} {Q})(\m{s})$,
  then $(\WP {(\m[i: t_1] \m. \m{t})} {Q})(\m{s})$.
\end{lemma}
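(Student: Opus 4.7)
The plan is a direct unfolding argument mirroring the proofs of the other WP soundness lemmas. First I would unfold the goal $(\WP {(\m[i: t_1] \m. \m{t})} {Q})(\m{s})$ using \cref{def:wp}, so that it suffices to fix arbitrary $\m{v}$ and $\m{s}'$ with $\bigstep{(\m[i: t_1] \m. \m{t})}{\m{s}}{\m{v}}{\m{s}'}$ and derive $Q(\m{v})(\m{s}')$.

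Next I would inspect this hyper-term big-step judgment via \cref{def:hyper-bigstep}. It decomposes into three disjoint cases on the index: at~$i$ we have $\bigstep{t_1}{\m{s}(i)}{\m{v}(i)}{\m{s}'(i)}$ (since $i \notin \supp(\m{t})$ by the implicit side condition on $\m.$); at every $j \in \supp(\m{t})$ we have $\bigstep{\m{t}(j)}{\m{s}(j)}{\m{v}(j)}{\m{s}'(j)}$; and at all other indices the store is preserved and $\m{v}$ is undefined.

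The key step is to apply the refinement hypothesis $(t_1 \semleq t_2)(\m{s}(i))$: instantiating its definition with $v = \m{v}(i)$ and $s' = \m{s}'(i)$ converts the component at $i$ into $\bigstep{t_2}{\m{s}(i)}{\m{v}(i)}{\m{s}'(i)}$. Repackaging with the unchanged behaviour on $\m{t}$ and the preserved stores elsewhere, we rebuild $\bigstep{(\m[i: t_2] \m. \m{t})}{\m{s}}{\m{v}}{\m{s}'}$, again by \cref{def:hyper-bigstep}. Finally, we feed this derivation into the second hypothesis $(\WP {(\m[i: t_2] \m. \m{t})} {Q})(\m{s})$ to conclude $Q(\m{v})(\m{s}')$, as required.

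I do not expect any real obstacle: the lemma is essentially a definitional unfolding, and the only subtlety is making sure that the disjoint-union side condition $i \notin \supp(\m{t})$ is used to cleanly match the components of the hyper-term big-step judgment before and after the refinement is applied. The non-deterministic semantics does not complicate matters, since $\semleq$ is quantified universally over $(v, s')$ and our hypersemantics judgment fixes concrete $\m{v}(i)$ and $\m{s}'(i)$ to feed in.
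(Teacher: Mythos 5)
Your proof is correct and follows essentially the same route as the paper's: unfold the goal WP, use the refinement hypothesis at index~$i$ to convert the big-step derivation for $t_1$ into one for $t_2$, reassemble the hyper-term judgment for $\m[i: t_2] \m. \m{t}$, and conclude from the assumed WP. The extra detail you give about decomposing the hyper big-step judgment index-by-index is a correct elaboration of what the paper leaves implicit.
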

\begin{proof}
  Assume $ \bigstep{\m[i: t_1] \m. \m{t}}{\m{s}}{\m{v}}{\m{s}'} $.
  Then by the refinement, we have
  $ \bigstep{t_2}{\m{s}(i)}{\m{v}(i)}{\m{s}'(i)} $,
  which implies
  $ \bigstep{\m[i: t_2] \m. \m{t}}{\m{s}}{\m{v}}{\m{s}'} $.
  Then the statement follows from the assumed WP.
\end{proof}

\begin{lemma}[Soundness of~\ref{rule:wp-nest}]
\label{lm:rule:wp-nest}
  $
    \WP{\m{t}_1}{\fun\m{v}.
      \WP{\m{t}_2}{\fun\m{w}.
        Q(\m{v}\m.\m{w})}}
    \lequiv
    \WP{(\m{t}_1 \m. \m{t}_2)}{Q}.
  $
\end{lemma}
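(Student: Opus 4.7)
The plan is to unfold both sides to their semantic definitions (via Definition~\ref{def:wp} and Definition~\ref{def:hyper-bigstep}) and show that the set of triples $(\m{v}\m.\m{w}, \m{s}')$ of outputs obtained by the nested evaluation coincides exactly with the set of output triples $(\m{r}, \m{s}')$ of the combined evaluation $\bigstep{\m{t}_1\m.\m{t}_2}{\m{s}}{\m{r}}{\m{s}'}$. The key structural fact I would isolate first is a decomposition lemma: under the side condition $\supp(\m{t}_1)\inters\supp(\m{t}_2)=\emptyset$ (implicit in the well-formedness of $\m{t}_1\m.\m{t}_2$), we have $\bigstep{\m{t}_1\m.\m{t}_2}{\m{s}}{\m{r}}{\m{s}'}$ if and only if there exist $\m{v},\m{w},\m{s}_1$ with $\supp(\m{v})=\supp(\m{t}_1)$, $\supp(\m{w})=\supp(\m{t}_2)$, $\m{r}=\m{v}\m.\m{w}$, such that $\bigstep{\m{t}_1}{\m{s}}{\m{v}}{\m{s}_1}$ and $\bigstep{\m{t}_2}{\m{s}_1}{\m{w}}{\m{s}'}$.

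This lemma follows directly from Definition~\ref{def:hyper-bigstep}, which decomposes hyper-term evaluation index-by-index. For each $i\in\supp(\m{t}_1)$, the left-hand side forces $\bigstep{\m{t}_1(i)}{\m{s}(i)}{\m{r}(i)}{\m{s}'(i)}$; this is exactly what the first premise $\bigstep{\m{t}_1}{\m{s}}{\m{v}}{\m{s}_1}$ asserts (with $\m{v}(i)=\m{r}(i)$ and $\m{s}_1(i)=\m{s}'(i)$), and $\m{s}_1(i)=\m{s}(i)$ elsewhere, so the intermediate store $\m{s}_1$ is uniquely determined. Symmetrically for indices in $\supp(\m{t}_2)$, where $\m{s}_1$ still agrees with $\m{s}$ because $\supp(\m{t}_1)\inters\supp(\m{t}_2)=\emptyset$. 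On the remaining indices, both sides simply preserve the store and leave the return value undefined, so the disjoint union $\m{v}\m.\m{w}$ matches~$\m{r}$.

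Once this lemma is in hand, both directions of the equivalence become essentially unfolding. For $(\proves)$: given an evaluation $\bigstep{\m{t}_1\m.\m{t}_2}{\m{s}}{\m{r}}{\m{s}'}$, split it using the lemma, apply the outer WP at $(\m{v},\m{s}_1)$ to get $\WP{\m{t}_2}{\fun\m{w}.Q(\m{v}\m.\m{w})}$ at $\m{s}_1$, then apply the inner WP at $(\m{w},\m{s}')$ to conclude $Q(\m{v}\m.\m{w})(\m{s}') = Q(\m{r})(\m{s}')$. For $(\provedby)$: given $\bigstep{\m{t}_1}{\m{s}}{\m{v}}{\m{s}_1}$ and $\bigstep{\m{t}_2}{\m{s}_1}{\m{w}}{\m{s}'}$, recombine via the lemma into $\bigstep{\m{t}_1\m.\m{t}_2}{\m{s}}{\m{v}\m.\m{w}}{\m{s}'}$ and apply the combined WP to obtain $Q(\m{v}\m.\m{w})(\m{s}')$.

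The main obstacle I anticipate is purely bookkeeping: being precise about the supports of the hyper-return-values $\m{v}$ and $\m{w}$, ensuring $\m{v}\m.\m{w}$ is well-defined (which requires the disjointness of supports), and verifying that the intermediate hyper-store $\m{s}_1$ computed by the nested evaluation is indeed the correct input for the second WP on indices outside $\supp(\m{t}_1)$ (where it coincides with $\m{s}$, as required by Definition~\ref{def:hyper-bigstep}). No genuinely new ideas are required beyond carefully tracking the disjointness side condition that $\m{t}_1\m.\m{t}_2$ already enforces.
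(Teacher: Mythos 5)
Your proposal is correct and follows essentially the same route as the paper's proof: both directions reduce to decomposing/recomposing the big-step evaluation of $\m{t}_1 \m. \m{t}_2$ into sequential evaluations of $\m{t}_1$ and $\m{t}_2$ through the intermediate hyper-store that agrees with the output on $\supp(\m{t}_1)$ and with the input elsewhere. The only cosmetic difference is that you package this decomposition as an explicit iff lemma while the paper inlines it in each direction.
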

\begin{proof}
  Let~$I_1 = \supp(\m{t}_1)$ and $I_2 = \supp(\m{t}_2)$;
  since the two hyper-terms are disjoint,
  we have $ I_1 \inters I_2 = \emptyset $.
  We prove the two directions of the statement separately.
  \begin{casesplit}
  \case*[Direction $(\proves)$]
    Assume $ \bigstep{\m{t}_1\m.\m{t}_2}{\m{s}}{\m{v}}{\m{s}'} $.
    Then we have $
      \bigstep
        {\m{t}_1}
        {\m{s}}
        {\m[i: \m{v}(i) | i\in I_1]}
        {\m{s}'\m[i: \m{s}(i) | i\in I_2]}
    $ and $
      \bigstep
        {\m{t}_2}
        {\m{s}'\m[i: \m{s}(i) | i\in I_2]}
        {\m[i: \m{v}(i) | i\in I_2]}
        {\m{s}'}
    $.
    By the hypothesis WP we get $
      Q(\m[i: \m{v}(i) | i\in I_1]\m.\m[i: \m{v}(i) | i\in I_2])(\m{s}')
    $.
    Since $\m{v} = \m[i: \m{v}(i) | i\in I_1]\m.\m[i: \m{v}(i) | i\in I_2]$
    we get the desired postcondition holds.
  \case*[Direction $(\provedby)$]
    Assume $ \bigstep{\m{t}_1}{\m{s}_0}{\m{v}_1}{\m{s}_1} $.
    From that we know~$ \m{s}_0 =_{I_2} \m{s}_1 $ and $\supp(\m{v}_1)=I_1$.
    We have to prove that
    for any $ \m{v}_2$ and $\m{s}_2 $,
    $  \bigstep{\m{t}_2}{\m{s}_1}{\m{v}_2}{\m{s}_2} $
    implies
    $ Q(\m{v}_1\m.\m{v}_2)(\m{s}_2) $.
    Since $\m{s}_1 =_{I_1} \m{s}_2$ and $\supp(\m{v}_2)=I_2$,
    we have
    $
      \bigstep{\m{t}_1\m.\m{t}_2}{\m{s}_0}{\m{v}_1\m.\m{v}_2}{\m{s}_2}.
    $
    By the hypothesis WP we get $
      Q(\m{v}_1\m.\m{v}_2)(\m{s}_2)
    $ which completes the proof.
    \qedhere
  \end{casesplit}
\end{proof}

\begin{lemma}[Soundness of~\ref{rule:wp-conj}]
\label{lm:rule:wp-conj}
  Assume
    $\idx(Q_1) \inters \supp(\m{t}_2) \subs \supp(\m{t}_1)$, and
    $\idx(Q_2) \inters \supp(\m{t}_1) \subs \supp(\m{t}_2)$.
  Then $
    \bigl(
      \WP{\m{t}_1}{Q_1}
      \land
      \WP{\m{t}_2}{Q_2}
    \bigr)
    \proves
    \WP{(\m{t}_1 \m+ \m{t}_2)}{Q_1 \land Q_2}.
  $
\end{lemma}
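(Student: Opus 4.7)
\medskip\noindent\textbf{Proof plan.}
The plan is to fix a hyper-store $\m{s}$ satisfying the left-hand side of the entailment, assume an execution $\bigstep{\m{t}_1 \m+ \m{t}_2}{\m{s}}{\m{v}}{\m{s}'}$ of the unioned hyper-term, and show that both $Q_1$ and $Q_2$ hold on $(\m{v},\m{s}')$. Set $I_1 = \supp(\m{t}_1)$ and $I_2 = \supp(\m{t}_2)$, so $I_1 \inters I_2$ is precisely the overlap where $\m{t}_1$ and $\m{t}_2$ agree. By \cref{def:hyper-bigstep}, the combined run supplies, at every $i \in I_1 \union I_2$, a big-step derivation $\bigstep{(\m{t}_1\m+\m{t}_2)(i)}{\m{s}(i)}{\m{v}(i)}{\m{s}'(i)}$; elsewhere $\m{s}'(i) = \m{s}(i)$ and $\m{v}(i) = \bot$.

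First I would \emph{project} this execution onto $\m{t}_1$. Define $\m{s}_1'$ to agree with $\m{s}'$ on $I_1$ and with $\m{s}$ outside $I_1$, and $\m{v}_1$ to agree with $\m{v}$ on $I_1$ and to be $\bot$ elsewhere. Since $(\m{t}_1\m+\m{t}_2)(i) = \m{t}_1(i)$ whenever $i \in I_1$, this yields $\bigstep{\m{t}_1}{\m{s}}{\m{v}_1}{\m{s}_1'}$ directly from \cref{def:hyper-bigstep}. Analogously construct $\m{v}_2,\m{s}_2'$ with $\bigstep{\m{t}_2}{\m{s}}{\m{v}_2}{\m{s}_2'}$. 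The assumed WPs then give $Q_1(\m{v}_1)(\m{s}_1')$ and $Q_2(\m{v}_2)(\m{s}_2')$.

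It remains to transport these to $Q_1(\m{v})(\m{s}')$ and $Q_2(\m{v})(\m{s}')$, which is where the side conditions come in; this is the main (though mild) obstacle. Consider $Q_1$: by \cref{def:assrt-idx}, changing the value or the store at any index $i \notin \idx(Q_1)$ leaves $Q_1$ invariant, so it suffices to check that $(\m{v},\m{s}')$ and $(\m{v}_1,\m{s}_1')$ coincide on $\idx(Q_1)$. For $i \in \idx(Q_1) \inters I_1$, both pairs share the values produced by the run of $\m{t}_1$ at $i$ by construction. For $i \in \idx(Q_1) \setminus I_1$, the side condition $\idx(Q_1) \inters I_2 \subs I_1$ forces $i \notin I_2$ as well, so $i \notin I_1 \union I_2$, and both $\m{s}'(i) = \m{s}(i) = \m{s}_1'(i)$ and $\m{v}(i) = \bot = \m{v}_1(i)$ by the two definitions. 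Iterating \cref{def:assrt-idx} over the indices outside $\idx(Q_1)$ therefore yields $Q_1(\m{v})(\m{s}') \iff Q_1(\m{v}_1)(\m{s}_1')$, and the latter holds. The argument for $Q_2$ is symmetric, using $\idx(Q_2) \inters I_1 \subs I_2$, and conjoining the two conclusions gives $(Q_1 \land Q_2)(\m{v})(\m{s}')$, as required.
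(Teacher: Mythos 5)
Your proof is correct and follows essentially the same route as the paper's: project the combined big-step onto each $\m{t}_k$, apply the assumed WPs, and use the side conditions to transport $Q_k$ back to the full pair $(\m{v},\m{s}')$. The only cosmetic difference is that the paper handles the return-value extension via the upward-closedness requirement on post hyper-assertions, whereas you fold it into an iterated application of \cref{def:assrt-idx}; both are sound.
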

\begin{proof}
  Assume $
    \bigstep{\m{t}_1 \m+ \m{t}_2}{\m{s}}{\m{v}}{\m{s}'}
  $.
  We then have:
  \begin{align*}
    &\bigstep{\m{t}_1}{\m{s}}
             {\m[i: \m{v}(i) | i\in I_1]}
             {\m{s}'\m[i: \m{s}(i) | i \in I_2 \setminus I_1 ]}
    &
    &\bigstep{\m{t}_2}{\m{s}}
             {\m[i: \m{v}(i) | i\in I_2]}
             {\m{s}'\m[i: \m{s}(i) | i \in I_1 \setminus I_2 ]}
  \end{align*}
  By the two WP in the assumptions,
  we have:
  \begin{align*}
    & Q_1 (\m[i: \m{v}(i) | i\in I_1])
          (\m{s}'\m[i: \m{s}(i) | i \in I_2 \setminus I_1 ])
    &
    & Q_2 (\m[i: \m{v}(i) | i\in I_2])
          (\m{s}'\m[i: \m{s}(i) | i \in I_1 \setminus I_2 ])
  \end{align*}
  By the assumptions on~$\idx(Q_1)$ and~$\idx(Q_2)$,
  we have $ \idx(Q_1) \inters (I_2\setminus I_1)  = \emptyset $
      and $ \idx(Q_2) \inters (I_1\setminus I_2)  = \emptyset $.
  This together with upward-closedness of postconditions on the hyper-return-value, gives us
  $Q_1 (\m{v})(\m{s}')$ and $Q_2 (\m{v})(\m{s}')$
  as desired.
\end{proof}

\begin{lemma}[Soundness of~\ref{rule:wp-proj}]
\label{lm:rule:wp-proj}
  Let $I = \supp(\m{t}_1)$.
  Then
  $
    \P I. \bigl(
      \proj(\m{t}_2) \implies \proj(\m{t}_1)
      \land
      \WP{(\m{t}_1 \m. \m{t}_2)}{Q}
    \bigr)
    \proves
    \WP{\m{t}_2}{\PP I.Q}
  $.
\end{lemma}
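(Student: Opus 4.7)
The plan is to unfold the two projection modalities and reduce the goal to applying the combined weakest precondition assumed in the premise. Fix a hyper-store $\m{s}$ and assume the antecedent: by the semantics of $\P I$ there is a witness $\m{s}''$ such that
$
  \bigl(\proj(\m{t}_2) \implies \proj(\m{t}_1) \land \WP{(\m{t}_1 \m. \m{t}_2)}{Q}\bigr)(\hat{\m{s}})
$
where $\hat{\m{s}} \is \m{s}\m[i: \m{s}''(i) | i \in I]$. Pick an arbitrary terminating run $\bigstep{\m{t}_2}{\m{s}}{\m{v}}{\m{s}'}$; the goal is to establish $(\PP I.Q)(\m{v})(\m{s}')$.

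The first key observation is that $\supp(\m{t}_2) \inters I = \emptyset$ (since $\m{t}_1 \m. \m{t}_2$ is a disjoint union and $I = \supp(\m{t}_1)$), so $\m{s}$ and $\hat{\m{s}}$ agree on all indices in $\supp(\m{t}_2)$. Thus the same pointwise big-step derivations that witness $\bigstep{\m{t}_2}{\m{s}}{\m{v}}{\m{s}'}$ also yield $\bigstep{\m{t}_2}{\hat{\m{s}}}{\m{v}}{\hat{\m{s}}'}$, where $\hat{\m{s}}'$ agrees with $\m{s}'$ on $\supp(\m{t}_2)$ and with $\hat{\m{s}}$ elsewhere. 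In particular this shows $\proj(\m{t}_2)(\hat{\m{s}})$, so firing the antecedent gives us both $\proj(\m{t}_1)(\hat{\m{s}})$ and $(\WP{(\m{t}_1 \m. \m{t}_2)}{Q})(\hat{\m{s}})$.

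Next, I extract from $\proj(\m{t}_1)(\hat{\m{s}})$ witnesses $\m{v}_1, \m{s}_1$ with $\bigstep{\m{t}_1}{\hat{\m{s}}}{\m{v}_1}{\m{s}_1}$. Because $\m{t}_1$ and $\m{t}_2$ touch disjoint index sets, I can glue these two derivations into $\bigstep{\m{t}_1 \m. \m{t}_2}{\hat{\m{s}}}{\m{v}_1 \m. \m{v}}{\m{s}_{12}}$, where $\m{s}_{12}$ agrees with $\m{s}_1$ on $I$ and with $\m{s}'$ off $I$. Invoking the combined WP yields $Q(\m{v}_1 \m. \m{v})(\m{s}_{12})$. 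Unpacking $\PP I$, it suffices to exhibit a hyper-return-value on $I$ and a hyper-store on $I$ reassembling into data satisfying $Q$: taking $\m{v}_1|_I$ and $\m{s}_1|_I$ as the two witnesses does exactly this, since $\m{v}\m[i: \m{v}_1(i) | i \in I] = \m{v}_1 \m. \m{v}$ and $\m{s}'\m[i: \m{s}_1(i) | i \in I] = \m{s}_{12}$, closing the goal.

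The main subtlety is appreciating why the $\proj(\m{t}_2) \implies \proj(\m{t}_1)$ premise is indispensable: to discharge the outer $\PP I$ we have to \emph{produce} concrete values and stores on $I$ realising $Q$, and the only way to extract them is from a terminating run of $\m{t}_1$ on the modified hyper-store. Without projectability of $\m{t}_1$, the combined WP would hold vacuously on hyper-stores from which $\m{t}_1$ diverges (as in the \code{skip}/diverging-loop counterexample already noted in \cref{sec:hyper-str-rules}), leaving us with no legitimate witnesses to instantiate $\PP I.Q$. Beyond this conceptual point, the remainder of the argument is routine hyper-store bookkeeping: keeping straight which indices belong to $I$, to $\supp(\m{t}_2)$, or to neither, and verifying that the pieces recombine consistently in the joint big-step.
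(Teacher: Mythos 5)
Your proof is correct, but it takes a genuinely different route from the paper's. The paper does not argue semantically at all here: it \emph{derives} \ref{rule:wp-proj} from previously established rules, first rewriting $\WP{(\m{t}_1 \m. \m{t}_2)}{Q}$ as a nested $\WP{\m{t}_1}{\WP{\m{t}_2}{\PP I.Q}}$ via \ref{rule:wp-nest} and \ref{rule:proj-weak}, then collapsing the outer WP to $\proj(\m{t}_1) \implies \WP{\m{t}_2}{\PP I.Q}$ via \ref{rule:wp-elim} (this is where projectability enters), discharging the guard with \ref{rule:proj-elim}, and finally eliminating the outer $\P I$ with \ref{rule:proj-intro} and \ref{rule:proj-irrel}. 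You instead unfold all the modalities and the hyper big-step relation and do the bookkeeping by hand: transporting the run of $\m{t}_2$ from $\m{s}$ to the $\P I$-witness $\hat{\m{s}}$ (legitimate because $\supp(\m{t}_2) \inters I = \emptyset$), extracting a run of $\m{t}_1$ from $\proj(\m{t}_1)$, gluing, and reading off the existential witnesses for $\PP I.Q$ from that run. Both arguments are sound; I checked your index accounting (in particular that $\m{s}_{12}$ agrees with $\m{s}'$ off $I$ even on indices outside $\supp(\m{t}_1)\union\supp(\m{t}_2)$, and that $\m{v}\m[i: \m{v}_1(i) | i \in I] = \m{v}_1 \m. \m{v}$) and it goes through. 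Your version is self-contained and makes the operational role of the $\proj(\m{t}_2) \implies \proj(\m{t}_1)$ premise vivid --- it is literally the only source of witnesses for the existentials in $\PP I.Q$ --- whereas the paper's version buys modularity: it exposes \ref{rule:wp-proj} as a consequence of the more primitive commutation principles \ref{rule:wp-nest} and \ref{rule:wp-elim}, so its soundness is inherited rather than re-proved.
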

\begin{proof}
  By the following derivation:
  \begin{derivation}
    \infer*[right=\labelstep{wp-proj:step1}]{
    \infer*[Right=\ref{rule:wp-elim}]{
    \infer*[Right=\ref{rule:proj-irrel}]{
    \infer*[Right=\ref{rule:proj-intro}]{
    \infer*[Right=\ref{rule:proj-elim}]{
      \proj(\m{t}_2) \implies \proj(\m{t}_1)
      \land
      \proj(\m{t}_1) \implies \WP{\m{t}_2}{\PP I.Q}
      \proves
      \proj(\m{t}_2) \implies \WP{\m{t}_2}{\PP I.Q}
    }{
      \proj(\m{t}_2) \implies \proj(\m{t}_1)
      \land
      \proj(\m{t}_1) \implies \WP{\m{t}_2}{\PP I.Q}
      \proves
      \WP{\m{t}_2}{\PP I.Q}
    }}{
      \P I. \bigl(
        \proj(\m{t}_2) \implies \proj(\m{t}_1)
        \land
        \proj(\m{t}_1) \implies \WP{\m{t}_2}{\PP I.Q}
      \bigr)
      \proves
      \P I. \WP{\m{t}_2}{\PP I.Q}
    }}{
      \P I. \bigl(
        \proj(\m{t}_2) \implies \proj(\m{t}_1)
        \land
        \proj(\m{t}_1) \implies \WP{\m{t}_2}{\PP I.Q}
      \bigr)
      \proves
      \WP{\m{t}_2}{\PP I.Q}
    }}{
      \P I. \bigl(
        \proj(\m{t}_2) \implies \proj(\m{t}_1)
        \land
        \WP{\m{t}_1}{\WP{\m{t}_2}{\PP I.Q}}
      \bigr)
      \proves
      \WP{\m{t}_2}{\PP I.Q}
    }}{
      \P I. \bigl(
        \proj(\m{t}_2) \implies \proj(\m{t}_1)
        \land
        \WP{(\m{t}_1 \m. \m{t}_2)}{Q}
      \bigr)
      \proves
      \WP{\m{t}_2}{\PP I.Q}
    }
  \end{derivation}
  Step~\eqref{wp-proj:step1} weakens the assumption
  using \ref{rule:wp-nest} and \ref{rule:proj-weak}.
\end{proof}

\begin{lemma}[Soundness of~\ref{rule:wp-idx-post}]
\label{lm:rule:wp-idx-post}
  If
  $\V \Gamma |- \WP{\m{t}}{Q}$
  and
  $j \notin \supp(\m{t}) \union \idx(\Gamma)$
  then
  $\V \Gamma |- \WP{\m{t}}{Q\isub{j->i}}$.
\end{lemma}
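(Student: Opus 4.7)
The plan is to construct, for each hyper-store satisfying $\Gamma$ and each big-step run of $\m{t}$, a shifted input hyper-store on which the hypothesis WP delivers exactly what $Q\isub{j->i}$ requires, and then bridge the small gap between the two return-value arguments using upward-closedness. This is just the semantic formalization of the informal slogan at the end of \cref{sec:idx-rules}: ``feed the output store at~$i$ in the conclusion as the input store at~$j$ in the premise''.

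Concretely, I fix $\m{s}$ with $\Gamma(\m{s})$ and assume $\bigstep{\m{t}}{\m{s}}{\m{v}}{\m{s}'}$. Unfolding the definition of reindexing on post hyper-assertions, the goal becomes $Q(\m{v}\isub{j->i})(\m{s}'\isub{j->i})$, where, by definition of $\isub*{\pi}$, $\m{s}'\isub{j->i} = \m{s}'\m[j: \m{s}'(i)]$. I then define the shifted input $\tilde{\m{s}} \is \m{s}\m[j: \m{s}'(i)]$. The side condition $j \notin \idx(\Gamma)$, via \cref{def:assrt-idx}, immediately gives $\Gamma(\tilde{\m{s}})$. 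The other side condition, $j \notin \supp(\m{t})$, lets me lift the per-component derivations of $\bigstep{\m{t}}{\m{s}}{\m{v}}{\m{s}'}$ indexed by $\supp(\m{t})$ unchanged to $\tilde{\m{s}}$ (they depend only on $\m{s}(k) = \tilde{\m{s}}(k)$ for $k \ne j$), while \cref{def:hyper-bigstep} preserves the freshly-planted store $\m{s}'(i)$ at index $j$, yielding $\bigstep{\m{t}}{\tilde{\m{s}}}{\m{v}}{\tilde{\m{s}}'}$ with $\tilde{\m{s}}' = \m{s}'\m[j: \m{s}'(i)] = \m{s}'\isub{j->i}$.

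Applying the hypothesis $\V \Gamma |- \WP{\m{t}}{Q}$ at $\tilde{\m{s}}$ then yields $Q(\m{v})(\tilde{\m{s}}')$. To close the proof, I note that since $j \notin \supp(\m{t}) = \supp(\m{v})$, the hyper-return-values $\m{v}$ and $\m{v}\isub{j->i}$ agree on $\supp(\m{v})$; upward-closedness of post hyper-assertions on the return value (\cref{def:hyper-assertion}) upgrades $Q(\m{v})(\tilde{\m{s}}')$ to $Q(\m{v}\isub{j->i})(\m{s}'\isub{j->i})$, which is the goal. The main subtlety — the place where an earlier attempt at a similar rule without side conditions fails, as witnessed by the counterexample in \cref{sec:idx-rules} — is keeping the two side conditions cleanly separated in the argument: $j \notin \idx(\Gamma)$ is what makes the shift of the precondition free, and $j \notin \supp(\m{t})$ is what makes both the replay of the execution and the return-value replacement innocuous. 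Neither side condition alone would suffice.
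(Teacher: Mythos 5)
Your proof is correct and follows essentially the same route as the paper's: both construct the shifted input hyper-store by planting the output store at $i$ as the input at $j$, use $j \notin \idx(\Gamma)$ to preserve the precondition, use $j \notin \supp(\m{t})$ to replay the run and keep the planted store, and finish with upward-closedness on the hyper-return-value. No gaps.
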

\begin{proof}
  Take an arbitrary~$\m{s}$.
  Let~$s_0$ and~$s_1$ be such that~$ \m{s} = \m{s}\m[i: s_0, j: s_1] $.
  Assume
    $\Gamma(\m{s}\m[i: s_0, j: s_1])$ and
    $
      \bigstep{\m{t}}{\m{s}\m[i: s_0, j: s_1]}
              {\m{v}}{\m{s}'\m{s}\m[i: s_0', j: s_1']}
    $.
    Since $j\notin\supp(\m{t})$, $j\notin\supp(\m{v})$.
    Note that
    $\m{s}'\isub{j->i} = \m{s}'\m[i: s_0', j: s_0']$
    and
    $\m{v}\isub{j->i} = \m{v}\m[j: \m{v}(i)]$,
    so our goal is to prove that
    $Q(\m{v}\m[j: \m{v}(i)])(\m{s}'\m[i: s_0', j: s_0'])$.

    The assumption gives us that
    for all $ \hat{\m{s}},\hat{\m{s}}',\hat{\m{v}} $,
    if~$ \Gamma(\hat{\m{s}}) $
    and $ \bigstep{\m{t}}{\hat{\m{s}}}{\hat{\m{v}}}{\hat{\m{s}}'} $
    then $ Q(\hat{\m{v}})(\hat{\m{s}}') $.

    Let $\hat{\m{s}}  = \m{s}\m[i: s_0, j: s_0']$,
        $\hat{\m{s}}' = \m{s}'\m[i: s_0', j: s_0']$, and
        $\hat{\m{v}}  = \m{v}$.
    By $j\notin\idx(\Gamma)$, we have $\Gamma(\m{s}\m[i: s_0, j: s_0'])$.
    We also have
    $
      \bigstep{\m{t}}{\m{s}\m[i: s_0, j: s_0']}
              {\m{v}}{\m{s}'\m[i: s_0', j: s_0']}.
    $
    We therefore get that $
      Q(\m{v})(\m{s}'\m[i: s_0', j: s_0'])
    $.
    By upward-closedness of~$Q$ we get $
      Q(\m{v}\m[j: \m{v}(i)])(\m{s}'\m[i: s_0', j: s_0'])
    $.
\end{proof}

\begin{lemma}[Soundness of~\ref{rule:wp-idx-swap}]
\label{lm:rule:wp-idx-swap}
  Assume $i \notin \idx(Q)$.
  Then, \[
    \bigl(\WP{(\m[j: t]\m.\m{t}')}{Q}\bigr)\isub{j->i}
    \proves
    \WP{(\m[i: t]\m.\m{t}')}{Q\isub{j->i}}.
  \]
\end{lemma}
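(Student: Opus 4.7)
The plan is to give a semantic proof that mirrors the style used for \cref{lm:rule:wp-idx-post}, unfolding the definitions of $\WP$, reindexing, and $\bigstep{\cdot}{\cdot}{\cdot}{\cdot}$. Well-formedness of the statement forces $i, j \notin \supp(\m{t}')$ (and $i \ne j$, otherwise the statement is an instance of \ref{rule:wp-cons}). Take an arbitrary $\m{s}$ satisfying the left-hand side, so that $(\WP{(\m[j\colon t]\m.\m{t}')}{Q})(\m{s}\isub{j->i})$ holds; assume $\bigstep{\m[i\colon t]\m.\m{t}'}{\m{s}}{\m{v}}{\m{s}'}$; we must show $Q(\m{v}\isub{j->i})(\m{s}'\isub{j->i})$.

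The key move is to construct a parallel big-step run of the left-hand hyper-term from $\m{s}\isub{j->i}$ that exchanges the roles of $i$ and $j$. Unpack the assumed run into the trace $\bigstep{t}{\m{s}(i)}{\m{v}(i)}{\m{s}'(i)}$ at position $i$, the induced traces of $\m{t}'$ at $\supp(\m{t}')$, and identity at all other positions. Because $\m{s}\isub{j->i}$ agrees with $\m{s}$ everywhere except at $j$, and because neither $i$ nor $j$ lies in $\supp(\m{t}')$, we may reuse exactly the same trace of $t$ at position $j$ (its input store is $\m{s}\isub{j->i}(j) = \m{s}(i)$) and exactly the same traces of $\m{t}'$, with position $i$ now untouched. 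This yields a witness $\bigstep{\m[j\colon t]\m.\m{t}'}{\m{s}\isub{j->i}}{\hat{\m{v}}}{\hat{\m{s}}'}$ where $\hat{\m{s}}'$ coincides with $\m{s}'\isub{j->i}$ at every index except $i$ (there $\hat{\m{s}}'(i) = \m{s}(i)$ while $\m{s}'\isub{j->i}(i) = \m{s}'(i)$), and $\hat{\m{v}}$ agrees with $\m{v}\isub{j->i}$ on $\supp(\hat{\m{v}}) = \{j\} \cup \supp(\m{t}')$ but is undefined at $i$.

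Applying the hypothesis gives $Q(\hat{\m{v}})(\hat{\m{s}}')$. Two gaps remain between this and the goal, both localised at index $i$: one in the hyper-store (created by $t$ running in the right-hand term but not in the left-hand one) and one in the hyper-return-value (whose $i$-entry is $\bot$ in $\hat{\m{v}}$ but $\m{v}(i)$ in $\m{v}\isub{j->i}$). The store gap is closed by the side condition $i \notin \idx(Q)$: with $\ret = \m{v}\isub{j->i}$, $\m{s} = \hat{\m{s}}'$, and $s' = \m{s}'(i)$, the definition of $\idx(Q)$ yields $Q(\hat{\m{v}})(\hat{\m{s}}') \iff Q(\m{v}\isub{j->i}\m[i\colon\bot])(\m{s}'\isub{j->i})$, since $\hat{\m{s}}'\m[i\colon\m{s}'(i)] = \m{s}'\isub{j->i}$. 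The return-value gap is then closed by the upward-closedness requirement on post hyper-assertions, which lets us restore $\m{v}(i)$ at position $i$, delivering $Q(\m{v}\isub{j->i})(\m{s}'\isub{j->i})$.

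The main obstacle is purely bookkeeping: one has to track carefully how $\isub{j->i}$ acts on the four quantities $\m{s}, \m{s}', \m{v}, \hat{\m{s}}', \hat{\m{v}}$ and verify that the constructed LHS run and the applied-$\isub{j->i}$-after-RHS-run differ in exactly the single coordinate that the side condition $i \notin \idx(Q)$ is crafted to absorb. No other rule of the paper does this absorption directly, so a semantic argument is genuinely needed; attempts to reduce to \ref{rule:wp-idx-post} or \ref{rule:wp-idx-pass} fail precisely because $i \in \supp(\m[i\colon t]\m.\m{t}')$.
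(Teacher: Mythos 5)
Your proof is correct and follows essentially the same route as the paper's: both construct the mirrored big-step run of $\m[j: t]\m.\m{t}'$ from the reindexed input store, apply the hypothesis WP, and then discharge the single discrepancy at index~$i$ using the side condition $i \notin \idx(Q)$ for the store together with upward-closedness of~$Q$ for the return value (the paper merely applies these last two steps in the opposite order). The only cosmetic difference is that the paper names the stores at $i$ and $j$ explicitly rather than reasoning through the $\isub{j->i}$ notation.
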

\begin{proof}
  For the assertions to be well-defined we have~$i,j \notin \m{t}'$.
  Take an arbitrary~$\m{s}$.
  Let~$s_0$ and~$s_1$ be such that~$ \m{s} = \m{s}\m[i: s_0, j: s_1] $.
  Assume $
    \bigstep{\m[i: t]\m.\m{t}'}{\m{s}\m[i: s_0, j: s_1]}
            {\m{v}\m[i:v_0,j:\bot]}{\m{s}'\m[i: s_0', j: s_1']}
  $.
  We have to prove that
  $ Q(\m{v}\m[i:v_0,j:v_0])(\m{s}'\m[i: s_0',j: s_0']) $.

  We have $
    \bigstep{\m[j: t]\m.\m{t}'}{\m{s}\m[i: s_0, j: s_0]}
            {\m{v}\m[i:\bot,j:v_0]}{\m{s}'\m[i: s_0, j: s_0']}
  $.
  Since $\m{s}\isub{j->i} = \m{s}\m[i: s_0, j: s_0]$,
  the WP in the assumptions gives us
  $ Q(\m{v}\m[i:\bot,j:v_0])(\m{s}'\m[i: s_0, j: s_0']) $.
  By upward-closedness of~$Q$ we get
  $ Q(\m{v}\m[i:v_0,j:v_0])(\m{s}'\m[i: s_0, j: s_0']) $.
  By $i \notin \idx(Q)$ we get
  $ Q(\m{v}\m[i:v_0,j:v_0])(\m{s}'\m[i: s_0', j: s_0']) $
  as required.
\end{proof}

\begin{lemma}[Soundness of~\ref{rule:wp-idx-pass}]
\label{lm:rule:wp-idx-pass}
  Assume $i,j \notin \supp(\m{t})$.
  Then, \[
    \V (\WP{\m{t}}{Q})\isub{j->i} |- \WP{\m{t}}{Q\isub{j->i}}.
  \]
\end{lemma}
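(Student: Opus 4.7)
The plan is to unfold the semantic definitions and exploit that $\m{t}$ does not touch indices $i$ or $j$. Fix an arbitrary $\m{s}$ and assume $(\WP{\m{t}}{Q})(\m{s}\isub{j->i})$; to prove $(\WP{\m{t}}{Q\isub{j->i}})(\m{s})$, I would additionally assume $\bigstep{\m{t}}{\m{s}}{\m{v}}{\m{s}'}$ and derive $Q(\m{v}\isub{j->i})(\m{s}'\isub{j->i})$, which is exactly the unfolding of $Q\isub{j->i}(\m{v})(\m{s}')$.

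The key observation is that, because $i,j \notin \supp(\m{t})$, Definition~\ref{def:hyper-bigstep} forces $\m{s}(i) = \m{s}'(i)$, $\m{s}(j) = \m{s}'(j)$, and $\m{v}(i) = \m{v}(j) = \bot$. The reindexed input $\m{s}\isub{j->i}$ coincides with $\m{s}$ at every index except $j$, where it stores $\m{s}(i)$ instead of $\m{s}(j)$. Since $j \notin \supp(\m{t})$, the semantics simply propagates this change, so $\bigstep{\m{t}}{\m{s}\isub{j->i}}{\m{v}}{\m{s}'\isub{j->i}}$ holds with the very same hyper-return-value $\m{v}$. Applying the WP hypothesis now yields $Q(\m{v})(\m{s}'\isub{j->i})$.

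To finish, I would observe that $\m{v}\isub{j->i}$ and $\m{v}$ are literally the same finite map: at every index $k \ne j$ the reindexing definition leaves $\m{v}(k)$ untouched, and at index $j$ it puts $\m{v}(i) = \bot = \m{v}(j)$. Therefore $Q(\m{v}\isub{j->i})(\m{s}'\isub{j->i}) = Q(\m{v})(\m{s}'\isub{j->i})$, which we have already established, and no appeal to upward closure of postconditions on the hyper-return-value is needed.

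I do not expect any genuine obstacle. The argument is routine bookkeeping, mirroring in structure the soundness proofs of \ref{rule:wp-idx} and \ref{rule:wp-idx-swap} but requiring even less work because both $i$ and $j$ lie outside $\supp(\m{t})$. The only subtlety to mind is tracking which indices of $\m{v}$ are $\bot$, so as to justify the equality $\m{v}\isub{j->i} = \m{v}$; this is precisely where the side condition $i,j \notin \supp(\m{t})$ is used, and also why the analogous rule without this side condition would be unsound, as the counterexample following \ref{rule:wp-idx-merge} in \cref{sec:idx-rules} illustrates.
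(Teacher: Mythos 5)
Your proof is correct and follows essentially the same route as the paper's: use $i,j\notin\supp(\m{t})$ to transport the big-step judgment from $\m{s}$ to $\m{s}\isub{j->i}$ with the same hyper-return-value, then apply the hypothesis WP. The only difference is that you spell out the bookkeeping step $\m{v}\isub{j->i}=\m{v}$ (since both entries are $\bot$), which the paper leaves implicit after noting $i,j\notin\supp(\m{v})$.
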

\begin{proof}
  Assume $
    \bigstep{\m{t}}{\m{s}}{\m{v}}{\m{s}'}
  $.
  Since $i,j \notin \supp(\m{t})$,
  we have $i,j\notin \supp(\m{v})$ and $
    \bigstep{\m{t}}{\m{s}\isub{j->i}}{\m{v}}{\m{s}'\isub{j->i}}
  $.
  Therefore, by the WP in the assumptions,
  we have~$Q(\m{v})(\m{s}'\isub{j->i})$
  which proves the statement.
\end{proof}

\begin{lemma}[Soundness of~\ref{rule:wp-idx-merge}]
\label{lm:rule:wp-idx-merge}
  Assume $ i,j \notin \m{t}' $. Then,
  \[
    \bigl(\WP{(\m[i: t, j: t]\m.\m{t}')}{Q}\bigr)\isub{j->i}
    \proves
    \WP{(\m[i: t]\m.\m{t}')}{Q\isub{j->i}}.
  \]
\end{lemma}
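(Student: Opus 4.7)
The plan is to directly unfold the definition of \WPsymb\ and of the reindexing \(\isub{j\to i}\), and to exhibit a concrete big-step derivation of the richer hyper-term that ``duplicates'' a derivation of the smaller one. First, I fix an arbitrary hyper-store $\m{s}$, and write $s_i = \m{s}(i)$, $s_j = \m{s}(j)$. The assumption $i,j\notin\supp(\m{t}')$ (implicit in the well-formedness of both $\m[i:t]\m.\m{t}'$ and $\m[i:t,j:t]\m.\m{t}'$) ensures that the stores at $i$ and $j$ are affected only by the $t$-components in each case. Note that by Definition~\ref{def:hyper-bigstep}, running $\m[i:t]\m.\m{t}'$ preserves $\m{s}(j)$, and the reindexing acts by $(\m{s}\isub{j\to i})(j)=s_i$ and $(\m{s}\isub{j\to i})(k)=\m{s}(k)$ for all $k\neq j$.

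Next, I assume $\bigstep{\m[i:t]\m.\m{t}'}{\m{s}}{\m{v}}{\m{s}'}$ and aim to prove $Q\isub{j\to i}(\m{v})(\m{s}')$. By the definition of hyper-term big-step semantics, the derivation contains a single $t$-derivation $\bigstep{t}{s_i}{v_0}{s_i'}$ with $v_0=\m{v}(i)$ and $s_i'=\m{s}'(i)$, together with the corresponding derivations for each component of $\m{t}'$. The key step is to construct, from these ingredients, a new derivation
\[
  \bigstep{\m[i:t,j:t]\m.\m{t}'}{\m{s}\isub{j\to i}}{\m{v}\m[j:v_0]}{\m{s}'\isub{j\to i}}
\]
by reusing $\bigstep{t}{s_i}{v_0}{s_i'}$ for \emph{both} the $i$-component and the $j$-component (which is legitimate since the reindexed input has $s_i$ at both positions), and reusing the existing derivations for $\m{t}'$ unchanged.

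Then I apply the assumed WP: by the definition of the reindexing on assertions, $\bigl(\WP{(\m[i:t,j:t]\m.\m{t}')}{Q}\bigr)\isub{j\to i}(\m{s})$ unfolds to $(\WP{(\m[i:t,j:t]\m.\m{t}')}{Q})(\m{s}\isub{j\to i})$, and applied to the derivation just constructed it yields $Q(\m{v}\m[j:v_0])(\m{s}'\isub{j\to i})$. Finally I unfold the target: by definition $Q\isub{j\to i}(\m{v})(\m{s}') = Q(\m{v}\isub{j\to i})(\m{s}'\isub{j\to i})$, and since $\m{v}\isub{j\to i}$ coincides with $\m{v}\m[j:\m{v}(i)]=\m{v}\m[j:v_0]$, we are done. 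If $\m{v}(i)$ happens to be $\bot$ (which it is not, since $i\in\supp(\m[i:t])$), upward-closedness of post hyper-assertions on hyper-return-values would anyway bridge any gap in support.

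The only genuinely non-routine step is the construction of the enriched big-step derivation, and the main obstacle is to be careful that the non-determinism of $t$ does not cause trouble: since we are free to choose \emph{any} derivation of $t$ from $s_i$ for the $j$-component, we simply copy the one used for $i$. Everything else is bookkeeping with the definitions of $\isub{j\to i}$ on hyper-stores and on (post) hyper-assertions, which propagate pointwise as recorded in Figure~\ref{fig:idx-proj-laws}.
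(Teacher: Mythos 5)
Your proof is correct and follows essentially the same route as the paper's: unfold the reindexing on hyper-stores and postconditions, duplicate the big-step derivation of $t$ at index $i$ to obtain a run of $\m[i:t,j:t]\m.\m{t}'$ from $\m{s}\isub{j->i}=\m{s}\m[j:\m{s}(i)]$, and apply the assumed WP. The extra remarks about upward-closedness and the preservation of $\m{s}(j)$ are harmless but not needed, since $\m{v}(i)$ is always defined and the reindexed output store is exactly $\m{s}'\m[j:\m{s}'(i)]$.
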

\begin{proof}
  Assume $
    \bigstep{\m[i: t]\m.\m{t}'}{\m{s}}{\m{v}}{\m{s}'}
  $.
  We have $ \m{s}\isub{j->i} = \m{s}\m[j: \m{s}(i)] $.
  Then $
    \bigstep{\m[i: t, j: t]\m.\m{t}'}{\m{s}\m[j: \m{s}(i)]}
            {\m{v}\m[j: \m{v}(i)]}{\m{s}'\m[j: \m{s}'(i)]}
  $.
  By the WP in the assumptions, we get
  $
    Q(\m{v}\m[j: \m{v}(i)])(\m{s}'\m[j: \m{s}'(i)])
  $
  which is the desired postcondition.
\end{proof}

\subsection{Derived Rules}

In \cref{fig:wp-derived-laws} we show some derived rules which we use in
examples.
We give a derivation for the non trivial ones.

\begin{lemma}[Soundness of~\ref{rule:wp-seq-plus}]
  If\/ $
    \V \Gamma
    |- \WP{(\m[i: t_i | i \in I] \m. \m{t}_1)}{
          \WP{(\m[i: t_i' | i \in I] \m. \m{t}_2)}{Q}
       }
  $, then $
    \V \Gamma
    |- \WP{(\m[i: (t_i\code{;}\ t'_i) | i \in I] \m. \m{t}_1 \m. \m{t}_2)}{Q}.
  $
\end{lemma}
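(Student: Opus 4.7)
The plan is to derive the conclusion from the hypothesis by purely equational massaging of the goal WP, using \ref{rule:wp-nest} (which lets us freely trade a WP over a disjoint union $\m{t}_a \m. \m{t}_b$ for a nested WP in either order) and \ref{rule:wp-seq} (which distributes sequencing over a common index set $I$ in and out of nested WPs). Since both rules are stated as logical equivalences, the final step is a single application of \ref{rule:wp-cons}.

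More concretely, I would start from $\WP{(\m[i: (t_i\code{;}\, t_i') | i \in I] \m. \m{t}_1 \m. \m{t}_2)}{Q}$ and first apply \ref{rule:wp-nest} twice (peeling off $\m{t}_2$, then $\m{t}_1$) to obtain the nested form $\WP{\m[i: (t_i\code{;}\, t_i') | i \in I]}{\WP{\m{t}_1}{\WP{\m{t}_2}{Q}}}$. Then \ref{rule:wp-seq} applied at the outermost WP rewrites this as $\WP{\m[i: t_i | i \in I]}{\WP{\m[i: t_i' | i \in I]}{\WP{\m{t}_1}{\WP{\m{t}_2}{Q}}}}$. Next, I use \ref{rule:wp-nest} in the reverse direction to fuse $\m{t}_1$ with $\m[i: t_i | i \in I]$ at the outer position and $\m{t}_2$ with $\m[i: t_i' | i \in I]$ at the inner position, producing the target $\WP{(\m[i: t_i | i \in I] \m. \m{t}_1)}{\WP{(\m[i: t_i' | i \in I] \m. \m{t}_2)}{Q}}$. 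The hypothesis $\V \Gamma |- \WP{(\m[i: t_i | i \in I] \m. \m{t}_1)}{\WP{(\m[i: t_i' | i \in I] \m. \m{t}_2)}{Q}}$ then discharges the goal via \ref{rule:wp-cons} (in fact via a straight equivalence).

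The implicit disjointness side conditions attached to each $\m.$ are satisfied automatically: well-formedness of the disjoint unions appearing in both the premise ($\m[i: t_i | i \in I] \m. \m{t}_1$ and $\m[i: t_i' | i \in I] \m. \m{t}_2$) and the conclusion ($\m[i: (t_i\code{;}\, t_i') | i \in I] \m. \m{t}_1 \m. \m{t}_2$) forces $I$, $\supp(\m{t}_1)$, and $\supp(\m{t}_2)$ to be pairwise disjoint, which is precisely what the intermediate applications of \ref{rule:wp-nest} require. The only non-obvious manoeuvre—and the step most worth flagging—is the ``order swap'' inside the nested WP, where $\m{t}_1$ moves past $\m[i: t_i' | i \in I]$ to end up on the outer side and $\m{t}_2$ ends up on the inner side. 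This is handled cleanly by \ref{rule:wp-nest}: both orders of nesting are equivalent to the WP over the same disjoint union, so commutativity of $\m.$ lets us reassociate the components as needed. No new semantic reasoning about $\bigstep{\hole}{\hole}{\hole}{\hole}$ is required beyond what is already baked into \ref{rule:wp-nest} and \ref{rule:wp-seq}.
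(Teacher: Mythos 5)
Your proof is correct and follows essentially the same route as the paper's: the paper's derivation is exactly the chain \ref{rule:wp-nest} (to peel off $\m{t}_1 \m. \m{t}_2$), then \ref{rule:wp-seq} on the common index set, then \ref{rule:wp-nest} again to redistribute $\m{t}_1$ and $\m{t}_2$ into the outer and inner WPs so as to match the hypothesis. The only difference is that you spell out the reassociation/``order swap'' step that the paper leaves implicit, which is a harmless elaboration.
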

\begin{proof}
  By the following derivation:
  \begin{derivation}
    \infer{
    \infer{
    \infer{
      \V \Gamma
      |- \WP{(\m[i: t_i | i \in I] \m. \m{t}_1)}{
            \WP{(\m[i: t_i' | i \in I] \m. \m{t}_2)}{Q}
         }
    }{
      \V \Gamma
      |- \WP{\m[i: t_i | i \in I]}{
            \WP{\m[i: t_i' | i \in I]}{
              \WP{(\m{t}_1 \m. \m{t}_2)}{Q}
            }
         }
    }}{
      \V \Gamma
      |- \WP{\m[i: (t_i\code{;}\ t'_i) | i \in I]}{\WP{(\m{t}_1 \m. \m{t}_2)}{Q}}
    }}{
      \V \Gamma
      |- \WP{(\m[i: (t_i\code{;}\ t'_i) | i \in I] \m. \m{t}_1 \m. \m{t}_2)}{Q}
    }
  \end{derivation}
\end{proof}

\begin{lemma}[Soundness of~\ref{rule:wp-impl-l}]
\label{lm:rule:wp-impl-l}
  Assume $\pvar(P) \inters \mods(\m{t}) = \emptyset$.
  Then the following entailment is admissible:
  $
    \proj(\m{t}),
    \WP{\m{t}}{\ret. Q(\ret) \implies P}
    \proves
    \WP{\m{t}}{Q} \implies P.
  $
\end{lemma}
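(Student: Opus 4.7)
The plan is to derive this as an admissible rule, reusing previously proven WP laws rather than arguing semantically. The overall shape of the derivation mirrors the informal semantic intuition: the assumption $\WP{\m{t}}{Q}$ ``produces'' the value $Q(\m{v})$ on the output state, and the assumption $\WP{\m{t}}{\ret. Q(\ret) \implies P}$ converts it into $P$ on the output state, at which point $\proj(\m{t})$ together with the side condition on $\pvar(P)$ lets us transport $P$ back onto the input state.

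First, I would apply implication introduction (from \cref{fig:basic-laws}) to move the conclusion's antecedent into the context, reducing the goal to
\[
  \proj(\m{t}),\ \WP{\m{t}}{\ret. Q(\ret) \implies P},\ \WP{\m{t}}{Q} \proves P.
\]
Then, I would combine the two WP assumptions with \ref{rule:wp-conj} instantiated with $\m{t}_1 = \m{t}_2 = \m{t}$: in this case $\m{t}_1 \m+ \m{t}_2 = \m{t}$, and both $\idx$ side conditions collapse to $\idx(Q_i) \inters \supp(\m{t}) \subs \supp(\m{t})$, which is vacuously true. This produces $\WP{\m{t}}{\ret. Q(\ret) \land (Q(\ret) \implies P)}$, and one application of \ref{rule:wp-cons} with the elementary propositional entailment $Q(\ret) \land (Q(\ret) \implies P) \proves P$ simplifies it to $\WP{\m{t}}{P}$.

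Finally, I would invoke \ref{rule:wp-elim}, whose side condition $\pvar(P) \inters \mods(\m{t}) = \emptyset$ is exactly the hypothesis of the lemma, to rewrite $\WP{\m{t}}{P}$ as $\proj(\m{t}) \implies P$. Discharging the implication against the $\proj(\m{t})$ assumption in the context yields $P$, and then discharging the earlier $\implies$-introduction recovers $\WP{\m{t}}{Q} \implies P$ as required.

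I do not foresee any real obstacle: the derivation is essentially mechanical, and the only points to double-check are that the side conditions of \ref{rule:wp-conj} are trivialized by the choice $\m{t}_1 = \m{t}_2 = \m{t}$ (so that $P$ may freely depend on indices outside $\supp(\m{t})$), and that the side condition of \ref{rule:wp-elim} coincides verbatim with the lemma's hypothesis, so no additional assumption is needed.
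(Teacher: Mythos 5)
Your derivation is correct and is essentially identical to the paper's own proof: both reduce the goal by implication introduction, merge the two WPs with \ref{rule:wp-conj} (where taking $\m{t}_1 = \m{t}_2 = \m{t}$ trivializes the side conditions), apply \ref{rule:wp-cons} via modus ponens to obtain $\WP{\m{t}}{P}$, and finish with \ref{rule:wp-elim} against the $\proj(\m{t})$ assumption. No gaps.
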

\begin{proof}
  By the following derivation:
  \begin{derivation}
    \infer*{
    \infer*[Right=\ref{rule:wp-conj}]{
    \infer*[Right=\ref{rule:wp-cons}]{
    \infer*[Right=\ref{rule:wp-elim}]{
      \pvar(P) \inters \mods(\m{t}) = \emptyset
    }{
      \proj(\m{t}),
      \WP{\m{t}}{P}
      \proves
      P
    }}{
      \proj(\m{t}),
      \WP{\m{t}}{\ret.
        Q(\ret) \land
        (Q(\ret) \implies P)
      }
      \proves
      P
    }}{
      \proj(\m{t}),
      \WP{\m{t}}{\ret. Q(\ret) \implies P},
      \WP{\m{t}}{Q}
      \proves
      P
    }}{
      \proj(\m{t}),
      \WP{\m{t}}{\ret. Q(\ret) \implies P}
      \proves
      \WP{\m{t}}{Q} \implies P
    }
  \end{derivation}
\end{proof}

\begin{lemma}[Soundness of~\ref{rule:wp-indirect}]
\label{lm:rule:wp-indirect}
  Assume $j \notin \idx(\Gamma)$ and $j \notin \supp(\m{t})$.
  Then the following rule is admissible:
  \begin{proofrule}
  \infer{
    \V \Gamma |- \WP{\m[i: t_1]}{\E v. \at{A(v)}{i}}
    \\
    \forall v\st\ \V
    \Gamma, \at{A(v)}{j}
    |-
    \WP{(\m[i: t_1, j: t_2] \m. \m{t})}{
      \at{A(v)}{i} \implies Q(v)
    }
  }{
    \V \Gamma |-
    \WP{(\m[i: t_1] \m. \m{t})}*{
      \ret.
        \E v.
        \at{A(v)}{i}
        \land
        \WP{\m[i: t_2]}{
          \bigl( \PP i. Q(v) \bigr)\isub{j->i}
        }
    }
  }
  \end{proofrule}
\end{lemma}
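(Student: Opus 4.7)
I will prove \ref{rule:wp-indirect} by direct semantic unfolding, invoking the two premises at carefully chosen initial hyper-stores. Fix an arbitrary $\m{s}$ with $\Gamma(\m{s})$ and any big-step $\bigstep{(\m[i:t_1]\m.\m{t})}{\m{s}}{\m{v}}{\m{s}'}$, which by \cref{def:hyper-bigstep} decomposes into $\bigstep{t_1}{\m{s}(i)}{\m{v}(i)}{\m{s}'(i)}$ together with the corresponding runs of $\m{t}$ on the remaining indices. The required postcondition is a conjunction, so I discharge its two parts in turn.

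For the first conjunct $\E v.\at{A(v)}{i}$, the first premise applied at $\m{s}$ together with the $t_1$-run above yields a witness $v_0$ with $A(v_0)(\m{s}'(i))$, which I take as the witness for $\E v$ in the conclusion. For the second conjunct, fix any big-step $\bigstep{t_2}{\m{s}'(i)}{v_2}{s^*}$; I must show that $(\PP i. Q(v_0))\isub{j->i}$ holds at the resulting hyper-store/return pair, which by definition unfolds to exhibiting a replacement store and return value at $i$ such that $Q(v_0)$ holds on a suitably rewritten hyper-store.

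The key move is to apply the second premise at the modified initial hyper-store $\hat{\m{s}} \is \m{s}\m[j: \m{s}'(i)]$. Because $j \notin \idx(\Gamma)$, $\Gamma(\hat{\m{s}})$ still holds, and $A(v_0)(\m{s}'(i))$ gives $\at{A(v_0)}{j}(\hat{\m{s}})$. Since $j \notin \supp(\m{t})$ and the original big-steps of $t_1$, $t_2$, and $\m{t}$ do not conflict, they compose into a single big-step $\bigstep{(\m[i:t_1, j:t_2]\m.\m{t})}{\hat{\m{s}}}{\hat{\m{v}}}{\hat{\m{s}}'}$ with $\hat{\m{s}}'(i) = \m{s}'(i)$, $\hat{\m{s}}'(j) = s^*$, and $\hat{\m{s}}'$ agreeing with $\m{s}'$ on $\supp(\m{t})$. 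The second premise then delivers $\at{A(v_0)}{i}(\hat{\m{s}}') \implies Q(v_0)(\hat{\m{v}})(\hat{\m{s}}')$; the antecedent holds because $\hat{\m{s}}'(i) = \m{s}'(i)$ still satisfies $A(v_0)$, so $Q(v_0)(\hat{\m{v}})(\hat{\m{s}}')$ follows.

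It remains to check that this $Q(v_0)$ statement is exactly what the postcondition $(\PP i. Q(v_0))\isub{j->i}$ demands. Picking the projection witness for $\PP i$ to be the post-$t_1$ store $\m{s}'(i)$ (with the auxiliary return value $\m{v}(i)$), the hyper-store and return value inside $Q(v_0)$ coincide with $\hat{\m{s}}'$ and $\hat{\m{v}}$ after the $j \mapsto i$ rewriting. The main obstacle is precisely this last bookkeeping step: reconciling the reindexing $\isub{j->i}$, the projection modality $\PP i$, and the upward-closure of post hyper-assertions requires carefully aligning which store is placed at $i$ versus $j$ so that the content delivered by the second premise matches exactly the content demanded by the conclusion. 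Once this alignment is verified, the semantic proof closes.
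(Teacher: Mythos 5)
Your proof is correct in substance but takes a genuinely different route from the paper's. The paper establishes \ref{rule:wp-indirect} as a \emph{derived} rule: it gives an explicit derivation from already-proven rules (\ref{rule:wp-nest}, \ref{rule:wp-impl-r}, \ref{rule:wp-all}, \ref{rule:wp-conj}, \ref{rule:wp-cons}, and, for the crucial indirect-to-direct conversion, \ref{rule:wp-idx-post} followed by \ref{rule:wp-idx-swap}), so that all semantic reasoning is delegated to the soundness lemmas of those primitives. You instead unfold the WP semantics directly. Your key move --- instantiating the second premise at the modified input hyper-store $\m{s}\m[j: \m{s}'(i)]$, justified by $j \notin \idx(\Gamma)$ and $j \notin \supp(\m{t})$ --- is exactly the ``feed the output at~$i$ as the input at~$j$'' step that the paper isolates inside the soundness proof of \ref{rule:wp-idx-post}, so you are in effect inlining and re-proving that lemma. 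Your version is self-contained and makes the semantic insight explicit; the paper's version additionally shows the rule is derivable \emph{within} \thelogic{} (which is the point of listing it among the derived rules) and avoids redoing the big-step composition argument. One caution: the ``bookkeeping step'' you defer at the end is precisely where the content of \ref{rule:wp-idx-post} and \ref{rule:wp-idx-swap} lives, so it should be carried out rather than asserted. It does close: taking the $\PP i$ witnesses to be the post-$t_1$ store $\m{s}'(i)$ and return value $\m{v}(i)$ makes the hyper-store demanded by $(\PP i. Q(v_0))\isub{j->i}$ literally equal to the output hyper-store of the premise's run. The only residual mismatch is that the premise delivers $Q(v_0)$ at a hyper-return-value also defined on $\supp(\m{t})$, whereas the conclusion's inner WP supplies one defined only at $i$ and $j$; this discrepancy is equally present (and equally elided, via the informal nesting notation in its preparatory step) in the paper's own derivation, so it does not count against your argument, but you should state the witnesses explicitly instead of leaving the alignment as ``to be verified''.
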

\begin{proof}
  By the following derivation:
  \begin{derivation}
  \infer*[Right=\labelstep{wp-indirect:step4}]{
  \infer*[Right=\labelstep{wp-indirect:step3}]{
  \infer*[Right=\labelstep{wp-indirect:step2}]{
  \infer*[Right=\labelstep{wp-indirect:step1}]{
    \infer*{}{
      \V \Gamma |- \WP{\m[i: t_1]}{\E v. \at{A(v)}{i}}
    }
    \and
    \infer*[right=\labelstep{wp-indirect:prep}]{
      \forall v\st\ \V \Gamma, \at{A(v)}{j} |-
      \WP{(\m[i: t_1, j: t_2] \m. \m{t})}{
        \at{A(v)}{i} \implies Q(v)
      }
    }{
      \V \Gamma |-
      \WP{(\m[i: t_1]\m.\m{t})}{
      \A v.
        (\at{A(v)}{i} \land \at{A(v)}{j}) \implies
          \WP{\m[j: t_2]}{Q(v)}
      }
    }
  }{
    \V \Gamma |-
    \WP{(\m[i: t_1]\m.\m{t})}*{
      \E v. \bigl(
        \at{A(v)}{i}
        \land
        (\at{A(v)}{j} \implies \WP{\m[j: t_2]}{Q(v)})
      \bigr)
    }
  }}{
    \V \Gamma |-
    \WP{(\m[i: t_1]\m.\m{t})}*{
      \E v. \bigl(
        \at{A(v)}{i}
        \land
        (\at{A(v)}{j} \implies \WP{\m[j: t_2]}{\PP i. Q(v)})
      \bigr)
    }
  }}{
    \V \Gamma |-
    \WP{(\m[i: t_1]\m.\m{t})}*{
      \E v. \left(
        \at{A(v)}{i}
        \land
        \left(\WP{\m[j: t_2]}*{\bigl( \PP i. Q(v) \bigr)}\right)\isub{j->i}
      \right)
    }
  }}{
    \V \Gamma |-
    \WP{(\m[i: t_1] \m. \m{t})}*{
        \E v.
        \at{A(v)}{i}
        \land
        \WP{\m[i: t_2]}{
          \bigl( \PP i. Q(v) \bigr)\isub{j->i}
        }
    }
  }
  \end{derivation}
  Step~\eqref{wp-indirect:prep} uses
    \ref{rule:wp-nest},
    \ref{rule:wp-impl-r}, and
    \ref{rule:wp-all},
  to push the relevant pieces inside the postcondition of the outer WP.
  Step~\eqref{wp-indirect:step1} uses \cref{rule:wp-conj} and consequence.
  Step~\eqref{wp-indirect:step2} uses consequence to introduce projection
  in the inner postcondition.
  Step~\eqref{wp-indirect:step3} is an application of \cref{rule:wp-idx-post};
  note that $\at{A(v)}{i}$ is unaffected by the reindexing.
  Step~\eqref{wp-indirect:step4} is an application of \cref{rule:wp-idx-swap},
  justified because $i \notin \idx(\PP i. Q(v))$.
\end{proof}

\subsection{Relative Completeness}
\label{sec:completeness}

We prove that, given an oracle that provides a Hoare logic derivation
for every valid Hoare triple, we can construct a \thelogic{} derivation
for every valid hyper-triple.

\begin{theorem}[Relative Completeness]
\label{th:rel-completeness}
  Every valid hyper-triple has a \thelogic{} derivation.
\end{theorem}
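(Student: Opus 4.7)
The plan is to reduce a valid hyper-triple $\J |- {P} {\m{t}} {Q}$ to per-component Hoare triples supplied by the oracle and assemble them using \ref{rule:wp-conj} and \ref{rule:wp-cons}, as sketched in \cref{sec:hyper-str-rules}.

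For each $i \in \supp(\m{t})$, I introduce a logical variable $s_i$ for the initial store at index $i$ and consider the unary Hoare triple $\{s = s_i\}\ \m{t}(i)\ \{\exists u.\ \bigstep{\m{t}(i)}{s_i}{u}{s} \land \ret = u\}$. This is valid---the postcondition is the strongest postcondition of $\m{t}(i)$ from $s_i$---so the oracle provides a Hoare-logic derivation. Each such derivation lifts to a unary LHC judgment at index $i$, because LHC restricted to a singleton support is exactly Hoare logic: the lockstep rules \ref{rule:wp-assign}, \ref{rule:wp-seq}, \ref{rule:wp-if}, \ref{rule:wp-while} specialize to their Hoare counterparts, while \ref{rule:wp-cons} and \ref{rule:wp-const} subsume the consequence and frame rules. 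Now I iterate \ref{rule:wp-conj} across $\supp(\m{t})$: the side conditions are trivially met, since the singleton hyper-terms have pairwise disjoint support and each lifted postcondition has $\idx \subs \{i\}$. The result is a single hyper-triple whose precondition $\bigwedge_i \at{s = s_i}{i}$ pins down every initial store at every support index, and whose postcondition captures the full componentwise input/output relation of $\m{t}$.

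To obtain the original goal, I generalize over the logical $(s_i)_i$ and apply \ref{rule:wp-cons} at both ends. On the postcondition side, the strongest-postcondition assertion entails $Q(\ret)$ whenever the initial hyper-store satisfied $P$, precisely because this implication is the semantic content of validity of the original hyper-triple. On the precondition side, I use the tautology $P(\m{s}) \iff \exists \m{s}_0.\ P(\m{s}_0) \land \bigwedge_i \at{s = \m{s}_0(i)}{i}(\m{s})$ together with existential elimination to recover $P$. Indices in $\idx(P) \union \idx(Q) \setminus \supp(\m{t})$ are not modified by $\m{t}$ and so can be threaded through by \ref{rule:wp-const}, since the semantics of WP preserves stores at non-support indices.

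The main obstacle is purely organizational: threading the logical variables $s_i$ through the derivation via \ref{rule:wp-all}, and verifying expressiveness of the assertion language. The latter is immediate here, since hyper-assertions are defined as arbitrary semantic predicates and so can internalize the big-step relation $\bigstep{\cdot}{\cdot}{\cdot}{\cdot}$ directly---there is no Cook-style expressiveness side condition to check. All Hoare-level work is absorbed by the oracle; LHC contributes only the compositional glue via \ref{rule:wp-conj} and \ref{rule:wp-cons}. Note that the resulting proof is intentionally the ``worst case'' one, specializing all relational reasoning to unary reasoning and thereby negating the very benefits that motivated \thelogic{}; the theorem is a completeness sanity check, not a recommended proof strategy.
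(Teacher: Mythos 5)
Your proposal is correct and follows essentially the same route as the paper's proof: pin each component's initial store with a logical variable, have the oracle derive the strongest-postcondition Hoare triple per component, embed these as arity-1 hyper-triples, combine them with \ref{rule:wp-conj}, and close with \ref{rule:wp-cons} using the semantic validity of the original hyper-triple together with introduction/elimination of the existential over initial hyper-stores (the paper additionally frames the pure fact $P(\m{s_0})$ via \ref{rule:wp-const}, a minor organizational difference). Even your closing remark that this degenerate strategy negates the relational benefits of the logic mirrors the paper's own commentary.
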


\begin{proof}
  We start with some preliminary definitions:
  \begin{align*}
    \var{ST}(s_0) &\is
      \fun s.
        \A \p{x}.s(\p{x}) = s_0(\p{x})
    \\
    \var{SP}_{t}(s_0) &\is
      \fun v.
      \fun s.
        \bigstep{t}{s_0}{v}{s}
  \end{align*}
  Consider an arbitrary valid hyper-triple
  $ P \proves \WP{\m{t}}{Q} $.
  Wlog, assume~$ \idx(P) \subs \supp(t) = I $
  --- it is easy to derive the general case by using
  \ref{rule:wp-const}.
  By relative completeness of Hoare logic,
  we can obtain derivations of the valid triples
  $ \T {\var{ST}(s_0)}{\m{t}(i)}{\var{SP}_{\m{t}(i)}(s_0)} $;
  the triples are valid by definition, for any~$s_0$.
  Hoare triples are equivalent to hyper-triples of arity~1,
  so we obtain \thelogic{} derivations for
  $
    \at{\var{ST}(s_0)}{i}
    \proves
    \WP{\m[i: \m{t}(i)]}{\ret.
      \at{(\var{SP}_{t}(s_0)(\ret[i]))}{i}
    }
  $,
  for any $i \in I$.
  We can then build the derivation:
  \[
    \infer*[Right=\labelstep{step:complete-1}]{
    \infer*[Right=\labelstep{step:complete-2}]{
    \infer*[Right=\labelstep{step:complete-3}]{
    \infer*[Right=\labelstep{step:complete-4}]{
    \infer*[Right=\labelstep{step:complete-5}]{
    \infer*[Right=\labelstep{step:complete-6}]{
      \forall i \in I\st
      \forall \m{s_0}\st\;
      \at{\var{ST}(\m{s_0}(i))}{i}
      \proves
      \WP{\m[i: \m{t}(i)]}{\ret.
        \at{(\var{SP}_{\m{t}(i)}(\m{s_0}(i))(\ret[i]))}{i}
      }
    }{
      \forall \m{s_0}\st\;
      \LAnd_{i\in I}\at{\var{ST}(\m{s_0}(i))}{i}
      \proves
      \WP{\m{t}}{\ret.
        \LAnd_{i\in I}
          \at{(\var{SP}_{\m{t}(i)}(\m{s_0}(i))(\ret[i]))}{i}
      }
    }}{
      \forall \m{s_0}\st\;
      P \land \LAnd_{i\in I}\at{\var{ST}(\m{s_0}(i))}{i}
      \proves
      \WP{\m{t}}{\ret.
        \LAnd_{i\in I}
          \at{(\var{SP}_{\m{t}(i)}(\m{s_0}(i))(\ret[i]))}{i}
      }
    }}{
      \forall \m{s_0}\st\;
      P \land \LAnd_{i\in I}\at{\var{ST}(\m{s_0}(i))}{i}
      \land P(\m{s_0})
      \proves
      \WP{\m{t}}{\ret.
        P(\m{s_0}) \land
        \LAnd_{i\in I}
          \at{(\var{SP}_{\m{t}(i)}(\m{s_0}(i))(\ret[i]))}{i}
      }
    }}{
      \forall \m{s_0}\st\;
      P \land \LAnd_{i\in I}\at{\var{ST}(\m{s_0}(i))}{i}
      \land P(\m{s_0})
      \proves
      \WP{\m{t}}{Q}
    }}{
      P \land
      \E \m{s_0}.\bigl(
        \LAnd_{i\in I}\at{\var{ST}(\m{s_0}(i))}{i}
        \land P(\m{s_0})
      \bigr)
      \proves
      \WP{\m{t}}{Q}
    }}{
      P
      \proves
      \WP{\m{t}}{Q}
    }
  \]
Step~\eqref{step:complete-1} is an application of consequence
using the trivial
\[
  P \proves
  P \land
  \E \m{s_0}.\bigl(
    \LAnd_{i\in I}\at{\var{ST}(\m{s_0}(i))}{i}
    \land P(\m{s_0})
  \bigr).
\]
Step~\eqref{step:complete-2} eliminates the existential on $\m{s_0}$.
Step~\eqref{step:complete-3} uses the rule of consequence again,
by using
\[
  P(\m{s_0}) \land
  \LAnd_{i\in I}
    \at{(\var{SP}_{\m{t}(i)}(\m{s_0}(i))(\ret[i]))}{i}
  \proves
  Q(\m{r})
\]
which we know holds true by the validity of the hyper-triple
$ P \proves \WP{\m{t}}{Q} $.
Step~\eqref{step:complete-4} uses \ref{rule:wp-const} on the \emph{pure} assertion
$ P(\m{s_0}) $.
Note that the application of~$P$ to~$\m{s_0}$ makes it independent of the store.
Step~\eqref{step:complete-5} simply drops the assumption of~$P$ by \ref{rule:wp-cons}.
Step~\eqref{step:complete-6} is an application of \ref{rule:wp-conj}.
\end{proof}

\Cref{th:rel-completeness}
is also known to hold for CHL~\cite{SousaD16},
and we have given a direct proof that it holds for \thelogic{}
for completeness of exposition.
This kind of relative completeness, however,
is not very useful given the motivations that underly relational logics.
The reason is that the derivations produced by \cref{th:rel-completeness}
are always the least relational ones:
each component is analysed in isolation,
its most precise (functional) specification is derived,
and then the overall relational hyper-triple is obtained
through an implication between the functional specifications and the overall
post-condition.
This is essentially the same proof strategy as reducing hypersafety to
Hoare triples via self-composition.

 \section{Case studies}
\label{sec:case-studies}
\subsection{Loop hoisting example}
\label{sec:ex-loop-hoisting}

\emph{Loop hoisting} is a compiler optimization that removes redundant code from a loop,
transforming\footnote{Many variations are possible, but the substance of the needed reasoning for correctness is the same.}
 the program
\code{$(t_1$;while$\;g\;$do$\;$($t_1$; $t_2))$} into
\code{$(t_1$;while$\;g\;$do$\;t_2)$}.
Clearly the optimization is not valid for arbitrary~$t_1$ and~$t_2$.
Assume that the goal is to prove it correct when
$t_1$ is deterministic and \emph{idempotent},
i.e.~$t_1$ is equivalent to $(t_1\p;t_1)$---plus some mild restrictions on variables usage. In general~$t_1$ makes use of some variables~$\pv{x}$;
$t_2$ may depend on~$\pv{x}$ and use some additional
variables~$\pv{y}$.
We assume the guard of the loop~$g$ is deterministic and has no side-effects,
and that~$t_1$ does not modify the variables in~$\pv{y}$,
and~$t_2$ is deterministic and does not modify the ones in~$\pv{x}$.

Formally, we assume:
\begin{align}
  &\J |- { \pv{x}(\I1) = \pv{x}(\I2) }
         {\m[\I 1: t_1, \I 2: t_1]}
         { \pv{x}(\I1) = \pv{x}(\I2) }
  \tag{\textsc{Det}$_{t_1}$}
  \label{spec:t1-det}
  \\
  &\J |- { \pv{x}(\I2)=\vec{v} }
        {\m[\I 1: t_1, \I 2: t_1]}
        { \pv{x}(\I1)=\vec{v} \implies \pv{x}(\I1)=\pv{x}(\I2) }
  \tag{\textsc{Idem}$_{t_1}$}
  \label{spec:t1-idemp}
  \\
  &\J |- {\pv{x}(\I1) = \pv{x}(\I2) \\
          \pv{y}(\I1) = \pv{y}(\I2)}
         {\m<\I 1: g, \I 2: g>}
         {\ret.
         \ret[\I1] = \ret[\I2]
}
  \tag{\textsc{Det}$_g$}
  \label{spec:guard-det}
  \\
  &\J |- { \pv{y}(\I1) = \pv{y}(\I2) }
         {\m[\I 1: t_2, \I 2: t_2]}
         { \pv{y}(\I1) = \pv{y}(\I2) }
  \tag{\textsc{Det}$_{t_2}$}
  \label{spec:t2-det}
  \\
  &\mods(g) = \emptyset \land
  \pvar(t_1) \inters \pv{y} = \emptyset \land
  \mods(t_2) \inters \pv{x} = \emptyset
  \tag{\textsc{Mods}}
  \label{spec:mods}
\end{align}
In assertions, we write $\p x(i)$ to refer to the value stored in
the program variable $\p x$ in the state of the program at index~$i$.
Given a vector of pairwise distinct program variables~$
  \pv{x} = \p x_1 \dots \p x_n
$ we write $ \pv{x}(i) $ for the vector
$ \p x_1(i) \dots \p x_n(i) $.
The determinism judgment~\eqref{spec:t1-det} states that,
if the same input is given to two runs of~$t_1$,
then the outputs of the two runs would be the same.
We discuss the encoding of idempotence
thoroughly in~\cref{sec:discuss-idempotence}.

For deterministic programs\rlap{,}\footnote{We elaborate more on the assumption of determinism in \cref{sec:logic}.}
the equivalence between the optimization's source and target programs
can be encoded as a hyper-triple:
\begin{equation}
  \J |-
    {
      \pv{x}(\I1)=\pv{x}(\I2) \\
      \pv{y}(\I1)=\pv{y}(\I2)
    }{\m<
      \I 1: \code{$t_1$;while$\;g\;$do$\;$($t_1$; $t_2$)},
      \I 2: \code{$t_1$;while$\;g\;$do$\;t_2$}
    >}{
      \pv{x}(\I1)=\pv{x}(\I2) \\
      \pv{y}(\I1)=\pv{y}(\I2)
    }
  \tag{\textsc{Hoist}}
  \label{spec:opt-correct}
\end{equation}

The goal is to modularly use the assumed hyper-triples,
to prove the equivalence hyper-triple.
Unfortunately, the relational logics presented in prior work
all lack the reasoning principles
needed to validate the optimization at this level of generality.

\paragraph{What is missing in previous work}
To isolate the issue, let us first consider a simpler instance
of our example, that the other relational logics can handle:
instead of correctness of the optimization, we attempt translation validation.
That is, we verify that the optimization applied to a specific source program
produces a target that is equivalent to it---a classic application domain for relational logics~\cite{Benton04,BartheCK11}.

Concretely, let
$t_1 \is (\code{a=b*b})$,
$t_2 \is (\code{c=c-a})$, and
$g \is (\code{c$\;$>$\;$a+b})$.
The goal is to prove the equivalence of these two programs:
\begin{center}\begin{tabular}{c@{\hspace{10ex}}c}
{\begin{sourcecode}
src $\is$
  a = b*b;
  while (c > a+b) do
    a = b*b;
    c = c-a
\end{sourcecode}}
&
{\begin{sourcecode}
tgt $\is$
  a = b*b;
  while (c > a+b) do
    c = c-a
$ $
\end{sourcecode}}
\end{tabular}
\end{center}
An intuitive argument for the equivalence of \p{src} and \p{tgt}
is based on the observation that after line~2
the values~$v_{\p{a}}$ and~$v_{\p{b}}$ of~\p{a} and~\p{b}, respectively,
are stable throughout both loops.
More formally, we would proceed with a ``lockstep'' proof:
first, the fact that the input states of the two programs coincide
is used to conclude that the states after line~2 still coincide.
This can be proven using determinism of the assignment and
the rule~\ref{rule:bin-lockstep-seq}.
Then we would observe that, on identical states,
the guards of the two loops would evaluate either both to true or both to false,
so we can proceed by considering the two loops in lockstep.
We would then want to show that
$\p{a}(\I 1)=\p{a}(\I 2)=v_{\p{a}} \land \p{b}(\I 1)=\p{b}(\I 2)=v_{\p{b}}$
is a relational invariant of the loop.
This, however, cannot be proven unless we know that
at the beginning of both loops, \p{a} already stores \p{b*b},
i.e.~$v_{\p{a}} = (v_{\p{b}})^2$.

This intuitive proof strategy, which is the one supported by relational logics in the literature, relies on a \emph{functional specification} for the assignment to \p{a} in order to derive the stronger assertion
$v_{\p{a}} = (v_{\p{b}})^2$
in the precondition of the loops.

Unlike the simple case of \code{a=b*b},
a functional specification for the generic~$t_1$ may be neither available
nor easy to obtain, especially in the context of compiler optimizations.
Nor, intuitively, should such a functional specification be needed in order to validate the loop hoisting optimization:
the weaker, abstract properties of determinism and idempotence would suffice.

\paragraph{A proof in \thelogic}
Thanks to its new reasoning principles, \thelogic\ is capable
of expressing a fully modular proof for proving correctness
of the general optimization.

The first step is to apply \ref{rule:wp-seq},
reducing the goal to proving:
\begin{align}
  &\J |-
    {\pv{x}(\I1) = \pv{x}(\I2) \\
     \pv{y}(\I1) = \pv{y}(\I2) }
    {\m<
      \I 1: t_1,
      \I 2: t_1
    >}
    {P_0}
  &
  &\J |-
    {P_0}
    {\m<
      \I 1: \code{while$\;g\;$do$\;$($t_1$; $t_2$)},
      \I 2: \code{while$\;g\;$do$\;t_2$}
    >}
    {\pv{x}(\I1) = \pv{x}(\I2) \\
     \pv{y}(\I1) = \pv{y}(\I2) }
  \label{spec:opt-correct2}
\end{align}
for a suitable assertion $P_0$ capturing the intermediate hyper-state
that is both precise and does not require to know the full functionality of~$t_1$.
The following assertion fulfils these requirements:
\[
  P_0 \is
    \E \vec{v}. \bigl(
      \pv{x}(\I1) = \pv{x}(\I2) = \vec{v}
      \land
      \WP{\m[\I1: t_1]}{\pv{x}(\I1) = \vec{v}}
    \bigr)
    \land
    \pv{y}(\I1) = \pv{y}(\I2)
\]
$P_0$ states that, in addition to the states at $\I1$ and $\I2$ coinciding
on $\pv{x}$ and $\pv{y}$,
running $t_1$ again in component $\I1$ would preserve the value of $\pv{x}$.
This summarises the property of idempotence quite nicely.
Notice that $P_0$ implies the postcondition of \eqref{spec:opt-correct},
so we can use it as a loop invariant for the proof of the loops.

We now prove the two hyper-triples \eqref{spec:opt-correct2}.
For the first triple:
\[
\footnotesize
  \infer*[right=\labelstep{step:conj-idemp-det}]{
    \infer*[right=\labelstep{step:idemp-nest}]{
      \V
      { \pv{x}(\I2)=\vec{v} }
      |-
      \WP {\m[\I 1: t_1, \I 2: t_1]}
          { \pv{x}(\I1)=\vec{v} \implies \pv{x}(\I1)=\pv{x}(\I2) }
    }{
      \V |-
      \WP{\m[\I1:t_1]}*{
        \E \vec{v}.
          \begin{pmatrix}
            \pv{x}(\I1) = \vec{v} \\
            \WP{\m[\I1: t_1]}{\pv{x}(\I1) = \vec{v}}
          \end{pmatrix}
      }
    }
    \and
    \infer*[right=\labelstep{step:det-y}]{
      \V
      \pv{x}(\I1) = \pv{x}(\I2)
      |-
      \WP{\m[\I 1: t_1, \I 2: t_1]}
         { \pv{x}(\I1) = \pv{x}(\I2) }
    }{
      \V
      {\begin{pmatrix}
        \pv{x}(\I1) = \pv{x}(\I2) \\
        \pv{y}(\I1) = \pv{y}(\I2)
      \end{pmatrix}}
      |-
      \WP{\m<\I 1: t_1, \I 2: t_1>}*
         {\begin{pmatrix}
           \pv{x}(\I1) = \pv{x}(\I2) \\
           \pv{y}(\I1) = \pv{y}(\I2)
         \end{pmatrix}}
    }
  }{
    \V
    {\begin{pmatrix}
      \pv{x}(\I1) = \pv{x}(\I2) \\
      \pv{y}(\I1) = \pv{y}(\I2)
    \end{pmatrix}}
    |-
    \WP{\m<
      \I 1: t_1,
      \I 2: t_1
    >}*{
      \E \vec{v}.
      \begin{pmatrix}
        \pv{x}(\I1) = \pv{x}(\I2) = \vec{v} \\
        \WP{\m[\I1: t_1]}{\pv{x}(\I1) = \vec{v}}
      \end{pmatrix}
      \land
      \pv{y}(\I1) = \pv{y}(\I2)
    }
  }
\]

Step~\eqref{step:idemp-nest} transforms the idempotence triple
(in weakest precondition form) into a nested weakest precondition form.
The derivation of this step was presented in \cref{sec:overview:reindex},
and is an instance of \cref{rule:wp-indirect}.

Step \eqref{step:det-y} is an application of \ref{rule:wp-const}
to the determinism triple, using the hypothesis that
$t_1$ does not modify~$\pv{y}$.
Step \eqref{step:conj-idemp-det} is an application of \ref{rule:wp-conj}.

The proof of the loops proceeds by an application of \ref{rule:wp-while} and \ref{rule:wp-cons}.
We can apply \ref{rule:wp-const} to the triple \eqref{spec:guard-det} and obtain
that the evaluation of the guards preserves~$P_0$
(thanks to~$\mods(g) = \emptyset$).
For the loop bodies we have to prove:
\[
  \V P_0 |-
  \WP{\m[\I1: t_1\p; t_2, \I2: t_2]}{P_0}
\]
We first massage the goal to separate the task into proving
$t_1$ preserves~$P_0$, and
that the two instances of~$t_2$ together preserve~$P_0$:
\[
  \infer{\infer{\infer{\infer{
    \V P_0 |- \WP{\m[\I1: t_1]}{P_0}
    \\
    \V P_0 |- \WP{\m[\I1: t_2,\I2: t_2]}{P_0}
  }{
    \V P_0 |- \WP{\m[\I1: t_1]}[\big]{\WP{\m[\I1: t_2,\I2: t_2]}{P_0}}
  }
  }{
    \V P_0 |- \WP{\m[\I1: t_1]}[\big]{\WP{\m[\I1: t_2]}{\WP{\m[\I2: t_2]}{P_0}}}
  }}{
    \V P_0 |- \WP{\m[\I1: t_1\p; t_2]}[\big]{\WP{\m[\I2: t_2]}{P_0}}
  }}{
    \V P_0 |- \WP{\m[\I1: t_1\p; t_2, \I2: t_2]}{P_0}
  }
\]
The judgment about~$t_2$ is a simple consequence of \eqref{spec:t2-det}
and \ref{rule:wp-const}.
The side condition of \ref{rule:wp-const} is satisfied
thanks to the assumption that
$
  \pvar(t_1) \inters \pv{y} = \emptyset
$ and observing that $
  \pvar(\WP{\m[\I1: t_1]}{\pv{x}(\I1) = \vec{v}})
  =
  \pvar(\m[\I1: t_1]) \union \pv{x}(\I1)
$.

The judgment about~$t_1$ is again a consequence of idempotence:
we need to establish not only that~$\pv{x}(\I1)$ will still be~$\vec{v}$,
which is immediate from the assumptions in~$P_0$, but that running~$t_1$ after
that will still yield the same result.
\[
  \infer*{
  \infer*{
  \infer*[Right={\labelstep{step:cons-xv}}]{
  \infer*[Right={\labelstep{step:conj-wp1}}]{
  \infer*{
  \infer*{\infer*[Right={\eqref{step:idemp-nest}}]{
    \V
    { \pv{x}(\I2)=\vec{v}' }
    |-
    \WP {\m[\I 1: t_1, \I 2: t_1]}
        { \pv{x}(\I1)=\vec{v}' \implies \pv{x}(\I1)=\pv{x}(\I2) }
  }{
    \V |-
    \WP{\m[\I1:t_1]}*{
      \E \vec{v}'.
        \bigl(
          \pv{x}(\I1) = \vec{v}' \land
          \WP{\m[\I1: t_1]}{\pv{x}(\I1) = \vec{v}'}
        \bigr)
    }
  }}{
      \V
      \WP{\m[\I1: t_1]}{\pv{x}(\I1) = \vec{v}}
      |-
      \WP{\m[\I1:t_1]}*{
        \E \vec{v}'.
          \bigl(
            \pv{x}(\I1) = \vec{v}' \land
            \WP{\m[\I1: t_1]}{\pv{x}(\I1) = \vec{v}'}
          \bigr)
      }
  }}{
    \V
    \WP{\m[\I1: t_1]}{\pv{x}(\I1) = \vec{v}}
    |-
    {\begin{pmatrix*}[l]
      \WP{\m[\I1: t_1]}{\pv{x}(\I1) = \vec{v}}
      \\
      \WP{\m[\I1:t_1]}*{
        \E \vec{v}'.
          \bigl(
            \pv{x}(\I1) = \vec{v}' \land
            \WP{\m[\I1: t_1]}{\pv{x}(\I1) = \vec{v}'}
          \bigr)
      }
    \end{pmatrix*}}
  }}{
    \V
    \WP{\m[\I1: t_1]}{\pv{x}(\I1) = \vec{v}}
    |-
    \WP{\m[\I1:t_1]}*{
      \E \vec{v}'.
        \bigl(
          \pv{x}(\I1) = \vec{v}' \land
          \WP{\m[\I1: t_1]}{\pv{x}(\I1) = \vec{v}'}
        \bigr)
        \land
        \pv{x}(\I1) = \vec{v}
    }
  }}{
    \V
    \WP{\m[\I1: t_1]}{\pv{x}(\I1) = \vec{v}}
    |-
    \WP{\m[\I1:t_1]}*{
      \pv{x}(\I1) = \vec{v} \land
      \WP{\m[\I1: t_1]}{\pv{x}(\I1) = \vec{v}}
    }
  }}{
    \V
    \pv{x}(\I1) = \vec{v},
    \WP{\m[\I1: t_1]}{\pv{x}(\I1) = \vec{v}}
    |-
    \WP{\m[\I1:t_1]}*{
      \pv{x}(\I1) = \vec{v} \land
      \WP{\m[\I1: t_1]}{\pv{x}(\I1) = \vec{v}}
    }
  }}{
    \V
    {\begin{pmatrix}
    \pv{x}(\I1) = \vec{v}                    \land
    \WP{\m[\I1: t_1]}{\pv{x}(\I1) = \vec{v}} \\
    \pv{x}(\I2) = \vec{v}                    \land
    \pv{y}(\I1) = \pv{y}(\I2)
    \end{pmatrix}}
    |-
    \WP{\m[\I1:t_1]}*{
    {\begin{pmatrix}
      \pv{x}(\I1) = \vec{v} \land
      \WP{\m[\I1: t_1]}{\pv{x}(\I1) = \vec{v}} \\
      \pv{x}(\I2) = \vec{v} \land
      \pv{y}(\I1) = \pv{y}(\I2)
    \end{pmatrix}}
    }
  }
\]
Step \eqref{step:cons-xv} is justified by \ref{rule:wp-cons}
using the fact that
$
  \V
  \pv{x}(\I1)=\vec{v},
  \pv{x}(\I1)=\vec{v}'
  |-
  \vec{v}=\vec{v}'
$.
Step \eqref{step:conj-wp1} is an application of \ref{rule:wp-conj}.
 \subsection{Aligned loop distributivity}
\label{sec:ex-distrib-aligned}

Here we carry out the proof sketched in \cref{sec:discuss-modularity}.
The challenge is performing a lockstep alignment proof of
\p{f} in \cref{fig:f-code},
by not relying on a specific implementation of \p{op},
but only on the following abstract specifications:
\begin{align}
  & \p{i},\p{r} \not\in \mods(\p{op}(x_1,x_2))
  \tag{\textsc{Mods}$_{\p{op}}$}
  \label{ass:op-mods}
  \\
\proves&
  \proj(\p{op}(x_1,x_2))
  \tag{\textsc{Proj}$_{\p{op}}$}
  \label{ass:op-proj}
  \\
  \proves&
  \WP{\m[
    \I 1: \code{op}(x_1,x_2),
    \I 2: \code{op}(x_1,x_2)
  ]}{
  \ret.
    \ret[\I 1]=\ret[\I 2]
  }
  \tag{\textsc{Det}$_{\p{op}}$}
  \label{ass:op-det}
  \\
  \proves&
  \WP{\m[\I 1: \code{op}(0,0)]}{\ret. \ret[\I 1]=0}
  \tag{\textsc{Zero}$_{\p{op}}$}
  \label{ass:op-zero}
  \\
  \proves&
  \WP{\m[
    \I 1: \code{op}(x_1,x_2),
    \I 2: \code{op}(x_2,x_1)
  ]}{
  \ret.
    \ret[\I 1]=\ret[\I 2]
  }
  \tag{\textsc{Comm}$_{\p{op}}$}
  \label{ass:op-comm}
  \\
  \proves&
  \WP{\m<
    \I 1: \code{op}(x_1\p{,}x_2),*
    \I 3: \code{op}(d_1\p{,}x_3),
    \I 2: \code{op}(x_2\p{,}x_3),*
    \I 4: \code{op}(x_1\p{,}d_2)
  >}*{
  \ret.
    \begin{conj}
      {\ret[\I 1]=d_1}
      \and
      {\ret[\I 2]=d_2}
    \end{conj}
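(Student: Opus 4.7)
The final hyper-triple encodes associativity of \p{op} in the indirect style of \eqref{spec:t-idemp}: components \I1 and \I2 compute the intermediate values $d_1$ and $d_2$, while \I3 and \I4 consume them as arguments to \code{op}; the postcondition commits the outputs of \I1 and \I2 to agree with the free auxiliary values $d_1, d_2$, and its continuation (truncated in the excerpt) is the conclusion $\ret[\I3] = \ret[\I4]$, yielding the standard 4-ary formulation of associativity seen earlier for \code{op}. Since this hyper-triple cannot be derived from the other five listed properties (\ref{ass:op-mods}--\ref{ass:op-comm}), which are all ``local'' properties on pairs of calls, my plan is to derive it from a \emph{direct-style} primitive $\J |- {\True} {\m[\I5: \code{op(op(}x_1,x_2\code{),}x_3\code{)}, \I6: \code{op(}x_1,\code{op(}x_2,x_3\code{))}]} {\ret[\I5]=\ret[\I6]}$, together with \ref{ass:op-det} and \ref{ass:op-proj}.

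First, I would apply \ref{rule:wp-nest} to peel off \I3 and \I4 from the outer WP on \I1 and \I2, putting the inner WP in the postcondition of the outer one. This matches the structure of indirect-style reasoning: the outer WP produces the values that the inner WP needs as $d_1$ and $d_2$. I would then apply \ref{rule:wp-conj} (twice) to factor the combined 4-ary goal into: (a)~the 2-ary determinism-like pairing of \I1 with an auxiliary component executing \code{op(op(}$x_1,x_2$\code{),}$x_3$\code{)}, asserting that its first sub-call yields $d_1$, and similarly (b)~the analogous pairing for \I2 feeding into an auxiliary executing \code{op(}$x_1$\code{,op(}$x_2,x_3$\code{))}, with \ref{ass:op-det} justifying each pairing.

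Next, the core associativity primitive relates the two auxiliary outer \code{op} calls on fresh indices, establishing equality of their outputs. I would use \ref{rule:wp-nest} in reverse, together with \ref{rule:wp-proj} (discharging the projectability side-condition via \ref{ass:op-proj}), to eliminate these auxiliary components, leaving only the original four. The bridging from direct to indirect style is the pattern captured by \ref{rule:wp-indirect}: the outer \code{op}-call in each auxiliary is aligned (via the reindexing rules \ref{rule:wp-idx-post} and \ref{rule:wp-idx-swap}) with the corresponding run at \I3 or \I4, converting the hypothesis $\ret[\I1]=d_1$ (resp.~$\ret[\I2]=d_2$) into the identification of the actual and the auxiliary outer-\code{op} inputs. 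Determinism is the glue that makes these identifications propagate through the nested \code{op} calls.

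The main obstacle, I expect, is managing the web of reindexings between the four goal indices \I1--\I4, the two auxiliary indices \I5, \I6 holding the fully associated terms, and the inner sub-calls to \code{op} within those auxiliaries. Each application of \ref{rule:wp-idx-post} requires verifying the freshness of the target index with respect to the current assumption context, and each \ref{rule:wp-proj} step requires tracking projectability carefully across nested WPs. This is exactly the sort of proof that is out of reach for enhanced-lockstep logics---there is no alignment of the four components that matches the structure of the assumed direct-style primitive---but that \thelogic{} enables through the interplay of \ref{rule:wp-nest}, \ref{rule:wp-conj}, \ref{rule:wp-proj}, and the reindexing laws.
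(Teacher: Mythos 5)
You have misread the status of this formula: the conjunction $\ret[\I 1]=d_1 \land \ret[\I 2]=d_2$ is the antecedent of the implication in the postcondition of \eqref{ass:op-assoc}, which is an \emph{assumption} about the library-provided \p{op} in the case study of \cref{sec:ex-distrib-aligned}---it is never derived in the paper, and deriving it is not the task. The only derivation the paper performs around this formula is the short preparatory step that turns \eqref{ass:op-assoc} into the nested form \eqref{fop-aligned:op-assoc}: one application of \ref{rule:wp-nest} splits the four components into an outer WP on $\I1,\I2$ and an inner WP on $\I3,\I4$; one application of \ref{rule:wp-impl-r} hoists the antecedent $\m{w}(\I1)=d_1 \land \m{w}(\I2)=d_2$ out of the inner WP (sound because it constrains only the return values of the outer components, not variables modified by the inner hyper-term); and a final consequence step instantiates the free $d_1,d_2$ with $\m{w}(\I1),\m{w}(\I2)$, making the antecedent trivially true and discharging it. No reindexing, no projection, no determinism, and no auxiliary components are involved in this step.

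Your plan goes in the opposite direction and, more importantly, changes the interface of the example. You propose to \emph{derive} the indirect-style 4-ary triple from a new primitive relating the syntactically nested terms $\p{op}(\p{op}(x_1,x_2),x_3)$ and $\p{op}(x_1,\p{op}(x_2,x_3))$. The entire point of stating associativity as \eqref{ass:op-assoc} is that the client of the library only ever has to verify hyper-triples over \emph{single} calls to \p{op}, with outputs fed to inputs indirectly through the implication in the postcondition; positing a specification of a compound term reintroduces exactly the kind of composite proof obligation the indirect style is designed to avoid, and such a primitive is not among the assumptions \eqref{ass:op-mods}--\eqref{ass:op-assoc} that the case study permits. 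The machinery you invoke (\ref{rule:wp-indirect}, \ref{rule:wp-idx-post}, \ref{rule:wp-idx-swap}, \ref{rule:wp-proj} with projectability bookkeeping) is indeed the toolbox the paper uses for \emph{consuming} indirect-style triples, but here it is aimed at a non-existent goal: the conjunction in question is simply the hypothesis that gets instantiated away.
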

    \implies
      \ret[\I 3]=\ret[\I 4]
  }
  \tag{\textsc{Assoc}$_{\p{op}}$}
  \label{ass:op-assoc}
\end{align}
These assumptions formalize to the ones stated in \cref{sec:discuss-modularity}:
\begin{enumerate}[label=(\roman*)]
  \item not modifying variables of~\p{f},
    \eqref{ass:op-mods}
  \item a total function,
    that is,
    projectable \eqref{ass:op-proj} and
    deterministic \eqref{ass:op-det};\\
    with $\p{op}(0,0)=0$ \eqref{ass:op-zero},
  \item associative \eqref{ass:op-assoc}, and
  \item commutative \eqref{ass:op-comm}.
\end{enumerate}

We want to prove distributivity of \p{f} in its second argument:
\begin{equation*}
  \begin{judgement}[goal:fop-aligned]
    \tagHyp{$D$}
    \WP{\m[\I 4: \p{op}(b\code{,} c)]}{\ret.\ret[\I 4] = d}
    \\
    \tagHyp{$P_0$}
    \p{r}(\I1) = \p{r}(\I2) = \p{r}(\I3) = 0
    \land
    \p{i}(\I1) = \p{i}(\I2) = \p{i}(\I3)
  \proves
    \tagGoal{$G$}
    \label{fop-aligned:goal-post}
    \WPv
    {\m[ \I 1: \p{f}(a\code{,} d)
       , \I 2: \p{f}(a\code{,} b)
       , \I 3: \p{f}(a\code{,} c)
       ]}
    {
    \E v_1,v_2,v_3.
    \left(
      \begin{array}{l}
        \p{r}(\I1) = v_1 \land
        \p{r}(\I2) = v_2 \land
        \p{r}(\I3) = v_3
        \\
        \WP{\m[\I 5: \p{op}(v_2,v_3)]}{
          \ret. \ret[\I 5] = v_1
        }
      \end{array}
    \right)
    }
  \end{judgement}
\end{equation*}

We want to proceed with a proof where the loops at $\I1,\I2,\I3$ are
proved in lockstep.
To do so we will use the following loop invariant:
\[
  P =
  \begin{conj}
    \p{i}(\I1) = \p{i}(\I2) = \p{i}(\I3)
    \and
    \WP{\m[\I 4: \p{op}(b\code{,} c)]}{\ret.\ret[\I 4] = d}
    \and
    \E v_1,v_2,v_3.
\begin{conj}[c]
    \p{r}(\I1) = v_1 \land
    \p{r}(\I2) = v_2 \land
    \p{r}(\I3) = v_3
    \and
      \WP{\m[\I 5: \p{op}(v_2,v_3)]}{
        \ret. \ret[\I 5] = v_1
      }
    \end{conj}
  \end{conj}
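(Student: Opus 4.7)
The plan is to apply the lockstep while rule \ref{rule:wp-while} to the aligned hyper-term $\m[\I1:\p{f}(a,d),\I2:\p{f}(a,b),\I3:\p{f}(a,c)]$ with the stated invariant $P$. Since $P$ enforces $\p{i}(\I1)=\p{i}(\I2)=\p{i}(\I3)$, the three guards $\p{i}<a$ evaluate to the same boolean in every component, so the $\m{b}=_I 0$ and $\m{b}\ne_I 0$ disjuncts in the premise of \ref{rule:wp-while} line up with the actual runtime choice and alignment is sound. The derivation then decomposes into three obligations: (i) establishing $P$ at loop entry from $P_0 \land D$, (ii) preservation of $P$ by one aligned iteration of the body, and (iii) showing that $P$ together with the negated guards entails the postcondition of $G$.

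Obligations (i) and (iii) are essentially immediate. For (i), pick $v_1=v_2=v_3=0$; the remaining nested WP $\WP{\m[\I5:\p{op}(0,0)]}{\ret[\I5]=0}$ is discharged by \eqref{ass:op-zero} after renaming with \ref{rule:wp-idx}, while $D$ passes through unchanged because it is a pure nested-WP carrier on an index disjoint from the goal. For (iii), the existential conjunct of $P$ is literally the existential conjunct of $G$, so a single \ref{rule:wp-cons} closes the goal after dropping the residual $\p{i}$ and $D$ information.

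The substantive work is obligation (ii). Let $v_1,v_2,v_3$ be the old witnesses and $v_1' := \p{op}(v_1,d)$, $v_2' := \p{op}(v_2,b)$, $v_3' := \p{op}(v_3,c)$ be the accumulators after the body $\p{r}:=\p{op}(\p{r},y);\ \p{i}:=\p{i}+1$. Re-establishing $P$ amounts to producing, within the logic, the nested WP $\WP{\m[\I5:\p{op}(v_2',v_3')]}{\ret[\I5]=v_1'}$. Substituting the invariant's $\p{op}(v_2,v_3)=v_1$ and the hypothesis $D$, this reduces to the Eckmann--Hilton-style identity $\p{op}(\p{op}(v_2,b),\p{op}(v_3,c)) = \p{op}(\p{op}(v_2,v_3),\p{op}(b,c))$, which is the standard consequence of associativity and commutativity.

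The hard part will be carrying out that reduction inside the logic, since \eqref{ass:op-assoc} is a 4-ary hyper-triple and must be composed with a 3-ary lockstep derivation. My plan is to use \ref{rule:wp-conj} to introduce fresh auxiliary indices running $\p{op}$ on each intermediate term of an associativity/commutativity chain connecting the two sides, apply \eqref{ass:op-assoc} and \eqref{ass:op-comm} (with the necessary index permutations via \ref{rule:wp-idx}) to propagate equalities along the chain, and invoke \eqref{ass:op-det} to tie the symbolic names $v_1',v_2',v_3'$ to the results of the concrete body calls. Once the chain is complete, \ref{rule:wp-proj} — with projectability discharged by \eqref{ass:op-proj} — removes the auxiliary indices, leaving only the three-component goal. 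This arity-mixing between the ternary invariant and the quaternary associativity hypothesis is precisely the pattern \thelogic\ is designed to support, and the fact that no functional specification of $\p{op}$ is ever needed is what makes the proof genuinely modular.
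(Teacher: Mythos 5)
Your proposal matches the paper's own proof (the aligned loop distributivity derivation in the appendix) essentially step for step: the same three \ref{rule:wp-while} obligations, the same reduction of invariant preservation to the interchange identity $\p{op}(\p{op}(v_2,b),\p{op}(v_3,c)) = \p{op}(\p{op}(v_2,v_3),\p{op}(b,c))$, and the same strategy of chaining \eqref{ass:op-comm}, \eqref{ass:op-assoc}, and \eqref{ass:op-det} through auxiliary components introduced by \ref{rule:wp-conj} and eliminated by \ref{rule:wp-proj} with projectability from \eqref{ass:op-proj}. The only cosmetic difference is that relocating the nested WP of the re-established invariant back to index $\I5$ goes through the non-bijective reindexing rules \ref{rule:wp-idx-post} and \ref{rule:wp-idx-swap} rather than \ref{rule:wp-idx}, which does not affect the argument.
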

\]

Establishing the invariant initially is a matter of applying consequence and using the assumption that
$ \p{op}(0,0) = 0 $:
\begin{derivation}
  \infer*{
    \infer*{
      \V |- \WP{\m[i: \p{op}(0,0)]}{\ret.\ret[i]=0}
    }{
      D, P_0 \proves P
    }
    \\
    \infer*{
      P
      \proves
      \WP{\m[ \I 1: \p{f}(a\code{,} d)
            , \I 2: \p{f}(a\code{,} b)
            , \I 3: \p{f}(a\code{,} c)
            ]}{P}
      \\
      P \proves Q
    }{
      P
      \proves
      \WP{\m[ \I 1: \p{f}(a\code{,} d)
            , \I 2: \p{f}(a\code{,} b)
            , \I 3: \p{f}(a\code{,} c)
            ]}{Q}
    }
  }{
    D, P_0
    \proves
    \WP{\m[ \I 1: \p{f}(a\code{,} d)
          , \I 2: \p{f}(a\code{,} b)
          , \I 3: \p{f}(a\code{,} c)
          ]}{Q}
  }
\end{derivation}
where~$Q$ is the postcondition of the goal \ref{fop-aligned:goal-post}.

An application of \ref{rule:wp-while} and \ref{rule:wp-cons} gives us:
\begin{derivation}[\footnotesize]
  \infer*{
    \infer*{}{
    P
    \proves
    \WP{\m<
         \I 1: \p{i}<a
       , \I 2: \p{i}<a
       , \I 3: \p{i}<a
       >}*{
         \begin{array}{@{}r@{}l@{}}
         \fun\m{b}.&
           P \\
           {} \land {}&
           \begin{conj}
           \phantom{{}\lor{}}\m{b}=_{\I1\I2\I3}0\\
           {}\lor \m{b}\ne_{\I1\I2\I3}0
           \end{conj}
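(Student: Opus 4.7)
The goal is to show that evaluating the guard $\p{i}<a$ simultaneously at the three indices $\I1,\I2,\I3$ from a state satisfying the loop invariant $P$ yields a hyper-boolean $\m{b}$ such that $P$ still holds and either $\m{b}=_{\I1\I2\I3}0$ or $\m{b}\ne_{\I1\I2\I3}0$. The essential content is the guard-agreement clause; $P$-preservation is straightforward since the guard has no side effects.

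First I would observe that $\mods(\p{i}<a)=\emptyset$ and that $\pvar(P)\cap\mods(\m[i: \p{i}<a\mid i\in\{\I1,\I2,\I3\}])=\emptyset$, which means I can invoke the frame rule \ref{rule:wp-const} (or more directly \ref{rule:wp-elim}) to carry $P$ past the guard evaluation. The proof then reduces to the conjunction of two WPs on the same hyper-term, combined via \ref{rule:wp-conj}: one whose postcondition is $P$ itself (handled by framing), and one whose postcondition is $\m{b}=_{\I1\I2\I3}0 \lor \m{b}\ne_{\I1\I2\I3}0$.

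To discharge the latter, I would apply the lockstep primitive-operation rule \ref{rule:wp-primop-eval} (together with \ref{rule:wp-var} and \ref{rule:wp-val}) to the three occurrences of $\p{i}<a$, obtaining that $\m{b}(i) = \mathbf{1}[\p{i}(i)<a]$ for each $i\in\{\I1,\I2,\I3\}$. At this point the conjunct $\p{i}(\I1)=\p{i}(\I2)=\p{i}(\I3)$ extracted from $P$ via \ref{rule:wp-const} forces $\m{b}(\I1)=\m{b}(\I2)=\m{b}(\I3)$, and a simple case split on whether this common value is $0$ or not yields the desired disjunction, which is then propagated back through the WPs by \ref{rule:wp-cons}.

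The step that requires the most care is not mathematical but bureaucratic: unfolding the lockstep evaluation of the primitive comparison $\p{i}<a$ across three indices while simultaneously keeping the equality of the $\p{i}$-values visible in the postcondition. Using \ref{rule:wp-conj} early — splitting ``preserve $P$'' from ``produce agreeing booleans'' — keeps the two concerns orthogonal and avoids carrying the heavy invariant $P$ (with its nested WPs about $\p{op}$) through the primitive-evaluation rules, where it would only create clutter. No reasoning about $\p{op}$ or about the body of \p{f} is needed here; the guard-agreement lemma is purely a consequence of the $\p{i}$-equality clause of the invariant and the fact that $\p{i}<a$ is side-effect-free and deterministic.
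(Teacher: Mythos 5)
Your proof is correct and matches the paper's (very terse) treatment of this premise: the paper dispatches it with a one-line appeal to \ref{rule:wp-const}, relying implicitly on exactly the two facts you make explicit, namely that $\mods(\p{i}\p{<}a)=\emptyset$ lets $P$ be framed past the guard evaluation, and that the conjunct $\p{i}(\I1)=\p{i}(\I2)=\p{i}(\I3)$ of the invariant forces the three resulting booleans to coincide, giving $\m{b}=_{\I1\I2\I3}0 \lor \m{b}\ne_{\I1\I2\I3}0$. Your elaboration via \ref{rule:wp-conj} and the primitive-evaluation rules is just a fleshed-out version of the same argument.
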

         \end{array}
       }
    }
    \\
    \infer*{
      P
      \proves
      \WP{\m< \I 1: \code{r:=op(r,$d$)}
            , \I 2: \code{r:=op(r,$b$)}
            , \I 3: \code{r:=op(r,$c$)}
            >}[\big]{P}
      \\
      P
      \proves
      \WP{\m< \I 1: \code{i:=i+1}
            , \I 2: \code{i:=i+1}
            , \I 3: \code{i:=i+1}
            >}[\big]{P}
    }{
      P
      \proves
      \WP{\m< \I 1: \code{r:=op(r,$d$);i:=i+1}
            , \I 2: \code{r:=op(r,$b$);i:=i+1}
            , \I 3: \code{r:=op(r,$c$);i:=i+1}
            >}[\big]{P}
    }
  }{
    P
    \proves
    \WP{\m[ \I 1: \p{f}(a\code{,} d)
       , \I 2: \p{f}(a\code{,} b)
       , \I 3: \p{f}(a\code{,} c)
       ]}{P}
  }
\end{derivation}
The premise about the guards follows directly from \ref{rule:wp-const}.

For the second premise, what we need to do is to massage the judgment
until it becomes derivable from
the commutativity and associativity axioms on \p{op}.
\begin{derivation}[\footnotesize]
  \infer*{
  \infer*[Right=\labelstep{fop:reindex}]{
  \infer*[Right=\labelstep{fop:nest1}]{
  \infer*[Right=\labelstep{fop:nest2}]{
\proves
    \WP{\m< \I 5: \code{op($v_2$,$v_3$)}
          ,*\I 2: \code{op($v_2$,$b$)}
          , \I 4: \code{op($b$, $c$)}
          ,*\I 3: \code{op($v_3$,$c$)}
          >}*{
        \fun\m{v}.
\WP{\m< \I 1: \code{op($\m{v}({\I5})$,$\m{v}({\I4})$)}
          , \I 6: \code{op($\m{v}({\I2})$,$\m{v}({\I3})$)}
          >}*{
        \fun\m{r}'. \m{r}'(\I1) = \m{r}'(\I{6})
      }
    }
  }{
    {\begin{matrix*}[l]
    \p{r}(\I1) = v_1 \\
    \p{r}(\I2) = v_2 \\
    \p{r}(\I3) = v_3
    \end{matrix*}}
\proves
    \WP{\m<\I 5: \p{op}(v_2\code{,}v_3)
          ,\I 4: \p{op}(b\code{,} c)>}*{
        \fun\m{v}.
\WP{\m< \I 1: \code{op($\m{v}({\I{5}})$,$\m{v}({\I4})$)}
          , \I 2: \code{op(r,$b$)}
          , \I 3: \code{op(r,$c$)}
          >}*{
      {\begin{matrix*}[l]
        \fun\m{r}'.\E v_1',v_2',v_3'.
        \\\quad
        \m{r}'(\I1) = v_1' \land
        \m{r}'(\I2) = v_2' \land
        \m{r}'(\I3) = v_3' \land {}
        \\\quad
        \WP{\m[\I 6: \p{op}(v_2',v_3')]}{
          \ret. \ret[\I{6}] = v_1'
        }
      \end{matrix*}}
    }
    }
  }}{
    {\begin{matrix*}[l]
\p{r}(\I1) = v_1,
    \p{r}(\I2) = v_2,
    \p{r}(\I3) = v_3
    \\
    \WP{\m[\I 4: \p{op}(b\code{,} c)]}{\ret.\ret[\I 4] = d}
    \\
\WP{\m[\I 5: \p{op}(v_2,v_3)]}{
      \ret. \ret[\I 5] = v_1
    }
    \end{matrix*}}
    \proves
    \WP{\m< \I 1: \code{r:=op(r,$d$)}
          , \I 2: \code{r:=op(r,$b$)}
          , \I 3: \code{r:=op(r,$c$)}
          >}*{
      {\begin{matrix*}[l]
        \E v_1',v_2',v_3'.
        \\\quad
        \p{r}(\I1) = v_1' \land
        \p{r}(\I2) = v_2' \land
        \p{r}(\I3) = v_3' \land {}
        \\\quad
        \WP{\m[\I 6: \p{op}(v_2',v_3')]}{
          \ret. \ret[\I{6}] = v_1'
        }
      \end{matrix*}}
    }
  }}{
    {\begin{matrix*}[l]
\p{r}(\I1) = v_1,
    \p{r}(\I2) = v_2,
    \p{r}(\I3) = v_3
    \\
    \WP{\m[\I 4: \p{op}(b\code{,} c)]}{\ret.\ret[\I 4] = d}
    \\
\WP{\m[\I 5: \p{op}(v_2,v_3)]}{
      \ret. \ret[\I 5] = v_1
    }
    \end{matrix*}}
    \proves
    \WP{\m< \I 1: \code{r:=op(r,$d$)}
          , \I 2: \code{r:=op(r,$b$)}
          , \I 3: \code{r:=op(r,$c$)}
          >}*{
      {\begin{matrix*}[l]
        \E v_1',v_2',v_3'.
        \\\quad
        \p{r}(\I1) = v_1' \land
        \p{r}(\I2) = v_2' \land
        \p{r}(\I3) = v_3' \land {}
        \\\quad
        \WP{\m[\I 5: \p{op}(v_2',v_3')]}{
          \ret. \ret[\I{5}] = v_1'
        }
      \end{matrix*}}
    }
  }}{
    P
    \proves
    \WP{\m[ \I 1: \code{r:=op(r,$d$)}
          , \I 2: \code{r:=op(r,$b$)}
          , \I 3: \code{r:=op(r,$c$)}
          ]}[\big]{P}
  }
\end{derivation}

For \eqref{fop:reindex} we use \ref{rule:wp-idx-post} to apply~$\isub{6->5}$
to the postcondition of the top-level WP, and \ref{rule:wp-idx-swap}
to apply the reindexing to the inner WP.
For \eqref{fop:nest1} we use \ref{rule:wp-assign}, \ref{rule:wp-impl-l};
then we apply \ref{rule:wp-subst}
to substitute~$\p{r}$ and $d$ with~$ \m{v}(i) $ and~$ \m{v}(4) $ respectively.
For \eqref{fop:nest2} we just rearrange
the nesting of components using \ref{rule:wp-nest}.

The judgment we arrive at is essentially asking to prove
\begin{equation}
  \p{op}(\p{op}(v_2,b),\p{op}(v_3,c))
  =
  \p{op}(\p{op}(v_2,v_3), \p{op}(b,c)).
  \label{fop-aligned:informal-subgoal}
\end{equation}
which is justified by commutativity and associativity.

As a preparatory step, we transform the
associativity judgment \eqref{ass:op-assoc}
into a form that is more amenable to be applied:
\begin{equation}
  \proves
  \WP{\m<
      \I 1: \code{op($x_1$, $x_2$)},
      \I 2: \code{op($x_2$, $x_3$)}
    >}*{
    \fun\m{w}.
      \WP{\m<
        \I 3: \code{op($\m{w}({\I1})$, $x_3$)},
        \I 4: \code{op($x_1$, $\m{w}({\I2})$)}
      >}[\Big]{
      \fun\m{v}.
        \m{v}(\I3) = \m{v}(\I4)
      }
    }
    \label{fop-aligned:op-assoc}
\end{equation}
This is derived from \eqref{ass:op-assoc} as follows:
\[\small
  \infer*[Right=\labelstep{fop-align:op-assoc-rew}]{
  \infer*[Right=\ref{rule:wp-impl-r}]{
  \infer*[Right=\ref{rule:wp-nest}]{
    \proves
    \WP{\m<
      \I 1: \code{op}(x_1\p{,}x_2),*
      \I 3: \code{op}(d_1\p{,}x_3),
      \I 2: \code{op}(x_2\p{,}x_3),*
      \I 4: \code{op}(x_1\p{,}d_2)
    >}*{
    \ret.
      \begin{conj}
        {\ret[\I 1]=d_1}
        \and
        {\ret[\I 2]=d_2}
      \end{conj}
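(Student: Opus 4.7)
The statement to prove—a WP over four op calls whose postcondition carries the conjunction $\ret[\I 1]=d_1 \land \ret[\I 2]=d_2$—is derived from the associativity axiom \eqref{ass:op-assoc} through the three-step inference chain displayed in the excerpt: \ref{rule:wp-nest}, \ref{rule:wp-impl-r}, and the custom op-assoc-rew. Unlike a naive approach that merely carries the conjunction along as an implication antecedent, my plan is to discharge it explicitly as a reflexive subgoal at the op-assoc-rew stage, after specializing the free variables $d_1, d_2$ to the outer WP's return values.

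First, \ref{rule:wp-nest} splits the four-component hyper-term of \eqref{ass:op-assoc} into an outer WP on components $\I 1$ and $\I 2$ and a nested inner WP on components $\I 3$ and $\I 4$; the single return-value binder $\ret$ accordingly separates into an outer $\m{w}$ and an inner $\m{v}$. Second, \ref{rule:wp-impl-r} pulls the implication's antecedent $\m{w}(\I 1)=d_1 \land \m{w}(\I 2)=d_2$ out of the inner WP's postcondition—valid because this assertion references only positions $\I 1$ and $\I 2$, which are disjoint from the inner hyper-term's support, so the side condition on $\pvar$ and $\mods$ is met trivially.

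Third, op-assoc-rew is where the conjunction is genuinely proved as a goal rather than merely shuffled around. Because $d_1, d_2$ are free in \eqref{ass:op-assoc}, they are implicitly universally quantified; I use \ref{rule:wp-all} to make the quantification explicit inside the outer postcondition, then apply consequence to specialize $d_1 := \m{w}(\I 1)$ and $d_2 := \m{w}(\I 2)$. Under this specialization, the conjunction becomes the reflexive statement $\m{w}(\I 1)=\m{w}(\I 1) \land \m{w}(\I 2)=\m{w}(\I 2)$, which I discharge as a subgoal by reflexivity, thereby firing the implication. The consequent, with $d_1$ and $d_2$ replaced by $\m{w}(\I 1)$ and $\m{w}(\I 2)$, yields the target postcondition in which the inner WP's terms have become $\text{op}(\m{w}(\I 1), x_3)$ and $\text{op}(x_1, \m{w}(\I 2))$.

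The main obstacle is step three: the specialization $d_1 \mapsto \m{w}(\I 1)$ must propagate coherently through two distinct loci—the assertional conjunction (where reflexivity then proves it as a subgoal) and the syntactic occurrences of $d_1$ as an argument inside the inner hyper-term's calls to op. This dual propagation, orchestrated by \ref{rule:wp-all} together with consequence, is exactly what distinguishes op-assoc-rew from a simpler manipulation: by proving the antecedent as a subgoal rather than merely passing it along, the resulting form is free of implications and directly usable as a clean lemma in downstream derivations.
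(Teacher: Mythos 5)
Your proposal follows essentially the same route as the paper's own derivation of the rewritten associativity judgment: the paper likewise applies \ref{rule:wp-nest} to split the four components into an outer WP on $\I1,\I2$ and a nested WP on $\I3,\I4$, uses \ref{rule:wp-impl-r} to hoist the antecedent $\m{w}({\I1})=d_1 \land \m{w}({\I2})=d_2$ out of the inner postcondition (sound because that assertion is pure with respect to the store), and then in its step labelled ``op-assoc-rew'' instantiates the universally quantified $d_1,d_2$ with $\m{w}({\I1}),\m{w}({\I2})$ so that the antecedent holds by reflexivity and the implication is discharged, yielding the implication-free nested form. Your additional detail of routing the instantiation through \ref{rule:wp-all} and consequence merely spells out explicitly what the paper leaves implicit in ``picks $\m{w}({\I1})$ and $\m{w}({\I2})$ for the universally quantified $d_1$ and $d_2$.''
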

      \implies
        \ret[\I 3]=\ret[\I 4]
    }
  }{
    \proves
    \WP{\m{t}_{1,2}}*{
      \fun\m{w}.
        \WP{\m{t}_{3,4}^{d}}{
        \fun\m{v}.
          (\m{w}(\I 1)=d_1
          \land
          \m{w}(\I 2)=d_2)
          \implies
            \m{v}(\I 3)=\m{v}(\I 4)
        }
      }
  }}{
    \proves
    \WP{\m{t}_{1,2}}*{
      \fun\m{w}.
        (\m{w}(\I 1)=d_1
        \land
        \m{w}(\I 2)=d_2)
        \implies
        \WP{\m{t}_{3,4}^{d}}{
        \fun\m{v}.
            \m{v}(\I 3)=\m{v}(\I 4)
        }
      }
  }}{
    \proves
    \WP{\m{t}_{1,2}}*{
      \fun\m{w}.
        \WP{\m{t}_{3,4}^{\m{w}}}{
        \fun\m{v}.
          \m{v}(\I3) = \m{v}(\I4)
        }
      }
  }
\]
where
\begin{align*}
  \m{t}_{1,2} &\is
    \m<
      \I 1: \code{op($x_1$, $x_2$)},
      \I 2: \code{op($x_2$, $x_3$)}
    >
  &
  \m{t}_{3,4}^{d} &\is
    \m<
      \I 3: \code{op($d_1$, $x_3$)},
      \I 4: \code{op($x_1$, $d_2$)}
    >
  &
  \m{t}_{3,4}^{\m{w}} &\is
    \m<
      \I 3: \code{op($\m{w}({\I1})$, $x_3$)},
      \I 4: \code{op($x_1$, $\m{w}({\I2})$)}
    >
\end{align*}
Step~\eqref{fop-align:op-assoc-rew} picks
$\m{w}({\I1})$ and $\m{w}({\I2})$,
for the universally quantified
$d_1$ and $d_2$ respectively.

Algebraically,
we would prove \eqref{fop-aligned:informal-subgoal} using
commutativity and associativity:
\begin{align}
  &
  \p{op}(\p{op}(v_2,b),\p{op}(v_3,c))
  \tag{E1}\label{fop-aligned:e1}
  \\ = {} &
  \p{op}(\p{op}(v_2,b),\p{op}(c,v_3))
  \tag{E2}\label{fop-aligned:e2}
  \\ = {} &
  \p{op}(v_2,\p{op}(b,\p{op}(c,v_3)))
  \tag{E3}\label{fop-aligned:e3}
  \\ = {} &
  \p{op}(v_2,\p{op}(\p{op}(b,c),v_3))
  \tag{E4}\label{fop-aligned:e4}
  \\ = {} &
  \p{op}(v_2,\p{op}(v_3,\p{op}(b,c)))
  \tag{E5}\label{fop-aligned:e5}
  \\ = {} &
  \p{op}(\p{op}(v_2,v_3), \p{op}(b,c)).
  \tag{E6}\label{fop-aligned:e6}
\end{align}
We can perform the above reasoning within the logic by constructing the
following derivation:
\[
  \infer*[Right={\labelstep{fop:e1e6}}]{
    \infer*[Right={\labelstep{fop:e1e2}}]{\eqref{ass:op-comm}}{
      \ref{fop-aligned:e1=e2}
    }
    \\
    \infer*[Right={\labelstep{fop:e2e6}}]{
      \infer*[Right={\labelstep{fop:e2e3}}]{\eqref{fop-aligned:op-assoc}}{
        [\text{\ref{fop-aligned:e2}} = \text{\ref{fop-aligned:e3}}]
      }
      \\
      \infer*[Right={\labelstep{fop:e3e6}}]{
        \infer*[Right={\labelstep{fop:e3e4}}]{\eqref{fop-aligned:op-assoc}}{
          [\text{\ref{fop-aligned:e3}} = \text{\ref{fop-aligned:e4}}]
        }
        \\
        \infer*[Right={\labelstep{fop:e4e6}}]{
          \infer*[Right={\labelstep{fop:e4e5}}]{\eqref{ass:op-comm}}{
            [\text{\ref{fop-aligned:e4}} = \text{\ref{fop-aligned:e5}}]
          }
          \\
          \infer*[Right={\labelstep{fop:e5e6}}]{\eqref{fop-aligned:op-assoc}}{
            [\text{\ref{fop-aligned:e5}} = \text{\ref{fop-aligned:e6}}]
          }
        }{
          [\text{\ref{fop-aligned:e4}} = \text{\ref{fop-aligned:e6}}]
        }
      }{
        [\text{\ref{fop-aligned:e3}} = \text{\ref{fop-aligned:e6}}]
      }
    }{
      \ref{fop-aligned:e2=e6}
    }
  }{
    \ref{fop-aligned:e1=e6}
  }
\]
where we write
$ [\text{E}j = \text{E}k] $
for the judgment representation of the equality,
and the leaves are the commutativity or associativity assumptions on~\p{op}.
The derivation starts from
  $[\text{\ref{fop-aligned:e1}} = \text{\ref{fop-aligned:e6}}]$
which is the judgment we obtained from \eqref{fop:nest2}:
\begin{equation}
\small
  \proves
  \WP{\m<
    \I 5: \code{op($v_2$,$v_3$)},*
    \I 2: \code{op($v_2$,$b$)},
    \I 4: \code{op($b$, $c$)},*
    \I 3: \code{op($v_3$,$c$)}
  >}*{
  \fun\m{v}.
    \WP{\m<
      \I 1: \code{op($\m{v}({\I5})$,$\m{v}({\I4})$)},
      \I 6: \code{op($\m{v}({\I2})$,$\m{v}({\I3})$)}
    >}[\Big]{
    \ret.
      \ret[\I1] = \ret[\I6]
    }
  }
  \tag*{[\text{\ref{fop-aligned:e1}} = \text{\ref{fop-aligned:e6}}]}
  \label{fop-aligned:e1=e6}
\end{equation}

We reduce the goal to proving the two judgments:
{\small
\begin{align}
  \proves &
  \WP{\m[
    \I 3: \code{op($v_3$,$c$)},
    \I 7: \code{op($c$,$v_3$)}
  ]}{
  \fun\m{v}.
    \m{v}(\I3) = \m{v}(\I7)
  }
  \tag*{[\text{\ref{fop-aligned:e1}} = \text{\ref{fop-aligned:e2}}]}
  \label{fop-aligned:e1=e2}
  \\
  \proves &
  \WP{\m<
    \I 5: \code{op($v_2$,$v_3$)},*
    \I 2: \code{op($v_2$,$b$)},
    \I 4: \code{op($b$, $c$)},*
    \I 7: \code{op($c$,$v_3$)}
  >}*{
  \fun\m{v}.
    \WP{\m<
      \I 1: \code{op($\m{v}({\I5})$,$\m{v}({\I4})$)},
      \I 6: \code{op($\m{v}({\I2})$,$\m{v}({\I7})$)}
    >}[\Big]{
    \ret.
      \ret[\I1] = \ret[\I6]
    }
  }
  \tag*{[\text{\ref{fop-aligned:e2}} = \text{\ref{fop-aligned:e6}}]}
  \label{fop-aligned:e2=e6}
\end{align}
}
by means of the following derivation (which justifies \eqref{fop:e1e6}):
\[\small
  \infer*[Right={\ref{rule:wp-proj}$\scriptstyle(\I7)$}]{
  \infer*[Right=\ref{rule:wp-cons}]{
  \infer*[Right=\ref{rule:wp-conj}]{
    \ref{fop-aligned:e1=e2}
    \\
    \ref{fop-aligned:e2=e6}
  }{
    \proves
    \WP{\m<
      \I 5: \code{op($v_2$,$v_3$)},*
      \I 2: \code{op($v_2$,$b$)},
      \I 4: \code{op($b$, $c$)},*
      \I 3: \code{op($v_3$,$c$)},
      \I 7: \code{op($c$,$v_3$)}
    >}*{
    \fun\m{v}.
    \begin{pmatrix*}[l]
      {\m{v}(\I3) = \m{v}(\I7)}
      \\
      {\WP{\m<
        \I 1: \code{op($\m{v}({\I5})$,$\m{v}({\I4})$)},
        \I 6: \code{op($\m{v}({\I2})$,$\m{v}({\I7})$)}
      >}[\Big]{
      \ret.
        \ret[\I1] = \ret[\I6]
      }}
    \end{pmatrix*}
    }
  }}{
    \proves
    \WP{\m<
      \I 5: \code{op($v_2$,$v_3$)},*
      \I 2: \code{op($v_2$,$b$)},
      \I 4: \code{op($b$, $c$)},*
      \I 3: \code{op($v_3$,$c$)},
      \I 7: \code{op($c$,$v_3$)}
    >}*{
    \fun\m{v}.
      \WP{\m<
        \I 1: \code{op($\m{v}({\I5})$,$\m{v}({\I4})$)},
        \I 6: \code{op($\m{v}({\I2})$,$\m{v}({\I3})$)}
      >}[\Big]{
      \ret.
        \ret[\I1] = \ret[\I6]
      }
    }
  }
  }{
    \ref{fop-aligned:e1=e6}
  }
\]
The last step projects out component~$\I7$ from the triple,
which is justified by \eqref{ass:op-proj}.

While \ref{fop-aligned:e1=e2} is a direct instance of \eqref{ass:op-comm},
\ref{fop-aligned:e2=e6} is reduced to proving:
{\small
\begin{align}
  \proves&
  \WP{\m<
    \I 7: \code{op($c$,$v_3$)},
    \I 2: \code{op($v_2$,$b$)}
  >}*{
  \fun\m{v}.
    \WP{\m[\I 8: \code{op($b$, $\m{v}({\I7})$)}]}*{
    \fun\m{w}.
      \WP{\m<
        \I 9: \code{op($v_2$,$\m{w}({\I8})$)},
        \I 6: \code{op($\m{v}({\I2})$,$\m{v}({\I7})$)}
      >}[\Big]{
      \ret.
        \ret[\I9] = \ret[\I6]
      }
    }
  }
  \tag*{[\text{\ref{fop-aligned:e2}} = \text{\ref{fop-aligned:e3}}]}
  \label{fop-aligned:e2=e3}
  \\
  \proves&
  \WP{\m<
    \I 7: \code{op($c$,$v_3$)},
    \I 5: \code{op($v_2$,$v_3$)}\mkern-5mu,
    \I 4: \code{op($b$, $c$)}
  >}*{
  \fun\m{v}.
    \WP{\m[\I 8: \code{op($b$, $\m{v}({\I7})$)}]}*{
    \fun\m{w}.
      \WP{\m<
        \I 1: \code{op($\m{v}({\I5})$,$\m{v}({\I4})$)},
        \I 9: \code{op($v_2$,$\m{w}({\I8})$)}
      >}[\Big]{
      \ret.
        \ret[\I1] = \ret[\I9]
      }
    }
  }
  \tag*{[\text{\ref{fop-aligned:e3}} = \text{\ref{fop-aligned:e6}}]}
  \label{fop-aligned:e3=e6}
\end{align}
}
Again the reduction is driven by \ref{rule:wp-conj} and \ref{rule:wp-proj}:
\begin{derivation}
  \infer*[Right=\ref{rule:wp-empty}]{
  \infer*[Right={\ref{rule:wp-proj}$\scriptstyle(\I8,\I9)$}]{
  \infer*[Right=\ref{rule:wp-conj}]{
    \ref{fop-aligned:e2=e3}
    \\
    \ref{fop-aligned:e3=e6}
  }{
    \proves
    \WP{\m<
      \I 7: \code{op($c$,$v_3$)},
      \I 5: \code{op($v_2$,$v_3$)}\mkern-5mu,
      \I 4: \code{op($b$, $c$)},
      \I 2: \code{op($v_2$,$b$)}
    >}*{
    \fun\m{v}.
      \WP{\m[\I 8: \code{op($b$, $\m{v}({\I7})$)}]}*{
      \fun\m{w}.
        \WP{\m<
          \I 1: \code{op($\m{v}({\I5})$,$\m{v}({\I4})$)},
          \I 9: \code{op($v_2$,$\m{w}({\I8})$)},
          \I 6: \code{op($\m{v}({\I7})$,$\m{v}({\I2})$)}
        >}*{
        \ret.
        \begin{conj}
          {\ret[\I1] = \ret[\I9]}
          \and
          {\ret[\I9] = \ret[\I6]}
        \end{conj}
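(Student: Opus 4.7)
The stated judgment is precisely the conclusion of the (visible) \ref{rule:wp-conj} step in the derivation, so the first move is to realise what the paper has already set up: split the innermost postcondition $\ret[\I1]=\ret[\I9]\land\ret[\I9]=\ret[\I6]$ into its two conjuncts, each still wrapped by the two outer WPs (over $\set{\I7,\I5,\I4,\I2}$ and over $\set{\I8}$). Taking $\m{t}_1=\m{t}_2$ equal to the full inner hyper-term $\m<\I1, \I9, \I6>$ in \ref{rule:wp-conj} gives the split with trivial index side conditions, and monotonicity of the outer WPs (\ref{rule:wp-cons}) propagates the $\land$-decomposition through the two outer layers. This reduces the goal to two independent sub-judgments.

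For the easier conjunct $\ret[\I9]=\ret[\I6]$, the equality to be witnessed is $\p{op}(v_2,\p{op}(b,\p{op}(c,v_3))) = \p{op}(\p{op}(c,v_3),\p{op}(v_2,b))$, i.e.\ $E3$ versus $E2$ up to an outer commutation. I would chain two algebraic equalities through a fresh auxiliary component (say at index \I{10}) computing $\p{op}(\p{op}(v_2,b),\p{op}(c,v_3))$, the canonical $E2$ form: one instance of \eqref{ass:op-comm} witnesses $\ret[\I6]=\ret[\I{10}]$ and one instance of the derived associativity triple \eqref{fop-aligned:op-assoc} (instantiated with $x_1=v_2$, $x_2=b$, $x_3=\p{op}(c,v_3)$, with the latter value produced by component \I{7}) witnesses $\ret[\I{10}]=\ret[\I9]$. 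A \ref{rule:wp-conj} combines the two triples and a transitivity step via \ref{rule:wp-cons} closes the sub-goal; \ref{rule:wp-proj} then discards \I{10}, with projectability discharged by \eqref{ass:op-proj}.

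The harder conjunct $\ret[\I1]=\ret[\I9]$ compares $\p{op}(\p{op}(v_2,v_3),\p{op}(b,c))$ with $\p{op}(v_2,\p{op}(b,\p{op}(c,v_3)))$, i.e.\ $E6$ versus $E3$. No single axiom bridges them, so I would follow the $E3 \to E4 \to E5 \to E6$ chain from the introduction: associativity carries $E3$ to $E4=\p{op}(v_2,\p{op}(\p{op}(b,c),v_3))$, commutativity carries $E4$ to $E5=\p{op}(v_2,\p{op}(v_3,\p{op}(b,c)))$, and associativity carries $E5$ to $E6$. Introducing two auxiliary components (say at \I{11} and \I{12}) computing $E4$ and $E5$ and splitting with \ref{rule:wp-conj} three times reduces this to three direct instances of \eqref{fop-aligned:op-assoc} and \eqref{ass:op-comm}; two applications of \ref{rule:wp-proj} then remove the auxiliaries (again using \eqref{ass:op-proj}).

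The main obstacle is the bookkeeping around the shared outer WPs: the innermost hyper-term reads values $\m{v}(\I5),\m{v}(\I4),\m{v}(\I7),\m{v}(\I2),\m{w}(\I8)$ bound two or one levels out, while \eqref{fop-aligned:op-assoc} is stated with free parameters $x_1,x_2,x_3$. Each algebraic sub-step therefore requires care in using \ref{rule:wp-nest} to flatten or refold the nested WPs so that the \p{op}-arguments at the auxiliary indices literally match the $\p{op}(x,y)$ shape expected by the axiom, and occasional uses of \ref{rule:wp-subst} when a sub-term needs to be rewritten to expose an already-computed value (so that, e.g., the auxiliary at \I{10} is expressed with the literal $\p{op}(c,v_3)$ rather than opaquely as $\m{v}(\I7)$). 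Beyond that the derivation is mechanical, chaining the same two axioms through \ref{rule:wp-conj} and \ref{rule:wp-proj} just as the paper did at the outer level.
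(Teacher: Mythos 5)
Your proposal follows essentially the same route as the paper: split the conjunction with \ref{rule:wp-conj}, discharge $\ret[\I9]=\ret[\I6]$ by associativity, and chain $\ret[\I1]=\ret[\I9]$ through auxiliary components computing $E4$ and $E5$, introduced by \ref{rule:wp-proj} (projectability from \eqref{ass:op-proj}) and combined by \ref{rule:wp-conj} --- this is exactly the paper's decomposition into \ref{fop-aligned:e2=e3} and \ref{fop-aligned:e3=e6}, with the latter further split into \ref{fop-aligned:e3=e4}, \ref{fop-aligned:e4=e5} and \ref{fop-aligned:e5=e6}. One ingredient is missing from your accounting: the links $E3 \to E4$ and $E4 \to E5$ rewrite \emph{inside} the second argument of an outer $\p{op}(v_2,\cdot)$, so they are not ``direct instances'' of \eqref{fop-aligned:op-assoc} or \eqref{ass:op-comm}; each additionally needs determinism \eqref{ass:op-det} to lift the equality of the inner results through the outer call, and the paper invokes \eqref{ass:op-det} explicitly for both \ref{fop-aligned:e3=e4} and \ref{fop-aligned:e4=e5}. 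Two further remarks. Your extra commutation for $\ret[\I9]=\ret[\I6]$ comes from reading \I6 as $\p{op}(\m{v}({\I7}),\m{v}({\I2}))$ as displayed; the sub-triple \ref{fop-aligned:e2=e3} that the paper actually combines has \I6 in the transposed order $\p{op}(\m{v}({\I2}),\m{v}({\I7}))$ and needs only associativity, so your added \eqref{ass:op-comm} step is sound but compensates for what appears to be an argument swap in the displayed conclusion. Finally, the paper's single \ref{rule:wp-conj} merges the two sub-triples over \emph{overlapping but distinct} hyper-terms, each mentioning only the components its conjunct constrains, which avoids the framing-away of unused components that your $\m{t}_1=\m{t}_2$ split would require.
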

        }
      }
    }
  }}{
    \proves
    \WP{\m<
      \I 5: \code{op($v_2$,$v_3$)},*
      \I 2: \code{op($v_2$,$b$)},
      \I 4: \code{op($b$, $c$)},*
      \I 7: \code{op($c$,$v_3$)}
    >}*{
    \fun\m{v}.
      \WP{\m[]}*{
\WP{\m<
          \I 1: \code{op($\m{v}({\I5})$,$\m{v}({\I4})$)},
          \I 6: \code{op($\m{v}({\I7})$,$\m{v}({\I2})$)}
        >}*{
        \ret.
          \ret[\I1] = \ret[\I6]
        }
      }
    }
  }}{
    \ref{fop-aligned:e2=e6}
  }
\end{derivation}

The proof of \ref{fop-aligned:e2=e3}
uses associativity:
\begin{derivation}
  \infer*[Right=\labelstep{fop-aligned:assoc23}]{
  \infer*[Right=\labelstep{fop-aligned:nest23}]{
    \infer*{ }{
      \proves
      \WP{\m[\I 7: \code{op($c$,$v_3$)}]}{\True}
    }
    \and
    \infer*{\eqref{fop-aligned:op-assoc}}{
    \proves
\WP{\m<
        \I 2: \code{op($v_2$,$b$)},
        \I 8: \code{op($b$, $\m{v}({\I7})$)}
      >}*{
      \fun\m{w}.
        \WP{\m<
          \I 9: \code{op($v_2$,$\m{w}({\I8})$)},
          \I 6: \code{op($\m{w}({\I2})$,$\m{v}({\I7})$)}
        >}[\Big]{
        \ret.
          \ret[\I9] = \ret[\I6]
        }
      }
    }
  }{
    \proves
    \WP{\m[\I 7: \code{op($c$,$v_3$)}]}*{
    \A\m{v}.
      \WP{\m<
        \I 2: \code{op($v_2$,$b$)},
        \I 8: \code{op($b$, $\m{v}({\I7})$)}
      >}*{
      \fun\m{w}.
        \WP{\m<
          \I 9: \code{op($v_2$,$\m{w}({\I8})$)},
          \I 6: \code{op($\m{w}({\I2})$,$\m{v}({\I7})$)}
        >}[\Big]{
        \ret.
          \ret[\I9] = \ret[\I6]
        }
      }
    }
  }}{
    \proves
    \WP{\m<
      \I 7: \code{op($c$,$v_3$)},
      \I 2: \code{op($v_2$,$b$)}
    >}*{
    \fun\m{v}.
      \WP{\m[\I 8: \code{op($b$, $\m{v}({\I7})$)}]}*{
      \fun\m{w}.
        \WP{\m<
          \I 9: \code{op($v_2$,$\m{w}({\I8})$)},
          \I 6: \code{op($\m{v}({\I2})$,$\m{v}({\I7})$)}
        >}[\Big]{
        \ret.
          \ret[\I9] = \ret[\I6]
        }
      }
    }
  }
\end{derivation}
Step \eqref{fop-aligned:nest23} uses \ref{rule:wp-nest} and \ref{rule:wp-cons}
to rearrange the components and generalize the postcondition
to universally quantify over $\m{v}$ (which is then unrelated to the return values).

Step \eqref{fop-aligned:assoc23} uses \ref{rule:wp-const}
to get the postcondition out of the outer weakest precondition assertion,
leaving us with an application of \ref{rule:wp-triv}
and an instance of associativity in the form of \eqref{fop-aligned:op-assoc}.

The proof of \ref{fop-aligned:e2=e6} is further reduced to proving:
{\small
\begin{align}
  \proves&
  \WP{\m<
    \I 4: \code{op($b$, $c$)},
    \I 7: \code{op($c$,$v_3$)}
  >}*{
  \fun\m{v}.
    \WP{\m<
      \I10: \code{op($\m{v}({\I4})$, $v_3$)},
      \I 8: \code{op($b$, $\m{v}({\I7})$)}
    >}*{
    \fun\m{w}.
      \WP{\m<
        \I 9: \code{op($v_2$,$\m{w}({\I8})$)},
        \I 11: \code{op($v_2$,$\m{w}({\I10})$)}
      >}[\Big]{
      \ret.
        \ret[\I9] = \ret[\I{11}]
      }
    }
  }
  \tag*{[\text{\ref{fop-aligned:e3}} = \text{\ref{fop-aligned:e4}}]}
  \label{fop-aligned:e3=e4}
  \\
  \proves&
  \WP{\m<
    \I 5: \code{op($v_2$,$v_3$)}\mkern-5mu,
    \I 4: \code{op($b$, $c$)}
  >}*{
  \fun\m{v}.
    \WP{\m[
      \I 10: \code{op($\m{v}({\I4})$, $v_3$)}
    ]}*{
    \fun\m{w}.
      \WP{\m<
        \I 1: \code{op($\m{v}({\I5})$,$\m{v}({\I4})$)},
        \I 11: \code{op($v_2$,$\m{w}({\I10})$)}
      >}[\Big]{
      \ret.
        \ret[\I1] = \ret[\I{11}]
      }
    }
  }
  \tag*{[\text{\ref{fop-aligned:e4}} = \text{\ref{fop-aligned:e6}}]}
  \label{fop-aligned:e4=e6}
\end{align}
}

These can again be combined using \ref{rule:wp-conj} to obtain
\ref{fop-aligned:e3=e6} by projecting out components $\I{10}$ and $\I{11}$
using \ref{rule:wp-proj}.

The judgment \ref{fop-aligned:e3=e4} can be proved by appealing to
associativity and determinism:
\begin{derivation}
  \infer*[Right=\ref{rule:wp-cons}]{
    \proves
    \WP{\m<
      \I 4: \code{op($b$, $c$)},
      \I 7: \code{op($c$,$v_3$)}
    >}*{
    \fun\m{v}.
      \WP{\m<
        \I10: \code{op($\m{v}({\I4})$, $v_3$)},
        \I 8: \code{op($b$, $\m{v}({\I7})$)}
      >}[\Big]{
      \fun\m{w}.
        \m{w}(\I{10}) = \m{w}(\I{8})
      }
    }
    \\
    \m{w}(\I{10}) = \m{w}(\I{8})
    \proves
    \WP{\m<
      \I 9: \code{op($v_2$,$\m{w}({\I8})$)},
      \I 11: \code{op($v_2$,$\m{w}({\I10})$)}
    >}[\Big]{
    \ret.
      \ret[\I9] = \ret[\I{11}]
    }
  }{
    \ref{fop-aligned:e3=e4}
  }
\end{derivation}

Finally, we reduce the remaining \ref{fop-aligned:e4=e6}
to proving:
{\small
\begin{align}
  \proves&
  \WP{\m[
    \I 4: \code{op($b$, $c$)}
  ]}*{
  \fun\m{v}.
    \WP{\m<
      \I10: \code{op($\m{v}({\I4})$, $v_3$)},
      \I12: \code{op($v_3$, $\m{v}({\I4})$)}
    >}*{
    \fun\m{w}.
      \WP{\m<
        \I 11: \code{op($v_2$,$\m{w}({\I10})$)},
        \I 13: \code{op($v_2$,$\m{w}({\I12})$)}
      >}[\Big]{
      \ret.
        \ret[\I{13}] = \ret[\I{11}]
      }
    }
  }
  \tag*{[\text{\ref{fop-aligned:e4}} = \text{\ref{fop-aligned:e5}}]}
  \label{fop-aligned:e4=e5}
  \\
  \proves&
  \WP{\m<
    \I 5: \code{op($v_2$,$v_3$)}\mkern-5mu,
    \I 4: \code{op($b$, $c$)}
  >}*{
  \fun\m{v}.
    \WP{\m[
      \I12: \code{op($v_3$, $\m{v}({\I4})$)}
    ]}*{
    \fun\m{w}.
      \WP{\m<
        \I 1: \code{op($\m{v}({\I5})$,$\m{v}({\I4})$)},
        \I 13: \code{op($v_2$,$\m{w}({\I12})$)}
      >}[\Big]{
      \ret.
        \ret[\I1] = \ret[\I{13}]
      }
    }
  }
  \tag*{[\text{\ref{fop-aligned:e5}} = \text{\ref{fop-aligned:e6}}]}
  \label{fop-aligned:e5=e6}
\end{align}
}
These can once more be combined using \ref{rule:wp-conj} to obtain
\ref{fop-aligned:e4=e6} by projecting out components $\I{12}$ and $\I{13}$
using \ref{rule:wp-proj}.

The judgment \ref{fop-aligned:e4=e5} follows from commutativity applied to components~$\I{10}$ and~$\I{12}$:
\begin{derivation}
  \infer*{
  \infer*{
\infer*{
      \proves
      \WP{\m<
        \I10: \code{op($\m{v}({\I4})$, $v_3$)},
        \I12: \code{op($v_3$, $\m{v}({\I4})$)}
      >}[\Big]{
      \fun\m{w}.
        \m{w}(\I{10}) = \m{w}(\I{12})
      }
      \\
      \m{w}(\I{10}) = \m{w}(\I{12})
      \proves
      \WP{\m<
        \I 11: \code{op($v_2$,$\m{w}({\I10})$)},
        \I 13: \code{op($v_2$,$\m{w}({\I12})$)}
      >}[\Big]{
      \ret.
        \ret[\I{13}] = \ret[\I{11}]
      }
    }{
      \WP{\m<
        \I10: \code{op($\m{v}({\I4})$, $v_3$)},
        \I12: \code{op($v_3$, $\m{v}({\I4})$)}
      >}*{
      \fun\m{w}.
        \WP{\m<
          \I 11: \code{op($v_2$,$\m{w}({\I10})$)},
          \I 13: \code{op($v_2$,$\m{w}({\I12})$)}
        >}[\Big]{
        \ret.
          \ret[\I{13}] = \ret[\I{11}]
        }
      }
    }
  }{
    \proves
    \WP{\m[
      \I 4: \code{op($b$, $c$)}
    ]}*{
    \A\m{v}.
      \WP{\m<
        \I10: \code{op($\m{v}({\I4})$, $v_3$)},
        \I12: \code{op($v_3$, $\m{v}({\I4})$)}
      >}*{
      \fun\m{w}.
        \WP{\m<
          \I 11: \code{op($v_2$,$\m{w}({\I10})$)},
          \I 13: \code{op($v_2$,$\m{w}({\I12})$)}
        >}[\Big]{
        \ret.
          \ret[\I{13}] = \ret[\I{11}]
        }
      }
    }
  }}{
    \ref{fop-aligned:e4=e5}
  }
\end{derivation}
The leaves are instances of commutativity and determinism.

The judgment \ref{fop-aligned:e5=e6} is proven by a first application of
\ref{rule:wp-nest} and then by appealing to associativity,
as in~\eqref{fop-aligned:assoc23}:
\begin{derivation}
  \infer*{
  \infer*{
    \infer*{\eqref{fop-aligned:op-assoc}}{
      \proves
      \WP{\m<
        \I 5: \code{op($v_2$,$v_3$)}\mkern-5mu,
        \I12: \code{op($v_3$, $\m{v}({\I4})$)}
      >}*{
      \fun\m{w}.
        \WP{\m<
          \I 1: \code{op($\m{w}({\I5})$,$\m{v}({\I4})$)},
          \I 13: \code{op($v_2$,$\m{w}({\I12})$)}
        >}[\Big]{
        \ret.
          \ret[\I1] = \ret[\I{13}]
        }
      }
    }
  }{
    \proves
    \WP{\m[\I 4: \code{op($b$, $c$)}]}*{
    \A\m{v}.
      \WP{\m<
        \I 5: \code{op($v_2$,$v_3$)}\mkern-5mu,
        \I12: \code{op($v_3$, $\m{v}({\I4})$)}
      >}*{
      \fun\m{w}.
        \WP{\m<
          \I 1: \code{op($\m{w}({\I5})$,$\m{v}({\I4})$)},
          \I 13: \code{op($v_2$,$\m{w}({\I12})$)}
        >}[\Big]{
        \ret.
          \ret[\I1] = \ret[\I{13}]
        }
      }
    }
  }}{
    \ref{fop-aligned:e5=e6}
  }
\end{derivation}

 \subsection{Loop splitting}
\label{sec:ex-loop-splitting}

Here we pick up the proof sketched in \cref{sec:discuss-refinement},
where we reduce distributivity on the first argument of the program
\p{f} in \cref{fig:f-code} (with \p{op} fixed as addition),
to proving that we can split a while loop running~$a+b$ times
into a loop running~$a$ times followed by one running~$b$ times.

Formally, we need to prove the equivalence between component~\I{1} and~\I{4},
which can be encoded by the auxiliary hyper-triple~\eqref{spec:mult-split}:
\begin{equation}
  \small
  \J |- {\p{r}(\I1)=\p{r}(\I4) \land \p{i}(\I1)=\p{i}(\I4)}
        {\m[\I 1: \p{f}(a+b, c),
          \I 4: \p{f}(a, c)\p;\p{f}(a+b, c)
        ]}
        {\p{r}(\I1)=\p{r}(\I4) \land \p{i}(\I1)=\p{i}(\I4)}
  \tag{\ref{spec:mult-split}}
\end{equation}

We start the derivation by using nesting and sequence to break up the three while loops.
Letting
\[
  P = \bigl(
    \p{i}(\I1) \leq \p{i}(\I4)
    \land
    \WP
      {\m[\I 4: \p{f}(a+b, c),
        \I 1: \p{f}(a+b, c)
      ]}
      {
        \p{r}(\I1)=\p{r}(\I4) \land \p{i}(\I1)=\p{i}(\I4)
      }
  \bigr)
\]
we build the derivation:
\[
  \small
  \infer*{
    \infer*{
      \V
      \p{r}(\I1)=\p{r}(\I4),
      \p{i}(\I1)=\p{i}(\I4)
      |-
      P
      \\
      \V P |- \WP {\m[\I 4: \p{f}(a, c)]} {P}
    }{
      \infer*{
        \V
        \p{r}(\I1)=\p{r}(\I4),
        \p{i}(\I1)=\p{i}(\I4)
        |-
        \WP
          {\m[\I 4: \p{f}(a, c)
          ]}
          {
          \WP
            {\m[\I 4: \p{f}(a+b, c),
              \I 1: \p{f}(a+b, c)
            ]}
            {
              \p{r}(\I1)=\p{r}(\I4) \land \p{i}(\I1)=\p{i}(\I4)
            }
          }
      }{
        \V
        \p{r}(\I1)=\p{r}(\I4),
        \p{i}(\I1)=\p{i}(\I4)
        |-
        \WP
          {\m[\I 4: \p{f}(a, c)\p;\p{f}(a+b, c)
          ]}
          {
          \WP
            {\m[\I 1: \p{f}(a+b, c)
            ]}
            {
              \p{r}(\I1)=\p{r}(\I4) \land \p{i}(\I1)=\p{i}(\I4)
            }
          }
      }
    }
  }{
    \V
    \p{r}(\I1)=\p{r}(\I4),
    \p{i}(\I1)=\p{i}(\I4)
    |-
    \WP
      {\m[\I 1: \p{f}(a+b, c),
        \I 4: \p{f}(a, c)\p;\p{f}(a+b, c)
      ]}
      {
        \p{r}(\I1)=\p{r}(\I4) \land \p{i}(\I1)=\p{i}(\I4)
      }
  }
\]
The left-hand leaf
$\V \p{r}(\I1)=\p{r}(\I4), \p{i}(\I1)=\p{i}(\I4) |- P$
encodes determinism of \p{f}
which can be easily proven by lockstep reasoning.

The right-hand leaf is in the form required to apply
the \ref{rule:wp-while} rule, with~$P$
as loop invariant:
\[
  \infer*{
    \V P |-
      \WP {\m[\I 4: i < a]} {
        \fun\m{b}.
          (\m{b}(\I4) = 0 \land P)
          \lor
          (\m{b}(\I4) \ne 0 \land
            \WP {
              \m[\I4: (\code{r:=r+$c$;i:=i+1})]
            }{P}
          )
      }
  }{
    \V P |- \WP {\m[\I 4: \p{f}(a, c)]} {P}
  }
\]

Let $
  R(C) = (C \land P)
          \lor
          (\neg C \land
            \WP {
              \m[\I4: (\code{r:=r+$c$;i:=i+1})]
            }{P}
          )
$.
Since ${\mods(\p{i<}a)=\emptyset}$,
we can eliminate the outer weakest precondition by:

\[
  \small
  \infer*[right=\ref{rule:wp-conj}]{
    \V P |-\WP {\m[\I 4: i < a]} {\fun\m{b}.\m{b}(\I4) = 0 \iff (i(\I4)<a)}
    \and
    \infer*[Right=\ref{rule:wp-triv}]{
        \V P |- R(i(\I4)<a)
}{
      \infer*[Right=\ref{rule:wp-const}]{
        \V P |-R(i(\I4)<a) \land \WP {\m[\I 4: i < a]} {\True}
      }{
        \V P |-\WP {\m[\I 4: i < a]} {R(i(\I4)<a)}
      }
    }
  }{
    \V P |-\WP {\m[\I 4: i < a]} {\fun\m{b}.R(\m{b}(\I4) = 0)}
  }
\]

Then we apply the refine rule with
\begin{equation}
  \code{f($a$+$b$,$c$)}
  \semeq
  \code{if i<$a$+$b$ then r:=r+$c$;i:=i+1;f($a$+$b$,$c$) else skip}
  \label{loop-split:refine}
\end{equation}
and apply \ref{rule:wp-if} plus some simplification you get:
\[
  \begin{judgement}
    (\p{i}(\I4)<a+b \land P)
    \lor
    (\p{i}(\I4)\geq a+b \land
      \WP {
        \m[\I4: (\code{r:=r+$c$;i:=i+1})]
      }{P}
    )
  \proves
    (\p{i}(\I4)<a \land P)
    \lor
    (\p{i}(\I4)\geq a \land
      \WP {
        \m[\I4: (\code{r:=r+$c$;i:=i+1})]
      }{P}
    )
  \end{judgement}
\]

We prove the judgment by case analysis:
the cases where
$\p{i}(\I4) < a$ or
$\p{i}(\I4) \geq a+b$
are trivial.
We are left with the interesting case:
$
  \V
    a \leq \p{i}(\I4) < a+b \land P
  |-
    \WP {
      \m[\I4: (\code{r:=r+$c$;i:=i+1})]
    }{P}
$.
Expanding~$P$ we have to show:

\[
\begin{judgement}
    a \leq \p{i}(\I4) < a+b \land
    \p{i}(\I1) \leq \p{i}(\I4)
    \\
    \WP
      {\m[\I 4: \underline{\p{f}(a+b, c)},
        \I 1: \underline{\p{f}(a+b, c)}
      ]}
      {
        \p{r}(\I1)=\p{r}(\I4) \land \p{i}(\I1)=\p{i}(\I4)
      }
  \proves
    \WPv {
      \m[\I4: (\code{r:=r+$c$;i:=i+1})]
    }{
    \p{i}(\I1) \leq \p{i}(\I4)
    \land
    \WP
      {\m[\I 4: \p{f}(a+b, c),
        \I 1: \underline{\p{f}(a+b, c)}
      ]}
      {
        \p{r}(\I1)=\p{r}(\I4) \land \p{i}(\I1)=\p{i}(\I4)
      }
    }
\end{judgement}
\]
Conceptually,
since $\p{i}(\I1) < a+b$,
component \I1 in the conclusion will perform at least one iteration.
Similarly, both components \I1 and \I4 in the assumptions will perform at least one iteration.
By using~\eqref{loop-split:refine} on the underlined code,
followed by some trivial simplifications to eliminate the guards, we get:
\[
  \begin{judgement}
    a \leq \p{i}(\I4) < a+b \land
    \p{i}(\I1) \leq \p{i}(\I4)
    \\
    \WPv
      {\m[\I 4: (\code{r:=r+$c$;i:=i+1}),
        \I 1: (\code{r:=r+$c$;i:=i+1})
      ]}
      {\WP
        {\m[\I 4: \p{f}(a+b, c),
          \I 1: \p{f}(a+b, c)
        ]}
        {
          \p{r}(\I1)=\p{r}(\I4) \land \p{i}(\I1)=\p{i}(\I4)
        }
      }
  \proves
    \WPv {
      \m[\I4: (\code{r:=r+$c$;i:=i+1})]
    }{
      \p{i}(\I1) \leq \p{i}(\I4) \land {}
      \\&
      \WPv {
        \m[\I1: (\code{r:=r+$c$;i:=i+1})]
      }{
        \WP
        {\m[\I 4: \p{f}(a+b, c),
          \I 1: \p{f}(a+b, c)
        ]}
        {
          \p{r}(\I1)=\p{r}(\I4) \land \p{i}(\I1)=\p{i}(\I4)
        }
      }
    }
  \end{judgement}
\]

By using \ref{rule:wp-conj} with the trivial
$
  \V
  \p{i}(\I1) \leq \p{i}(\I4)
  |-
  \WP {
    \m[\I4: (\code{r:=r+$c$;i:=i+1})]
  }{
    \p{i}(\I1) \leq \p{i}(\I4)
  }
$
we can discharge the $\p{i}(\I1) \leq \p{i}(\I4)$ in the postcondition,
leaving us, through \ref{rule:wp-nest}, with the trivial judgment:
\[
  \begin{judgement}
    a \leq \p{i}(\I4) < a+b \land
    \p{i}(\I1) \leq \p{i}(\I4)
    \\
    \WPv
      {\m[\I 4: (\code{r:=r+$c$;i:=i+1}),
        \I 1: (\code{r:=r+$c$;i:=i+1})
      ]}
      {\WP
        {\m[\I 4: \p{f}(a+b, c),
          \I 1: \p{f}(a+b, c)
        ]}
        {
          \p{r}(\I1)=\p{r}(\I4) \land \p{i}(\I1)=\p{i}(\I4)
        }
      }
  \proves
    \WPv {
      \m[\I4: (\code{r:=r+$c$;i:=i+1}),
         \I1: (\code{r:=r+$c$;i:=i+1})]
      }{
        \WP
        {\m[\I 4: \p{f}(a+b, c),
          \I 1: \p{f}(a+b, c)
        ]}
        {
          \p{r}(\I1)=\p{r}(\I4) \land \p{i}(\I1)=\p{i}(\I4)
        }
      }
  \end{judgement}
\]
 \subsection{Distributivity of \p{f} on both arguments}
\label{sec:distr-both}
Here we sketch the derivation of distributivity of \p{f} on both arguments
we described in \cref{sec:discuss-divide-and-conquer}.

Take again the example in \cref{fig:f-code}
fixing $\p{op}(x,y)\is x + y$.
In \cref{sec:ex-distrib-aligned,sec:ex-loop-splitting}
we obtained a proof of distributivity on the second argument,
and one of distributivity on the first.
We want to prove distributivity over \emph{both} arguments;
as a hyper-triple:
\begin{equation}
  \small
  \J |- {\LAnd_{j=1}^5 \p{r}(j)=\p i(j)=0}
        {\m[ \I 1: \p{f}(a{+}b,c{+}d)
           , \I 2: \p{f}(a,c)
           , \I 3: \p{f}(b,c)
           , \I 4: \p{f}(a,d)
           , \I 5: \p{f}(b,d)
           ]}
        {\p{r}(\I 1) = \sum_{j=2}^5 \p{r}(\I j)}
  \label{spec:mult-full-distr}
\end{equation}
By using \ref{rule:wp-conj} and \ref{rule:wp-proj} it is possible to decompose this hyper-triple into two proofs of distributivity, respective on the first and the second arguments which we discussed in the previous sections.

By means of \ref{rule:wp-proj} we introduce the two intermediate terms
$ \p{f}(a+b,c) $ and $\p{f}(a+b,d)$,
at indices~\I{6} and~\I{7} respectively:
\begin{equation}
  \Jv |- {\LAnd_{j=1}^7 \p{r}(j)=\p i(j)=0}
        {\m[ \I 1: \p{f}(a{+}b,c{+}d)
           , \I 2: \p{f}(a,c)
           , \I 3: \p{f}(b,c)
           , \I 4: \p{f}(a,d)
           , \I 5: \p{f}(b,d)
           , \I 6: \p{f}(a{+}b,c)
           , \I 7: \p{f}(a{+}b,d)
           ]}
        {
          \p{r}(\I 1) = \p{r}(\I 6)+\p{r}(\I 7) \land
          \p{r}(\I 6) = \p{r}(\I 1)+\p{r}(\I 3) \land
          \p{r}(\I 7) = \p{r}(\I 2)+\p{r}(\I 4)
        }
  \label{spec:mult-full-distr7}
\end{equation}
The side condition of projectability is satisfied by our assumption
that \p{op} is total (i.e.~terminating).

Finally, \ref{rule:wp-conj} is used (twice) to break the judgment in three subgoals,
grouping indices
  $\set{\I1,\I6,\I7}$,
  $\set{\I6,\I1,\I3}$, and
  $\set{\I7,\I2,\I4}$
respectively.
The first subgoal is an instance of distributivity on the second argument,
the other two of distributivity on the first argument. \subsection{Specifications for Idempotence}
\label{sec:specs-for-idemp}

\begin{lemma}
  $\eqref{spec:t-idemp3} \implies \eqref{spec:t-idemp}$.
\end{lemma}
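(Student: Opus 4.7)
The plan is to derive \eqref{spec:t-idemp} from \eqref{spec:t-idemp3} by \emph{collapsing} two of the three components of the latter via the reindexing substitution $\isub{\I2\to\I1}$. This substitution is well-defined here precisely because the hyper-term of \eqref{spec:t-idemp3} runs the same program $t$ at both \I1 and \I2, matching the side condition of \ref{rule:wp-idx-merge}.

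Concretely, I will first apply \ref{rule:reindex} with $\pi = (\I2 \to \I1)$ to the whole judgment. On the assumption side, the precondition $\pv{x}(\I1) = \pv{x}(\I2) \land \pv{x}(\I3) = \vec{v}$ collapses to $\pv{x}(\I1) = \pv{x}(\I1) \land \pv{x}(\I3) = \vec{v}$, which simplifies to $\pv{x}(\I3) = \vec{v}$. On the WP side, I apply \ref{rule:wp-idx-merge} (with $i = \I1$, $j = \I2$, $\m{t}' = \m[\I3: t]$) to push the reindexing through the WP, merging the two $t$-components into one and propagating the substitution into the postcondition, which becomes $\pv{x}(\I1) = \vec{v} \implies \pv{x}(\I1) = \pv{x}(\I3)$.

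At this point the judgment reads $\pv{x}(\I3) = \vec{v} \proves \WP{\m[\I1: t, \I3: t]}{\pv{x}(\I1) = \vec{v} \implies \pv{x}(\I1) = \pv{x}(\I3)}$. A final \ref{rule:wp-cons} step rewrites the post using the propositional equivalence $(\pv{x}(\I1) = \vec{v} \implies \pv{x}(\I1) = \pv{x}(\I3)) \iff (\pv{x}(\I1) = \vec{v} \implies \pv{x}(\I3) = \vec{v})$, after which a bijective swap of \I3 and \I2 via \ref{rule:wp-idx} (combined with \ref{rule:reindex}) produces \eqref{spec:t-idemp} verbatim.

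The conceptually tempting alternative---introducing an auxiliary component and then projecting it out via \ref{rule:wp-proj}---is a dead end here: projecting any index $i$ out of a post of the form $\pv{x}(i) = \vec{v} \implies \dots$ yields $\P i.(\pv{x}(i) = \vec{v} \implies \dots)$, which is trivially true because the $\Pi$-existential can always falsify the antecedent. The key novelty that makes the derivation go through is therefore \ref{rule:wp-idx-merge}: it reduces arity by \emph{identification} rather than by forgetting, preserving the informative implication carried by \eqref{spec:t-idemp3}'s postcondition.
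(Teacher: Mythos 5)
Your proposal is correct and follows essentially the same route as the paper's own derivation: apply \ref{rule:reindex} with $\isub{\I2->\I1}$ to the whole judgment (collapsing the precondition to $\pv{x}(\I3)=\vec{v}$) and then push the substitution through the WP with \ref{rule:wp-idx-merge}, identifying the two $t$-components; the paper leaves the final \ref{rule:wp-cons} rewrite and the bijective renaming of \I3 implicit, which you spell out. (Minor aside: the paper notes that the projection-based alternative you dismiss does in fact go through if one additionally assumes determinism of $t$, but that is irrelevant to the lemma as stated.)
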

\begin{proof}
  By the following derivation:
  \begin{derivation}
  \infer*[right=\ref{rule:wp-idx-merge}]{
  \infer*[Right=\ref{rule:idx-intro}]{
    \V \pv{x}(\I1)=\pv{x}(\I2) \land \pv{x}(\I3)=\vec{v}
    |- \WP{\m[\I 1: t, \I 2: t, \I 3: t]}
          { \pv{x}(\I2)=\vec{v} \implies \pv{x}(\I1)=\pv{x}(\I3) }
  }{
    \V
    \bigl(
    \pv{x}(\I1)=\pv{x}(\I2) \land \pv{x}(\I3)=\vec{v}
    \bigr)
    \isub{2->1}
    |-
    \bigl(
    \WP{\m[\I 1: t, \I 2: t, \I 3: t]}
          { \pv{x}(\I2)=\vec{v} \implies \pv{x}(\I1)=\pv{x}(\I3) }
    \bigr)
    \isub{2->1}
  }}{
    \V
    \pv{x}(\I3)=\vec{v}
    |-
    \WP{\m[\I 1: t, \I 3: t]}
          { \pv{x}(\I1)=\vec{v} \implies \pv{x}(\I1)=\pv{x}(\I3) }
  }
  \qedhere
  \end{derivation}
\end{proof}

An interesting fact is that
the derivation above that uses \ref{rule:wp-idx-merge}
can be implemented using \ref{rule:wp-conj}
and \ref{rule:wp-proj} if~$t$ is deterministic and $\pvar(t) \subs \pv{x}$.
What is of note is that in this application of \ref{rule:wp-proj}
the projectability condition is trivially true even if~$t$ diverges.
This is because another copy of~$t$ starting from the same state
is present in the hyper-triple in the conclusion.

\begin{lemma}
  $\eqref{spec:t-idemp3} \implies \eqref{spec:t-idemp-seq}$.
\end{lemma}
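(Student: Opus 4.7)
I plan to follow the same pattern as the derivation in \cref{sec:overview:reindex}, which is in fact simpler here because \eqref{spec:t-idemp3} already exposes three separate runs of $t$ and thus bundles the intermediate state that the sequence $(t\p;t)$ in \eqref{spec:t-idemp-seq} hides. The first move is to split the sequenced $(t\p;t)$ at index \I 2 of the goal into two nested WPs; the second move is to convert the indirect-style assumption \eqref{spec:t-idemp3} into this direct-style nested form, by reindexing the ``ghost'' third component \I 3 onto \I 2.

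Concretely, I first apply \ref{rule:wp-nest} together with the unary sequence law \ref{rule:wp-seq-1} to rewrite the goal as
\[
  \V \pv{x}(\I 1)=\pv{x}(\I 2)
  \proves
  \WP{\m[\I 1: t, \I 2: t]}{
    \WP{\m[\I 2: t]}{\pv{x}(\I 1)=\pv{x}(\I 2)}
  }.
\]
Then I apply \ref{rule:wp-indirect} (from \cref{fig:wp-derived-laws}) with $i=\I 2$, $j=\I 3$, $t_1=t_2=t$, $\m{t}=\m[\I 1: t]$, $A(v)=(\pv{x}=v)$, and $Q(v)=(\pv{x}(\I 1)=\pv{x}(\I 3))$. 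Since $\I 2\notin\idx(Q(v))$, the projection $\PP \I 2. Q(v)$ simplifies to $Q(v)$ by \ref{rule:proj-irrel}, and applying $\isub{\I 3->\I 2}$ turns $\pv{x}(\I 1)=\pv{x}(\I 3)$ into $\pv{x}(\I 1)=\pv{x}(\I 2)$. The trivially satisfiable existential $\E v.\pv{x}(\I 2)=v$ that appears in the conclusion of \ref{rule:wp-indirect} is peeled off by \ref{rule:wp-cons}, yielding exactly the transformed goal above.

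The two premises of \ref{rule:wp-indirect} will be: (i) $\V \pv{x}(\I 1)=\pv{x}(\I 2)\proves \WP{\m[\I 2: t]}{\E v.\pv{x}(\I 2)=v}$, which holds trivially since any terminating execution leaves $\pv{x}(\I 2)$ with some value; and (ii) for every $v$,
\[
  \V \pv{x}(\I 1)=\pv{x}(\I 2),\ \pv{x}(\I 3)=v
  \proves
  \WP{\m[\I 1: t, \I 2: t, \I 3: t]}{\pv{x}(\I 2)=v \implies \pv{x}(\I 1)=\pv{x}(\I 3)},
\]
which is literally \eqref{spec:t-idemp3}. The side conditions $\I 3\notin\idx(\pv{x}(\I 1)=\pv{x}(\I 2))$ and $\I 3\notin\supp(\m[\I 1: t])$ are immediate.

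The main subtlety is guessing the right instantiation of \ref{rule:wp-indirect}: the auxiliary index $j=\I 3$ must play the role of the ``ghost'' third component in \eqref{spec:t-idemp3} that records the state between the two sequenced runs of $t$ at \I 2. Once that alignment is spotted, everything else is routine bookkeeping with reindexing and projection, along the lines of the proof pattern displayed in \cref{sec:overview:reindex}. If one prefers not to invoke the derived \ref{rule:wp-indirect}, an equivalent proof can be assembled directly from \ref{rule:wp-conj}, \ref{rule:wp-idx-post}, and \ref{rule:wp-idx-swap}, using \ref{rule:proj-store} to introduce the auxiliary component at \I 3.
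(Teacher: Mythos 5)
Your proof is correct and follows essentially the same route as the paper's: the paper's derivation inlines the steps (\ref{rule:wp-nest}/\ref{rule:wp-impl-r}/\ref{rule:wp-all} to massage the postcondition, then \ref{rule:wp-cons}, \ref{rule:wp-idx-post}, and \ref{rule:wp-idx-swap} to merge the auxiliary component $\I3$ onto $\I2$, closing with \ref{rule:wp-nest} and the sequence rule), which is precisely the content of the derived \ref{rule:wp-indirect} you invoke, with the same instantiation. Your side-condition checks and the simplification of $\PP\I2.Q(v)$ via \ref{rule:proj-irrel} are all sound, so this is just a packaged version of the paper's argument.
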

\begin{proof}
  By the following derivation, which mirrors the one presented
  in \cref{sec:overview:reindex}:
  \begin{derivation}
    \infer*[Right=\ref{rule:wp-nest}]{
    \infer*[Right=\ref{rule:wp-cons}]{
    \infer*[Right=\ref{rule:wp-idx-post}]{
    \infer*[Right=\ref{rule:wp-cons}]{
    \infer*[Right=\labelstep{step:idemp3-seq}]{
      \V \pv{x}(\I1)=\pv{x}(\I2) \land \pv{x}(\I3)=\vec{v}
      |- \WP{\m[\I 1: t, \I 2: t, \I 3: t]}
            { \pv{x}(\I2)=\vec{v} \implies \pv{x}(\I1)=\pv{x}(\I3) }
    }{
      \V \pv{x}(\I1)=\pv{x}(\I2)
      |- \WP{\m[\I 1: t, \I 2: t]}
            { \A \vec{v}.
              \pv{x}(\I3)=\vec{v} \implies
              \pv{x}(\I2)=\vec{v} \implies
              \WP{\m[\I 3: t]}{\pv{x}(\I1)=\pv{x}(\I3)}
            }
    }}{
      \V \pv{x}(\I1)=\pv{x}(\I2)
      |- \WP{\m[\I 1: t, \I 2: t]}
            { \pv{x}(\I3)=\pv{x}(\I2) \implies
              \WP{\m[\I 3: t]}{\pv{x}(\I1)=\pv{x}(\I3)}
            }
    }}{
      \V \pv{x}(\I1)=\pv{x}(\I2)
      |- \WP{\m[\I 1: t, \I 2: t]}
            { \bigl(
                \pv{x}(\I3)=\pv{x}(\I2) \implies
                \WP{\m[\I 3: t]}{\pv{x}(\I1)=\pv{x}(\I3)}
              \bigr)
              \isub{3->2}
            }
    }}{
      \V \pv{x}(\I1)=\pv{x}(\I2)
      |- \WP{\m[\I 1: t, \I 2: t]}
            {\WP{\m[\I 2: t]}{\pv{x}(\I1)=\pv{x}(\I3)}}
    }}{
      \V \pv{x}(\I1)=\pv{x}(\I2)
      |- \WP{\m[\I 1: t, \I 2: (t\p;t)]}
            { \pv{x}(\I1)=\pv{x}(\I2) }
    }
  \end{derivation}
  Step~\eqref{step:idemp3-seq} uses \cref{rule:wp-nest,,rule:wp-impl-r,,rule:wp-all}.
\end{proof}

\begin{lemma}
  $
    \bigl( \eqref{spec:t-idemp} \land (\textup{\textsc{Det}}_t) \bigr)
      \implies \eqref{spec:t-idemp3}
  $.
\end{lemma}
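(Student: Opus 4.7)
The plan is to derive \eqref{spec:t-idemp3} by applying determinism on indices $\{\I1,\I2\}$ and idempotence on indices $\{\I2,\I3\}$, then merging them with \ref{rule:wp-conj}. The key observation is that $\text{Idem}^3_t$ demands a conclusion ($\pv x(\I1)=\pv x(\I3)$) that naturally factors through the value $\vec v$: if $\pv x(\I2)=\vec v$ at the output, then determinism propagates $\vec v$ from component \I2 to \I1, and idempotence propagates $\vec v$ from \I2 to \I3, so \I1 and \I3 agree.

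First I would instantiate the two assumptions (modulo an application of \ref{rule:wp-idx} / \ref{rule:reindex} to rename indices in \eqref{spec:t-idemp}):
\begin{align*}
  &\V \pv{x}(\I1)=\pv{x}(\I2) |- \WP{\m[\I1: t,\I2: t]}{\pv{x}(\I1)=\pv{x}(\I2)} \tag{from $\text{Det}_t$}\\
  &\V \pv{x}(\I3)=\vec{v} |- \WP{\m[\I2: t,\I3: t]}{\pv{x}(\I2)=\vec{v}\implies\pv{x}(\I3)=\vec{v}} \tag{from \eqref{spec:t-idemp}}
\end{align*}
The two hyper-terms agree on index $\I2$ (both carry $t$), so their $\m+$-union is defined and equals $\m[\I1: t,\I2: t,\I3: t]$. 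Applying \ref{rule:wp-conj} then yields a WP on this unioned hyper-term whose postcondition is the conjunction of the two postconditions above. The side conditions of \ref{rule:wp-conj} are easy to verify: $\idx(\pv x(\I1)=\pv x(\I2))=\{\I1,\I2\}$ meets $\supp(\m{t}_2)=\{\I2,\I3\}$ only in $\{\I2\}\subseteq\supp(\m{t}_1)$, and symmetrically for the other postcondition.

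Finally I would close the proof with \ref{rule:wp-cons}, using the purely propositional implication
\[
  \bigl(\pv{x}(\I1)=\pv{x}(\I2)\bigr)\land\bigl(\pv{x}(\I2)=\vec{v}\implies\pv{x}(\I3)=\vec{v}\bigr)
  \implies
  \bigl(\pv{x}(\I2)=\vec{v}\implies\pv{x}(\I1)=\pv{x}(\I3)\bigr),
\]
which follows since, under the hypothesis $\pv x(\I2)=\vec v$, the first conjunct gives $\pv x(\I1)=\vec v$ and the second gives $\pv x(\I3)=\vec v$, and transitivity of equality concludes. I expect no real obstacle here; the only subtlety is choosing the right pair of indices on which to instantiate idempotence (namely $\{\I2,\I3\}$, so that index \I2 is the one shared with the determinism instance), and checking the $\idx(Q_i)\cap\supp(\m{t}_j)$ side conditions of \ref{rule:wp-conj}, both of which are routine.
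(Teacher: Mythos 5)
Your proof is correct and follows essentially the same route as the paper: a reindexed instance of \eqref{spec:t-idemp} is combined with the determinism triple via \ref{rule:wp-conj} (the union of the overlapping hyper-terms yielding $\m[\I1: t, \I2: t, \I3: t]$), and the goal is closed by \ref{rule:wp-cons} with a purely propositional entailment. The only difference is cosmetic: the paper places the idempotence instance on $\{\I1,\I3\}$ (sharing index $\I1$ with the determinism instance) rather than on $\{\I2,\I3\}$; both choices satisfy the side conditions of \ref{rule:wp-conj} and yield the required postcondition.
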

\begin{proof}
  By the following derivation:
  \[
    \infer*[Right=\ref{rule:wp-conj}]{
      \V
      { \pv{x}(\I1)=\pv{x}(\I2) }
      |- \WP {\m[\I 1: t, \I 2: t]}
             { \pv{x}(\I1)=\pv{x}(\I2) }
      \\
      \V
      { \pv{x}(\I3) = \vec{v}' }
      |- \WP {\m[\I 1: t, \I 3: t]}
             { \pv{x}(\I1) = \vec{v}' \implies \pv{x}(\I1)=\pv{x}(\I3) }
    }{
      \V
      { \pv{x}(\I1)=\pv{x}(\I2) \land \pv{x}(\I3) = \pr{\vec{v}} }
      |- \WP {\m[\I 1: t, \I 2: t, \I 3: t]}
             { \pv{x}(\I2) = \pr{\vec{v}} \implies \pv{x}(\I1)=\pv{x}(\I3) }
    }
   \qedhere
  \]
\end{proof}

\begin{lemma}
  $
    \bigl( \eqref{spec:t-idemp-seq} \land \proj(t) \bigr)
      \implies (\textup{\textsc{Det}}_t)
  $.
  However, $ \exists t. \eqref{spec:t-idemp-seq} \notimplies (\textup{\textsc{Det}}_t)$.
\end{lemma}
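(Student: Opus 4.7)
The plan is to derive $(\textup{\textsc{Det}}_t)$ from \eqref{spec:t-idemp-seq} by introducing a fresh auxiliary index $\I3$ running $t\p;t$, which acts as a common ``witness'' linking the $t$-runs at $\I1$ and $\I2$. Intuitively, under $\pv{x}$-matched initial stores, both $\pv{x}(\I1)$ and $\pv{x}(\I2)$ after the run will coincide with $\pv{x}(\I3)$ by two separate applications of \eqref{spec:t-idemp-seq}; transitivity then yields $(\textup{\textsc{Det}}_t)$, and the hypothesis $\proj(t)$ is what ultimately licenses the elimination of the auxiliary $\I3$.

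Concretely, the first step is to derive from \eqref{spec:t-idemp-seq} two reindexed copies sharing a common $t\p;t$ component at $\I3$: one relating $(\I1,\I3)$ and one relating $(\I2,\I3)$. The $(\I1,\I3)$ instance follows from the bijective swap $\I2 \leftrightarrow \I3$ via \ref{rule:wp-rename}. The $(\I2,\I3)$ instance cannot be obtained from any bijective reindexing (the asymmetry between the $t$-role and the $t\p;t$-role would have to be relocated), so instead I would apply \ref{rule:reindex} with the non-bijective $\pi\colon \I1 \mapsto \I2,\, \I2 \mapsto \I3$ and propagate the reindexing through the WP via two applications of \ref{rule:wp-idx-swap} to land on the desired triple. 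Both copies overlap consistently on $\I3$, so \ref{rule:wp-conj} merges them into a single triple over $\m[\I1: t, \I2: t, \I3: (t\p;t)]$ whose postcondition $\pv{x}(\I1) = \pv{x}(\I3) \land \pv{x}(\I2) = \pv{x}(\I3)$ entails $\pv{x}(\I1) = \pv{x}(\I2)$ via \ref{rule:wp-cons}. Finally \ref{rule:wp-proj} with $I = \set{\I3}$ eliminates the auxiliary: its projectability side condition reduces, under $\proj(t)$, to $\proj(t\p;t)$, which holds because a terminating $t$-run from any store yields a state from which $t$ terminates again. The outer $\P\I3$ on the precondition is discharged by \ref{rule:proj-store} (placing a store at $\I3$ agreeing with $\I1$ and $\I2$ on $\pv{x}$), and the $\PP\I3$ on the postcondition collapses via \ref{rule:proj-irrel} since $\pv{x}(\I1) = \pv{x}(\I2)$ does not mention $\I3$.

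For the separation, take
\[
  t \is \code{if x=0 then (if $*$ then x:=1 else x:=2) else while 1 do skip}.
\]
From $\p{x} = 0$, $t$ has two distinct terminating outputs ($\p{x} \in \set{1,2}$), so $(\textup{\textsc{Det}}_t)$ fails. However, $t\p;t$ diverges from every initial store: if $\p{x} = 0$, the first run leaves $\p{x} \ne 0$ and the second run then loops forever; if $\p{x} \ne 0$, the first run already loops. Thus $\m[\I1: t, \I2: (t\p;t)]$ has no joint terminating execution from any hyper-store, and the WP in \eqref{spec:t-idemp-seq} is vacuously valid.

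The main subtlety is the derivation of the $(\I2,\I3)$ instance: since no single bijection can move the $t$-role to $\I2$ while keeping the $t\p;t$-role at $\I3$, the use of \ref{rule:wp-idx-swap} on a non-bijective reindexing is essential. The projection step is otherwise routine given $\proj(t)$, and the counterexample confirms that projectability is exactly the hypothesis the derivation requires, since without it \eqref{spec:t-idemp-seq} becomes vacuous precisely when $t\p;t$ cannot terminate.
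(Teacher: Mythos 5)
Your overall architecture is exactly the paper's: duplicate \eqref{spec:t-idemp-seq} so that two copies of the triple share the single $t\p;t$ component, merge them with \ref{rule:wp-conj}, weaken the postcondition to $\pv{x}(\I1)=\pv{x}(\I2)$, and project out the shared $t\p;t$ component with \ref{rule:wp-proj}, discharging projectability of $t\p;t$ from $\proj(t)$. (The paper keeps $t\p;t$ at $\I2$ and adds a fresh $t$ at $\I3$; you relocate $t\p;t$ to a fresh $\I3$ --- an immaterial difference.) Your counterexample is likewise a cosmetic variant of the paper's: both make $t\p;t$ diverge from every store so that \eqref{spec:t-idemp-seq} holds vacuously while $t$ is non-deterministic, and it is correct.

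The genuine problem is the step you yourself single out as the ``main subtlety.'' The claim that the $(\I2,\I3)$ instance ``cannot be obtained from any bijective reindexing'' is false: the relabelling old~$\I1\mapsto$ new~$\I2$, old~$\I2\mapsto$ new~$\I3$ is injective on the support and extends to a permutation of $\Idx$, so two applications of \ref{rule:wp-rename} with the transpositions $\I2\leftrightarrow\I3$ and then $\I1\leftrightarrow\I2$ deliver $\V \pv{x}(\I2)=\pv{x}(\I3) |- \WP{\m[\I2: t, \I3: (t\p;t)]}{\pv{x}(\I2)=\pv{x}(\I3)}$ directly (indeed the first transposition already yields your $(\I1,\I3)$ instance). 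Worse, the non-bijective route you propose instead does not go through: to push $\isub{\I1->\I2}$ into $\WP{\m[\I1: t, \I2: (t\p;t)]}{Q}$ via \ref{rule:wp-idx-swap} you must match the conclusion against $\WP{(\m[\I2: t]\m.\m{t}')}{\cdot}$ with $\m{t}'=\m[\I2: (t\p;t)]$, which is not a disjoint union, and the side condition $i\notin\idx(Q)$ fails anyway since $\I2\in\idx(\pv{x}(\I1)=\pv{x}(\I2))$. So the step as described would be rejected; the fix is simply to drop it and use the bijective renaming, after which the rest of your derivation (whose \ref{rule:wp-conj} side conditions and \ref{rule:wp-proj} premises you have checked correctly) goes through as in the paper.
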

\begin{proof}
  For the first statement, we can build the following derivation:
  \begin{derivation}
    \infer*[right={\ref{rule:wp-proj}}]{
      \infer*[Right={\ref{rule:wp-conj}}]{
        \infer*{}{
          \V \pv{x}(\I1)=\pv{x}(\I2)
          |- \WP{\m[\I 1: t, \I 2: (t\p;t)]}
                { \pv{x}(\I1)=\pv{x}(\I2) }
        }
        \and
        \infer*[Right={\ref{rule:wp-rename}}]{
          \V \pv{x}(\I1)=\pv{x}(\I2)
          |- \WP{\m[\I 1: t, \I 2: (t\p;t)]}
                { \pv{x}(\I1)=\pv{x}(\I2) }
        }{
          \V \pv{x}(\I3)=\pv{x}(\I2)
          |- \WP{\m[\I 3: t, \I 2: (t\p;t)]}
                { \pv{x}(\I3)=\pv{x}(\I2) }
        }
      }{
        \V \pv{x}(\I1)=\pv{x}(\I2)=\pv{x}(\I3)
        |- \WP{\m[\I 1: t, \I 2: (t\p;t), \I 3: t]}
              { \pv{x}(\I1)=\pv{x}(\I2)=\pv{x}(\I3) }
      }
    }{
        \V \pv{x}(\I1)=\pv{x}(\I3)
        |- \WP{\m[\I 1: t, \I 3: t]}
              { \pv{x}(\I1)=\pv{x}(\I3) }
    }
  \end{derivation}
  The derivation is correct, with one important caveat:
  $(t\p;t)$ has to satisfy the projectability
  side condition of the \ref{rule:wp-proj} rule, which follows from $\proj(t)$.
  In fact, because our hyper-triples do not predicate
  over non-terminating traces,
  if~$t$ is not projectable, then the inference would be unsound.
  Take for example $t \is (\code{if x<0 then x=abs($*$) else diverge})$,
  where \p{abs} returns the absolute value of its argument.
  Since $t\p;t$ always diverges, $t$ trivially satisfies the idempotence hyper-triple, but is not deterministic on stores with $\p{x}<0$.
  This proves the second statement.
\end{proof}

\begin{lemma}
  $\exists t\st \eqref{spec:t-idemp} \notimplies \eqref{spec:t-idemp-seq}$
  and
  $\exists t\st \eqref{spec:t-idemp} \notimplies \eqref{spec:t-idemp3}$.
\end{lemma}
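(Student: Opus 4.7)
The plan is to exhibit a single witness program that refutes both implications, reusing the example $t \is \code{if x=0 then (if $*$ then x:=1 else x:=2)}$ that was already informally introduced in \cref{sec:overview:reindex} to distinguish \eqref{spec:t-idemp} from \eqref{spec:t-idemp-seq}. Here we only need to confirm that the same $t$ also refutes \eqref{spec:t-idemp3}, take $\pv{x} \is \p{x}$.

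First I would verify that $t$ satisfies \eqref{spec:t-idemp}. Given any initial hyper-store with $\p{x}(\I2) = v$, any terminating run of $t$ in component $\I2$ yields an output $\p{x}(\I2) \in \set{v, 1, 2}$, with $\p{x}(\I2) = v$ holding whenever $v \ne 0$ (the conditional is not entered) or when $v = 0$ and $v \in \set{1, 2}$ (impossible). A straightforward case analysis on $v$ shows that whenever the postcondition's antecedent $\p{x}(\I1) = v$ is satisfied, the consequent $\p{x}(\I2) = v$ must also hold: if $v \ne 0$, then $t$ preserves $\p{x}$ at $\I2$; if $v = 0$, then the output of $t$ at $\I1$ is always in $\set{1,2}$, so the antecedent $\p{x}(\I1) = 0$ is false and the implication is vacuous.

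Next I would show $t$ violates \eqref{spec:t-idemp-seq}: take the initial hyper-store with $\p{x}(\I1) = \p{x}(\I2) = 0$. Then some terminating run of $\m[\I1: t, \I2: (t\p;t)]$ produces $\p{x}(\I1) = 1$ (via the nondeterministic branch at $\I1$) and $\p{x}(\I2) = 2$ (via the first branch of the sequenced execution at $\I2$; the second call to $t$ is a no-op because $\p{x} \ne 0$ by then). The precondition holds but the postcondition fails.

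For \eqref{spec:t-idemp3} I would use the initial hyper-store with $\p{x}(\I1) = \p{x}(\I2) = 0$ and $\p{x}(\I3) = 1$, and $\vec{v} \is 1$. The precondition $\pv{x}(\I1) = \pv{x}(\I2) \land \pv{x}(\I3) = \vec{v}$ is satisfied. Consider a terminating run of $\m[\I1: t, \I2: t, \I3: t]$ in which $\I1$ chooses the branch $\code{x:=2}$, $\I2$ chooses $\code{x:=1}$, and $\I3$ leaves $\p{x}=1$ untouched. Then $\p{x}(\I2) = 1 = \vec{v}$, so the antecedent of the postcondition's implication holds, but $\p{x}(\I1) = 2 \ne 1 = \p{x}(\I3)$, so the consequent fails. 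The only delicate part of the argument is checking that all three executions above are indeed derivable in the big-step semantics of \cref{sec:program-semantics}; this reduces to picking the corresponding witnesses for the nondeterministic choice $*$. No further machinery is required, so the main obstacle is purely bookkeeping, and the entire argument fits in a handful of lines.
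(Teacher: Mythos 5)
Your proposal is correct and takes essentially the same route as the paper: the paper's proof uses exactly the same witness $t=\bigl(\code{if x=0 then (if $\;*\;$ then x:=1 else x:=2)}\bigr)$ and the same (much terser) observation that the programs at~\I1 and~\I2 may store different values in~\p{x}, whereas you additionally spell out explicit counterexample hyper-stores and runs. One cosmetic nit: in the $v=0$ case of your verification of \eqref{spec:t-idemp}, the output of $t$ at~\I1 lies in $\set{1,2}$ only when its (unconstrained) input store has $\p{x}=0$; the claim you actually need is that $t$ never outputs $\p{x}=0$, which still renders the antecedent $\p{x}(\I1)=0$ vacuously false.
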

\begin{proof}
  Take $t=\bigl(\code{if x=0 then (if $\;*\;$ then x:=1 else x:=2)}\bigr)$.
  The result of a run from a store with $\p{x}=0$ would output a store with
  a random value~$\p{x} \in \set{1,2}$;
  executing the same program again would preserve the store,
  therefore the program satisfies~\eqref{spec:t-idemp}.
  It does not however satisfy~\eqref{spec:t-idemp-seq},
  nor $\eqref{spec:t-idemp3}$,
  as the programs at~\I{1} and~\I{2}
  may store different values in \p{x}.
\end{proof}

\begin{lemma}
  $\exists t\st \eqref{spec:t-idemp-seq} \notimplies \eqref{spec:t-idemp}$
  and
  $\exists t\st \eqref{spec:t-idemp-seq} \notimplies \eqref{spec:t-idemp3}$.
\end{lemma}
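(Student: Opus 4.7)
The plan is to exhibit a single program $t$ that refutes both implications at once. The key observation is that, unlike \eqref{spec:t-idemp} and \eqref{spec:t-idemp3}, the hyper-triple \eqref{spec:t-idemp-seq} places no constraint whatsoever on initial stores from which $t\p;t$ fails to terminate: by \cref{def:hyper-bigstep}, the hyper big-step of $\m[\I 1: t, \I 2: (t\p;t)]$ requires both components to terminate, so a universally diverging $t\p;t$ makes the judgment vacuously valid. I would therefore look for a $t$ such that $t\p;t$ diverges on every input, while $t$ itself can still terminate with a range of $\pv{x}$-outputs.

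The concrete candidate uses two program variables with $\pv{x} = \p{x}$:
\begin{equation*}
  t \;\is\; \code{if$\;$y=0$\;$then$\;$(y:=1;x:=$*$)$\;$else$\;$(while$\;$1$\;$do skip)}.
\end{equation*}
First I would check that $t\p;t$ diverges from every initial store: if $y_0 \neq 0$ the first $t$ immediately takes the diverging branch, and if $y_0 = 0$ the first $t$ sets \p{y} to $1$, after which the second $t$ takes the else branch and diverges. Hence no big-step derivation for the \I 2 component exists, so \eqref{spec:t-idemp-seq} holds vacuously.

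Next, I would falsify \eqref{spec:t-idemp} by taking a hyperstore with $s_1(\p{x})=0$, $s_1(\p{y})=0$, $s_2(\p{x})=7$, $s_2(\p{y})=0$ (so $\vec{v}=7$), and scheduling the two $*$-expressions so that the one at \I 1 evaluates to $7$ and the one at \I 2 to $8$. The precondition $\pv{x}(\I 2)=\vec{v}$ holds, yet the resulting output hyperstore has $\pv{x}(\I 1)=7=\vec{v}$ while $\pv{x}(\I 2)=8\neq\vec{v}$, refuting the postcondition. For \eqref{spec:t-idemp3} I would take $s_1=s_2=s_3$ all with $\p{x}=\p{y}=0$ (so $\vec{v}=0$), and pick the three $*$-expressions to evaluate to $5$, $0$, $7$ at \I 1, \I 2, \I 3 respectively; then $\pv{x}(\I 2)=0=\vec{v}$ but $\pv{x}(\I 1)=5\neq 7=\pv{x}(\I 3)$, again refuting the postcondition.

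The main subtlety to keep in mind is why no simpler, always-terminating counterexample exists: if both $t$ and $t\p;t$ always terminated, then \eqref{spec:t-idemp-seq} applied to arbitrary pairs $s_1,s_2$ with $s_1(\pv{x})=s_2(\pv{x})$ would force the $\pv{x}$-output of $t$ to depend only on the input $\pv{x}$-value and to be idempotent in that value, from which \eqref{spec:t-idemp} follows. This is what forces us to exploit divergence, and what motivates the auxiliary flag variable \p{y} in the witness above.
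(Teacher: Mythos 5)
Your witness is correct, but it takes a genuinely different route from the paper's. You make \eqref{spec:t-idemp-seq} hold \emph{vacuously}, by arranging for $t\p;t$ to have no terminating big-step run from any store, and you then refute \eqref{spec:t-idemp} and \eqref{spec:t-idemp3} via the nondeterministic assignment in the terminating branch of a single run of $t$. The paper instead uses an always-terminating, deterministic witness, $t = (\code{z:=z+y;y:=0})$ with $\pv{x}=\p{z}$, and argues that \eqref{spec:t-idemp-seq} is insensitive to the part of the state (here \p{y}) lying outside $\pv{x}$, whereas the indirect-style triples are not. The comparison is instructive because your closing observation---that if $t$ (hence also $t\p;t$) is projectable from every store, then \eqref{spec:t-idemp-seq} forces the $\pv{x}$-output of $t$ to be a function $H$ of the $\pv{x}$-input alone with $H\circ H=H$, from which both \eqref{spec:t-idemp} and \eqref{spec:t-idemp3} follow---is correct, and it rules out \emph{any} everywhere-projectable witness. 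Indeed, the paper's own witness does not actually satisfy \eqref{spec:t-idemp-seq} as stated: taking inputs that agree on $\p{z}$ but differ on $\p{y}$ (say $\p{z}=0$ with $\p{y}=1$ at $\I1$ and $\p{y}=2$ at $\I2$), the single run at $\I1$ outputs $\p{z}=1$ while $t\p;t$ at $\I2$ outputs $\p{z}=2$, violating the postcondition. So your divergence-based construction is not merely an alternative; it is the kind of witness the statement actually requires, and your side remark explains \emph{why} divergence (or at least non-projectability) must be exploited.
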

\begin{proof}
  The hyper-triple \eqref{spec:t-idemp-seq} is weaker than the other two
  in that it does not force~$\pv{x}$ to capture all the relevant input
  of the second run of~$t$.
  Take~$t = (\code{z:=z+y;y:=0})$ and~$\pv{x}=\p{z}$.
  Then~$t$ satisfies \eqref{spec:t-idemp-seq}:
  the second run of~$t$ always sees~$\p{y}=0$.
  This~$t$ however does not satisfy \eqref{spec:t-idemp},
  nor \eqref{spec:t-idemp3}.
  This is because the implication in the postcondition will only make sure
  that the value of \p{z} output at component~\I1 (resp.~\I2) is the same that
  was given as input to~\I2 (resp.~\I3); the value of \p{y} can however differ.
  To see this concretely,
  take stores $s_1$ with $\p{z}=5,\p{y}=1$ and $s_2$ with $\p{z}=6,\p{y}=1$.
  The variable \p{z} after running $t$ on $s_1$ as the same value as in $s_2$,
  but \p{y} would be zero.
  Running $t$ on $s_2$ would give $\p{z}=7$.

  Note that the choice of~$\pv{x}$ is application dependent:
  the notion of idempotence that is appropriate for an effectful~$t$ may
  require to ignore some portions of the state; for example~$t$ may log in \p{y}
  the number of times it is called, but acting in an idempotent way on the rest of the relevant state.
\end{proof}

\end{document}